\newif\ifsingle
\newif\ifFullVersion
\newcommand{\myVec}[1]{{\boldsymbol{#1}}}
\newcommand{\mySet}[1]{\mathcal{#1}}
\newtheorem{definition}{Definition}
\newtheorem{theorem}{Theorem}
\newtheorem{corollary}{Corollary}
\newtheorem{proposition}{Proposition}
\newtheorem{lemma}{Lemma}
\newcommand{\figWidth}{0.65\columnwidth}
\newcommand{\figheight}{3in}
\newcommand{\figWidth}{0.8\columnwidth}
\newcommand{\figheight}{2in}
\acrodef{adc}[ADC]{analog-to-digital convertor}
\acrodef{cs}[CS]{compressed sensing}
\acrodef{dtft}[DTFT]{discrete-time Fourier transform}
\acrodef{dnn}[DNN]{deep neural network}
\acrodef{csi}[CSI]{channel state information}
\acrodef{map}[MAP]{maximum a-posteriori probability}
\acrodef{snr}[SNR]{signal-to-noise ratio}
\acrodef{bs}[BS]{base station}
\acrodef{iot}[IOT]{Interent of Things}
\acrodef{mimo}[MIMO]{multiple-input multiple-output}
\acrodef{mse}[MSE]{mean-squared error}
\acrodef{mmse}[MMSE]{minimum \ac{mse}}
\acrodef{pdf}[PDF]{probability density function}
\acrodef{rv}[RV]{random variable}
\acrodef{fec}[FEC]{forward error correction}
\acrodef{dma}[DMA]{dynamic metasurface antenna}
\acrodef{lti}[LTI]{linear time-invariant}
\acrodef{wss}[WSS]{wide-sense stationary}
\acrodef{psd}[PSD]{power spectral density}
\acrodef{ser}[SER]{symbol error rate}
\acrodef{ber}[BER]{bit error rate}
\acrodef{sgd}[SGD]{stochastic gradient descent}
\acrodef{isi}[ISI]{intersymbol interference}
\acrodef{awgn}[AWGN]{additive white Gaussian noise}
\acrodef{ut}[UT]{user terminal}
\acrodef{mmw}[mmWave]{millimeter wave}
\title{Task-Based Graph Signal Compression}
\author{
	\IEEEauthorblockN{Pei Li, Nir Shlezinger, Haiyang Zhang, Baoyun Wang, and Yonina C. Eldar\\
	}
	\thanks{
		Part of this work has been presented in the IEEE International Conference on Acoustics, Speech, and Signal Processing (ICASSP), 2021 \cite{li2021graph}.	
		This work was supported in part by the National Natural Science Foundation of China under grant No 61971238, by  the European Research Council (ERC) under the European Union's Horizon 2020 research and innovation programme, and by the Israel Science Foundation under grant No. 0100101.
		P. Li and B. Wang are with the School of Communication and Information Engineering, Nanjing University of Posts and Telecommunications, Nanjing, China (e-mail: \{2017010211; bywang\}@njupt.edu.cn)
		N. Shlezinger is with the School of ECE, Ben-Gurion University of the Negev, Beer-Sheva, Israel (e-mail: nirshl@bgu.ac.il)
		H. Zhang and Y. C. Eldar are with the Faculty of Math and CS, Weizmann Institute of Science, Rehovot, Israel (e-mail: \{haiyang.zhang; yonina.eldar\}@weizmann.ac.il).
		}

	\vspace{-1.0cm}
	
}
\begin{document}
	
	\maketitle
	\pagestyle{plain}
	\thispagestyle{plain}
	\begin{abstract}
	Graph signals arise in various applications, ranging from sensor networks to social media data. The high-dimensional nature of these signals implies that they often need to be compressed in order to be stored and transmitted. The common framework for graph signal compression is based on sampling, resulting in a set of continuous-amplitude samples, which in turn have to be quantized into a finite bit representation. In this work we study the joint design of graph signal sampling along with quantization, for graph signal compression. We focus on bandlimited graph signals, and show that the compression problem can be represented as a task-based quantization setup, in which the task is to recover the spectrum of the signal. Based on this equivalence, we propose a joint design of the sampling and recovery mechanisms for a fixed quantization mapping, and present an iterative algorithm for dividing the available bit budget among the discretized samples. Furthermore, we show how the proposed approach can be realized using graph filters combining elements corresponding the neighbouring nodes of the graph, thus facilitating distributed implementation at reduced complexity. Our numerical evaluations on both synthetic and real world data shows that the joint sampling and quantization method yields a compact finite bit representation of high-dimensional graph signals, which allows reconstruction of the original signal with accuracy within a small gap of that achievable with infinite resolution quantizers.

	\end{abstract}
	
	\vspace{-0.2cm}
	\section{Introduction}
	\vspace{-0.1cm}

Rapid development of information and communication technology, 
has lead to a growing need to process and store high-dimensional signals \cite{ref1}. In many families of signals, such as those representing communication networks, social media data, and sensors deployment, the interactions between the elements of the signal obey some graphical structure.  Graph signal processing (GSP) has emerged as a promising technique to deal with such complex signals~\cite{ref2, ref3}.

The high-dimensional nature of graph signals gives rise to the need to compressing them \cite{GSPsurvey}. A leading strategy to compress graph signals is based on sampling their elements  \cite{SamplingTheory, tanaka2020generalized,tanaka2020sampling}. Sampling techniques typically build upon the frequency analysis of graph signals, which is a fundamental tool in GSP, utilized also for  processing such as denoising \cite{FrequencyAnalysis, GSPdenoising} and interpolation \cite{heimowitz2018markov}. Generally speaking, the graph signal values associated with the two end vertices of edges with large weights in the graph tend to be similar\cite{GSPsmooth}. This property, often observed in practice, leads to spectral sparsity, and the resulting signals are referred to as {\em bandlimited} \cite{GSPsparse}. 
Basic graph sampling theory focuses on bandlimited graph signals and relates the spectral support to the number of samples required for representing the signal such that it can be reconstructed from noiseless samples \cite{SamplingTheory}. Nonetheless, in the presence of noise, it is often challenging to determine how
to select which nodes to sample. In general,  sampling set selection is an NP-hard issue, which is often  tackled using greedy approaches \cite{SamplingGreedy}.
Recently, sampling theorems for non-bandlimited graph signals, exploiting sparsity in domains other than the spectral domain, were proposed in \cite{tanaka2020generalized,tanaka2020sampling}. 

A key characteristic of graph signal compression by sampling stems from the fact that it represents the signal by a set of continuous-amplitude samples, where the number of samples is treated as the compression dimension. However, when storing and transmitting graph signals, the level of compression is measured in the number of bits, rather than continuous-amplitude samples, required for representing the signal. This implies that graph signal compression should involve not only sampling, but also {\em quantization} \cite{gray1998quantization}.
To date, existing works on quantization in GSP, e.g., \cite{QuantizationFilter, QuantizationDistributed, QuantizationTimeVarying, QuantizationInterpolation,QuantizationSampling}, focus on applying GSP methods to graph signals with quantized elements \cite{QuantizationFilter, QuantizationDistributed, QuantizationTimeVarying, QuantizationInterpolation} or quantizing sampled graph signals \cite{QuantizationSampling}, rather than designing joint sampling and quantization methods for compressing such signals. This motivates the design of graph signal compression schemes which combine both sampling as well as quantization designed in a joint manner as papers on ADC compression.

In this work, we propose a  compression method for bandlimited graph signals which maps a high-dimensional  signal into a finite-bit representation via sampling and quantization. Our compression method is inspired by the recently proposed task-based quantization framework \cite{HardwareLimited,shlezinger2018asymptotic,Salamtian19task, neuhaus2020taskbased, shlezinger2020taskbased,  shlezinger2019deep}. Task-based quantization studies the acquisition of multivariate continuous-amplitude signals in order to recover some underlying information vector, while operating under an overall bit budget. This is achieved by processing the signal using a dimensionality-reducing combining mapping, carried out in the analog domain, followed by scalar quantization and digital processing, all jointly designed to recover  the task vector \cite{shlezinger2020taskbased}. Our ability to treat graph signal compression as task-based quantization stems from the observation that graph signal sampling can be viewed as an analog combining operation, while the task in reconstructing bandlimited graph signals is the recovery of their spectrum.
%
Unlike the original task-based quantization formulation, which focused on the design of \acp{adc}, and thus considered identical quantization mappings applied to each sampled element, here we consider a compression problem rather than signal acquisition, and thus does not enforce 
this restriction.

Our joint sampling and quantization design employs a graph filter prior to the quantization operation. We consider two types of such graph filters. The first 
is an unconstrained graph filter, which is allowed to carry out arbitrary linear operations on the graph signals during the sampling procedure. For the unconstrained setup, we begin by studying the case in which the number of bits used for representing each sample is fixed, and derive the sampling operator that minimizes the recovery \ac{mse}. Then, we consider an overall bit budget, and optimize the number of bits assigned to each sample. We identify a sufficient condition for which the allocation is optimized by using identical quantizers, i.e., as in conventional task-based quantization~\cite{HardwareLimited}. 

The second considered 
filter 
corresponds to frequency domain graph filters \cite{tanaka2020sampling}. These filters are constrained to represent local computations, such that each node can be combined only with neighbouring nodes. Frequency domain graph filters notably facilitate distributed implementation, compared to unconstrained graph samplers which may require each sample to be produced by observing the entire graph \cite{gama2020graphs}.  For this constrained family, where the sampling operation is modeled by the selection of a subset of the outputs of a frequency domain graph filter, we first  derive the sampling set and the corresponding assigned number of bits. Then we propose an alternating optimization algorithm to design the filter along with the overall compression system.

Our simulation study  considers graph signal compression in three different applications: signals representing sensor networks; meteorological real-world data; and natural images. We consistently demonstrate that the proposed algorithm results in a distortion which is within a minor gap of that achievable without bit constraints, while yielding notably more accurate representations of the graph signal under a given bit budget
compared to separately designed sampling and quantization.

The rest of the paper is organized as follows:
Section~\ref{sec:Model} introduces the graph signal model, and formulates the compression problem.
In Section~\ref{sec:unconstrained}, the compression scheme is presented for generic graph filters, while frequency domain graph filters  are considered in Section~\ref{sec:constrained}.
Section \ref{sec:Sims} details numerical simulations.
Finally, Section~\ref{sec:Conclusions} provides concluding remarks.
Detailed proofs are delegated to the Appendix.

Throughout the paper,
we use lower-case (upper-case) bold characters to denote vectors (matrices).  The transpose, pseudo-inverse and trace of a matrix $\bf{A}$ are respectively denoted as ${\bf{A}}^T$, ${\bf{A}}^{\dag}$ and $\text{Tr}({\bf{A}})$, while $({\bf A})_{i,j}$ is the $(i,j)$th entry of ${\bf A}$, and ${\bf A}_{\mathcal{S}}$ represents a sub-matrix of ${\bf A}$ with rows indexed by  ${\mathcal{S}}$. For a vector $\bf{a}$, ${\bf{a}}_i$ is its $i$-th element, and $\text{diag}$($\bf{a}$) is a diagonal matrix with $\bf{a}$ on its main diagonal. For a scalar $a$, $\lceil a \rceil$ and  $\lfloor a \rfloor$ represent the
round up and round down for $a$, respectively. We use $\mathbb{R}$ for the set of real numbers,  $\mathbb{Z}$ for the integers, and $\boldsymbol{1}_{\mySet{A}}$ is the indicator function for event $\mySet{A}$.

\vspace{-0.2cm}
\section{System Model}
\label{sec:Model}
\vspace{-0.1cm}
 In this section we present the system model for which we derive the joint sampling and quantization  compression method. We first discuss the  graph signal model in Subsection~\ref{subsec:SignalModel}, followed by a brief review of graph sampling and quantization in Subsection~\ref{subsec:Qauntization}, and a formulation of the considered problem in Subsection \ref{subsec:Problem}.

 \vspace{-0.2cm}
\subsection{Bandlimited Graph Signal Model}
\label{subsec:SignalModel}
\vspace{-0.1cm}

We consider an undirected and weighted graph $\mathcal{G}=(\mathcal{V},\mathcal{E},\bf{W})$, in which $\mathcal{V}=\{v_1, v_2,...,v_N\}$ is the
set of nodes and $\mathcal{E}$ is the set of edges. Let ${\bf{W}} \in \mathbb{C}^{N \times N}$ be the adjacency matrix, such that its $(i,j)$th element ${\bf{W}}_{i,j}$ models the similarity/relationship between nodes $i$ and $j$. A graph signal is a function ${\bf f}: \mathcal{V} \to \mathbb{R}^N$ defined on the vertices of $\mathcal{G}$.  We adopt the symmetric normalized Laplacian matrix ${\bf{L}}={\bf{I}}-{\bf{D}}^{-1/2}{\bf{W}}{\bf{D}}^{-1/2}$ as the variation operator, where
${\bf{D}}=\text{diag}\{d_1,d_2,...,d_N\}$
is the degree matrix whose $i$th diagonal element is given by
$d_i=\sum_j {\bf{W}}_{i,j}$.
Since ${\bf{L}}$ is diagonalizable,  there exists an orthogonal matrix ${\bf{U}}$ and a diagonal matrix ${\bf{\Lambda}}$ satisfying ${\bf{L}}={\bf{U}}{\bf{\Lambda}}{\bf{U}}^T$. For a graph signal ${\bf{x}}\in \mathbb{R}^{N}$ defined over $\mathcal{G}$, its graph Fourier transform (GFT) is defined as $\hat{\bf{x}}={\bf{U}}^T{\bf{x}}$. The graph signal  is bandlimited if there exists an integer $K\leq N$ such that its GFT satisfies $\hat{\bf{x}}_k=0$ for all $k\ge K$. The smallest value of $K$ denotes the bandwidth of ${\bf{f}}$.
Bandlimited graph signals can be expressed as ${\bf U}_K{\bf c}$, where ${\bf U}_K$ is the first $K$ columns of $\bf{U}$, while ${\bf c} \in \mathbb{R}^{K}$ is the frequency representation. Here, we consider a noisy observation of a bandlimited graph signal,   given by
\begin{equation}\label{s1}
\begin{aligned}
 {\bf x}={\bf U}_K{\bf c}+{\bf w},
\end{aligned}
\end{equation}
where   ${\bf w}$ is an i.i.d. zero-mean  noise with variance $\sigma_0^2$.

As in the classical factor analysis model  \cite{GraphModel1},
we impose a Gaussian prior on the spectral representation $\bf c$. Specifically, we assume that  $\bf c$ follows a degenerate zero-mean multivariate Gaussian distribution such that ${\bf c} \sim \mathcal{N} (0, \hat {\bf{\Lambda}})$,
where $\hat {\bf{\Lambda}}= \text{diag} \{\sigma_1^2, \sigma_2^2,..., \sigma_K^2\}$, and $\sigma_i^2$ represents the variance for $i$-th spectral representation ${\bf c}_i$ for $i \in \{1,\cdots, K\}$.
Under this model, the   signal ${\bf x}$ obeys a zero-mean multivariate Gaussian distribution with covariance  matrix:
\begin{equation}\label{ex2}
{\bf C}_{\bf x}={\bf U}_K\hat {\bf{\Lambda}}{\bf U}_K^T+\sigma_0^2{\bf I} = {\bf U} \tilde {\bf \Lambda}  {\bf U}^T,
\end{equation}
where
\begin{equation}
\begin{aligned}
\tilde{\bf \Lambda}  = \text{diag}\left\{ {\sigma_1^2+\sigma_0^2}, {\sigma_2^2+\sigma_0^2},...,{\sigma_K^2+\sigma_0^2}, \sigma_0^2,...,\sigma_0^2  \right\}.
\end{aligned}
\end{equation}

The joint Gaussianity of $\bf c$ and $\bf x$ implies that the \ac{mmse}  estimate of the spectrum ${\bf c}$ from the noisy graph signal ${\bf x}$ is given by  $\tilde {\bf c} = {
\bf \Gamma}^*{\bf x}$, where
\begin{equation}\label{ex2p1}
\begin{aligned}
{\bf \Gamma}^*= \hat {\bf{\Lambda}} (\tilde{\bf \Lambda}^{-1})_K {\bf U}^T.
\end{aligned}
\end{equation}
Here, $(\tilde{\bf \Lambda}^{-1})_K$ represents the first $K $ rows of $\tilde{\bf \Lambda}^{-1}$. The resulting estimation error is given by
\begin{equation}\label{ex3p3}
\mathbb{E} \{ \| \tilde{\bf c}- {\bf c}  \|^2 \}= \text{Tr}\left(  \hat {\bf{\Lambda}}
- {\bf U}_K \tilde {\bf{\Lambda}}^2 {\bf U}_K^T{\bf C}_{\bf x}^{-1} \right).
\end{equation}

 \vspace{-0.2cm}
\subsection{Sampling and Qauntization of Graph Signals }
\label{subsec:Qauntization}
\vspace{-0.1cm}
In this work we combine sampling and quantization for compressing bandlimited graph signals. We thus next briefly recall some basics in graph signal sampling and quantization, starting with the  definition of vertex-domain sampling:

\begin{definition}[Vertex-domain sampling \cite{tanaka2020sampling}]
\label{def:Sampling}
A vector ${\bf y} \in \mathbb{R}^P$ is  the vertex-domain sampling of a graph signal ${\bf x} \in \mathbb{R}^N$, if there exists  ${\bf \Psi} \in \mathbb{R}^{P\times N}$ such that ${\bf y} = {\bf \Psi}{\bf x}$ and $P < N$.
\end{definition}
Vertex-domain sampling, abbreviated henceforth as sampling, often restricts
${\bf \Psi}$ to be a submatrix of the $N\times N$ identity matrix \cite{SamplingTheory}. This results in the elements of ${\bf y}$ being also elements of ${\bf x}$. Nonetheless, graph sampling operations  and their corresponding methods for reconstructing ${\bf x}$ from ${\bf y}$  are studied in the literature for various forms of  $ {\bf \Psi}$ \cite{tanaka2020sampling}. Sampling matrices can be generally written as ${\bf \Psi} = {\bf I}_{\mathcal{S}} {\bf F}$, where ${\bf I}_{\mathcal{S}} \in \mathbb{R}^{P \times N}$ is a row-selection matrix, ${\bf F} \in \mathcal{F}\subseteq \mathbb{R}^{N\times N}$ is a  graph filter, and $\mathcal{F}$ is the set of feasible graph filters.

We separately consider two forms of graph filters. The first type is referred to as unconstrained graph filters:
\begin{definition}[Unconstrained graph filter]
\label{def:UncSampling}
An unconstrained graph filter can be any $N\times N$ matrix, i.e., $\mathcal{F}= \mathbb{R}^{N\times N}$.
\end{definition}
For unconstrained graph filters, the sampling matrix ${\bf \Psi}$ can be any matrix in $\mathbb{R}^{P\times N}$. The application of sampling using unconstrained graph filters  requires the entire graph signal to be available for generating each sample, i.e., each element in the sampled $\bf y$ can be affected by every node in the graph. Such sampling procedures can be challenging to implement, particularly when dealing with high-dimensional graph signals. This motivates the restriction to graph sampling procedures which involve only local operations, where each sample is affected only by a subset of neighbouring nodes. Such sampling mechanisms  can be realized by constraining $\bf F$ to represent {\em frequency-domain filtering}, defined next:
\begin{definition}[Frequency-domain graph filter \cite{tanaka2020generalized}]
\label{def:filter}
The set of frequency-domain graph filters defined over a graph $\mathcal{G}$ with Laplacian ${\bf L} = {\bf U}{\bf \Lambda}{\bf U}^T$ is given by
\begin{equation}
    \label{eqn:FDGfilter}
    \mathcal{F} = \{{\bf F} = {\bf U}{F}({\bf \Lambda}) {\bf U}^T | {F}({\bf \Lambda}) \text{ is diagonal}\}.
\end{equation}
%
\end{definition}
Restricting the sampling matrix ${\bf \Psi}$ to  frequency-domain  filtering results in the sampled vector ${\bf y}$ taking the form
\begin{equation}
\label{eqn:FDsample}
{\bf y} = {\bf I}_{\mathcal{S}}  {\bf U}{F}({\bf \Lambda}) \hat{\bf x}.
\end{equation}
The representation of the graph sampling operation via \eqref{eqn:FDsample} notably facilitates distributed operation compared with sampling using unconstrained graph filters. This follows since  \eqref{eqn:FDsample} can be approximated by interpolation \cite{2011Wavelets}, resulting in the frequency-domain graph filter being expressed as a matrix polynomial function with respect to 
 $\bf L$. In particular, a frequency domain graph filter restricted to combining elements corresponding the neighbouring nodes of the graph  implies that ${{F}}({\bf \Lambda})$ should be a polynomial function with respect to ${\bf \Lambda}$, i.e., ${{F}}({\bf \Lambda}) = \sum_{i = 0}^{K_0} \beta_i {\bf \Lambda}^i$, for some coefficients $\{\beta_i\}$, where $K_0$ is the length of the polynomial \cite{gama2020graphs}.


Next, we discuss the  considered quantization rule. While in general, quantization encapsulates a broad range of continuous-to-discrete mappings \cite{gray1998quantization}, here we focus on conventional uniform scalar quantizaiton, defined as follows:
\begin{definition}[Uniform quantization]
\label{def:Uquant}
A uniform quantizer with resolution $M$, support $\gamma$, and step size $\delta = \frac{2\gamma}{M}$, is the mapping 
\begin{equation*}
	Q_M\left( x \right) \triangleq \begin{cases}
		\delta \left( \left\lfloor \frac{x}{\delta} \right\rfloor + \frac{1}{2} \right), & \mathrm{if~} \vert x \vert < \gamma,\\
		\mathrm{sign} \left( x \right) \left( \gamma - \frac{\delta}{2} \right), & \mathrm{else}.
	\end{cases}
\end{equation*}
\end{definition}
The non-linear nature of quantization mappings results in a complex model for the distortion induced in this procedure, i.e., the term $x - Q_M\left( x \right)$. In our analysis we model the quantization operation as {\em non-subtractive dithered quantization}, obtained by adding noise prior to uniform quantization \cite{gray1993dithered}. The main motivation for this model is that it results in the distortion  being uncorrelated with the input signal when the quantizer is not overloaded, i.e., $|x| \leq \gamma$. This resulting model, which notably facilitates the design and analysis of quantization systems, is also a good approximation of the distortion model for various  quantizer input distributions without dithering as analyzed in \cite{widrow1996statistical} as demonstrated in \cite{HardwareLimited}.

\vspace{-0.2cm}
\subsection{Problem Formulation}
\label{subsec:Problem}
\vspace{-0.1cm}
We consider the problem of compressing a noisy  graph signal ${\bf x}$ into a digital representation comprised of $\log_2 M$ bits. This compressed version should preserve the information required to reconstruct the bandlimited graph signal ${\bf U}_K{\bf c}$. We assume that the structure of the graph signal and its spectral support are known, i.e., ${\bf U}_K$ is given.

Since the volume of the representation is limited in its overall number of bits, we study compression mechanisms consisting of both sampling and quantization, 
as defined in Subsection~\ref{subsec:Qauntization}. In such a system, the graph signal is first sampled by applying the $P\times N$ matrix ${\bf \Psi}$, with $P< N$, and the entries of the sampled  ${\bf y} = {\bf \Psi} {\bf x}$ are then discretized by the uniform quantizers $\{Q_{M_i}(\cdot)\}_{i=1}^{P}$, resulting in the finite-bit representation $Q({\bf  y}) = [Q_{M_1}(y_1),\ldots, Q_{M_P}(y_P)]^T$. The overall bit constraint implies that $\sum_{i=1}^{P} \log_2 M_i \leq \log_2 M$.

As quantization inherently results in lossy compression \cite[Ch. 23]{polyanskiy2014lecture}, we do not aim to perfectly recover  ${\bf x}$, and use the \ac{mse} as our design objective.
Consequently, the joint design of the sampling and quantization mappings can be formulated as recovering the digital representation from which the bandlimited portion of ${\bf x}$, i.e., ${\bf U}_K {\bf c}$, can be reconstructed most accurately. This is mathematically formulated as \begin{equation}\label{problem1}
\tag{P1}
\begin{aligned}
\mathop {\min }\limits_{{\bf \Psi}   ,Q\left(  \bullet  \right)} & \mathbb{E}\left\{ {{{\left\| {\bf U}_K {\bf c} -  \mathbb{E}\left\{ {\bf U}_K {\bf c}| Q({{\bf \Psi} {\bf{x}}})\right\}  \right\|}^2}} \right\},\\
\text{s.t.}\ &\sum_{i=1}^P \log_2 M_i \le \log_2 M, \quad M_i \in \mathbb{Z}^+, i \in \mySet{P},
\end{aligned}
\end{equation}
where $\mySet{P}\triangleq  \{1,2,\ldots,P\}$.
In the following sections we jointly design the resulting compression system based on  \eqref{problem1}.

	\vspace{-0.2cm}
	\section{Joint Sampling and Qauntization Method with Unconstrained Graph Filters}
	\label{sec:unconstrained}
	\vspace{-0.1cm}
	In this section we derive joint sampling and quantization methods for graph signal compression with unconstrained graph filters. To that aim, we first present in Subsection~\ref{subsec:Alternative} an alternative problem formulation, obtained by introducing some relaxations to  \eqref{problem1}. 
	Then, we present the MSE minimizing system design in Subsection~\ref{subsec:MSE minimizng}, and provide a greedy-based algorithm to tackle it in Subsection~\ref{subsec:Algorithm}.

\vspace{-0.2cm}
\subsection{Alternative Optimization Problem}
\label{subsec:Alternative}
\vspace{-0.1cm}
The problem formulation \eqref{problem1} characterizes the graph signal compression setup as designing the sampling matrix $\bf \Psi$ and the scalar quantizers $Q(\bullet)$.
The resulting setup bears much similarity to task-based quantization, proposed as a framework for designing \acp{adc} to extract information from an acquired analog signal \cite{HardwareLimited,shlezinger2018asymptotic,Salamtian19task,neuhaus2020taskbased}. To exploit task-based quantization in graph signal compression, we formulate an alternative problem based on \eqref{problem1}, which can be tackled using these techniques.

First, we note that since the unitary matrix ${\bf U}$ and its submatrix ${\bf U}_K$ are known, 
recovering  the \ac{mmse} estimate of ${\bf x}$, as formulated in \eqref{problem1}, is equivalent to the corresponding estimation of ${\bf c}$. We thus henceforth refer to ${\bf c}$ as the {\em task vector}.  Furthermore, recalling the definition of $\tilde{\bf c} = \mathbb{E}\{{\bf c} | {\bf x}\} = {
\bf \Gamma}^*{\bf x}$, then by the orthogonality principle, designing $Q(\bullet)$ and ${\bf \Psi}$ to minimize the \ac{mse} w.r.t.  ${\bf c}$ is the same as designing them to minimize  the \ac{mse} w.r.t.  $\tilde{\bf c}$. The objective in  \eqref{problem1} thus becomes
\vspace{-0.1cm}
\begin{equation}
    \label{eqn:Objective2}
    \mathop {\min }\limits_{{\bf \Psi}   ,Q\left(  \bullet  \right)}  \mathbb{E}\left\{ {{{\left\| \tilde{\bf{c}} -  \mathbb{E}\left\{\tilde{\bf{c}} | Q({{\bf \Psi} {\bf{x}}})\right\}  \right\|}^2}} \right\}.
    \vspace{-0.1cm}
\end{equation}

Next, we relax  \eqref{eqn:Objective2} by focusing on linear recovery. Our motivation for considering linear schemes stems from their analytical tractability, and since they are commonly used for reconstructing sampled graph signals \cite{tanaka2020sampling}.  The compression system is designed such that the desired  vector ${\bf c}$ can be recovered from $Q({{\bf \Psi} {\bf{x}}})$ using a filter ${\bf \Phi}\in \mathbb{R}^{K\times P}$, i.e., the recovered ${\bf c}$ is given by $\hat{\bf c} = {\bf \Phi}Q({\bf y})$, where ${\bf y} = {{\bf \Psi} {\bf{x}}}$. This compression and recovery system is illustrated in Fig.~\ref{fig1}.

\begin{figure*}
\centering
\includegraphics[width=0.8\linewidth]{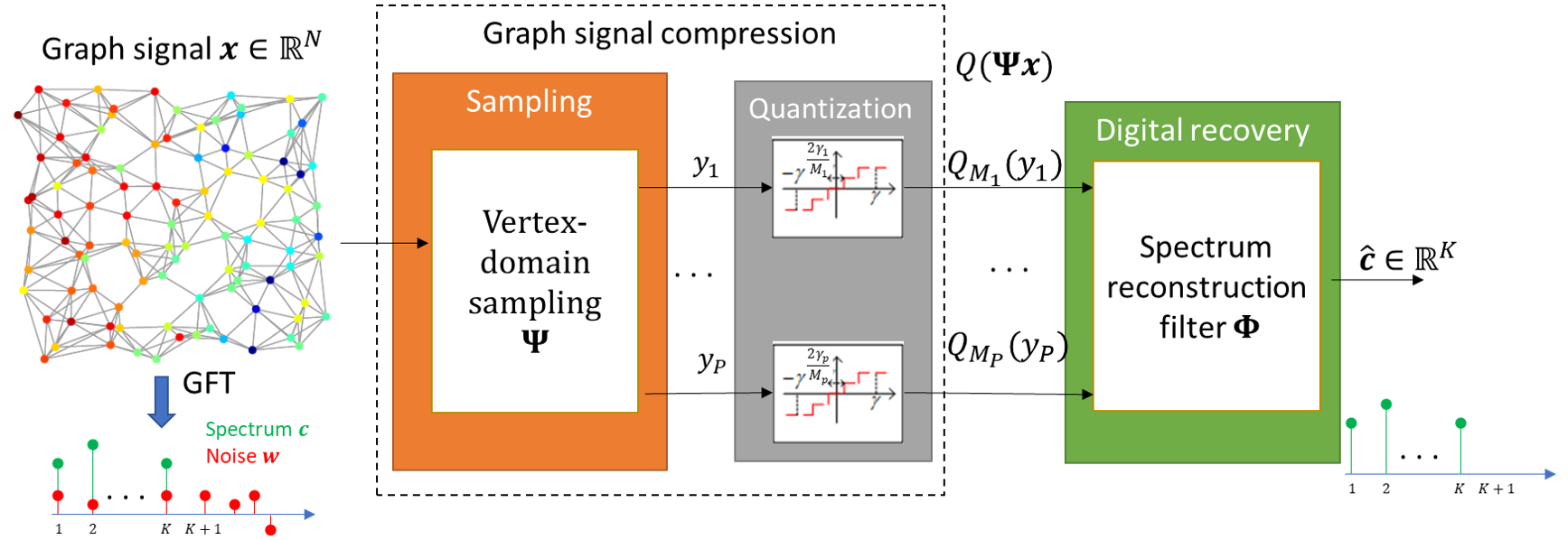}
\vspace{-0.2cm}
\caption{Graph signal compression by sampling and quantization.}\label{fig1}
\vspace{-0.4cm}
\end{figure*}

Quantizers are typically designed to operate within their dynamic range \cite{gray1998quantization}. Therefore, following \cite{HardwareLimited}, we guarantee that the probability of overloading the quantizers is sufficiently small, i.e., that  $\text{Pr}(|({\bf \Psi}{\bf x})_i|>\gamma_i) \approx 0$, by fixing the support of the $i$th quantizer to $\gamma_i^2 = \eta^2  \mathbb{E}\left\{({{\bf \Psi} {\bf{x}}})_i^2\right\}$ for each $i\in\mySet{P}$. For instance, setting $\eta = 2$ results in overloading probability $<5\%$. When the overloading probability vanishes, then the output of the dithered quantizers can be written as \cite{gray1993dithered}
\vspace{-0.1cm}
\begin{equation}\label{cons1}
Q\left( {{\bf \Psi} {\bf{x}}} \right)={{\bf \Psi} {\bf{x}}} +{\bf e}_Q,
\vspace{-0.1cm}
\end{equation}
where the quantization error vector ${\bf e}_Q$ has i.i.d. zero-mean entries of variance  $\delta_i^2/6$ (with $\delta_i = 2\gamma_i / M_i)$, i.e.,
\vspace{-0.1cm}
\begin{equation}\label{cons2}
\begin{aligned}
{\bf G} \triangleq  \mathbb{E}\left\{{\bf e}_Q{\bf e}_Q^T  \right\}=\text{diag}\left\{\frac{2\gamma_1^2}{3 M_1^2},\frac{2\gamma_2^2}{3 M_2^2},\cdots,\frac{2\gamma_P^2}{3M_P^2}\right\}.
\end{aligned}\vspace{-0.1cm}
\end{equation}
While our setting of $\gamma_i^2$ achieves a small yet non-zero overloading probability,  we design the compression mechanism by utilizing the model in \eqref{cons1}, which rigorously holds when the overloading probability is zero, as proposed in \cite{HardwareLimited,shlezinger2018asymptotic}.

To summarize, the alternative optimization problem based on which we design our scheme is given by
\begin{equation}\label{problem2}
\tag{P2}
\begin{aligned}
\mathop {\min }\limits_{{\bf \Psi} ,{\bf \Phi} , \{M_i\}} & \mathbb{E}\left\{ {{{\left\| {{{\bf \Gamma}^*{\bf x}} - {\bf \Phi} ({{\bf \Psi} {\bf{x}}}+ {\bf e}_Q)} \right\|}^2}} \right\},\\
\text{s.t.}\ &\sum_{i=1}^P \log_2 M_i \le \log_2 M,\quad M_i \in \mathbb{Z}^+, i \in \mySet{P}.
\end{aligned}
\end{equation}
The optimization problem \eqref{problem2} can be treated as a task-based quantization problem. In task-based quantization, an analog filter is designed along with the overall acquisition system in light of the system task \cite{HardwareLimited}. Here, the sampling operation implements the pre-quantization combining, and the system task is to recover the vector ${\bf c}$ (via its \ac{mmse} estimate). However, as task-based quantization focuses on the design of \acp{adc} operating in a serial manner, it is restricted to utilizing identical quantizers, i.e., each quantizer uses the same number of bits, while our graph signal compression problem does not share this restriction. Therefore, our design based on \eqref{problem2}, given in the following subsection, combines task-based quantization methods with dedicated bit allocation techniques.

\vspace{-0.2cm}
\subsection{Compression System Design}
\label{subsec:MSE minimizng}
\vspace{-0.1cm}	
In this section, we design a joint graph signal sampling and quantization scheme based on the identified relationship between such setups and task-based quantization in   \eqref{problem2}. Directly solving \eqref{problem2} is difficult due to the coupling between its optimization variables, combined with the non-linear relationship between these parameters and the statistical model of the quantization error, observed in   \eqref{cons2}. Therefore, in the following, we  first design the sampling matrix based on \eqref{problem2} for a fixed bit allocation, i.e.,  when $\{M_i\}$ is given.


As a preliminary step, we note that for any given sampling matrix ${\bf \Psi}$ and bit allocation $\{M_i\}$, the MSE minimizing linear recovery matrix ${\bf \Phi}$ is given in the following lemma:
\begin{lemma}
\label{lem:Digital}
For any fixed ${\bf \Psi}$ and $\{M_i\}$, the objective in \eqref{problem2} is minimized by setting
\begin{equation}\label{lem13p1}
\begin{aligned}
{\bf \Phi}^*= {\bf \Gamma}^*{\bf C}_{\bf x} {\bf \Psi}^T\left(  {\bf \Psi} {\bf C}_{\bf x} {\bf \Psi} ^T + {\bf G} \right)^{-1}.
\end{aligned}
\end{equation}
\end{lemma}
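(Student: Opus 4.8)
The plan is to recognize that, once $\mathbf{\Psi}$ and $\{M_i\}$ are held fixed, the objective in \eqref{problem2} becomes a function of $\mathbf{\Phi}$ alone, and that minimizing it is precisely a linear minimum-MSE estimation problem: we seek the best linear estimator $\mathbf{\Phi}\mathbf{z}$ of the target $\tilde{\mathbf{c}} = \mathbf{\Gamma}^*\mathbf{x}$ from the observation $\mathbf{z} \triangleq \mathbf{\Psi}\mathbf{x} + \mathbf{e}_Q$. Since both $\tilde{\mathbf{c}}$ and $\mathbf{z}$ are zero-mean, the minimizer is given by the standard formula $\mathbf{\Phi}^* = \mathbb{E}\{\tilde{\mathbf{c}}\mathbf{z}^T\}(\mathbb{E}\{\mathbf{z}\mathbf{z}^T\})^{-1}$, and the bulk of the argument reduces to evaluating these two second-order statistics.

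First I would write the objective explicitly as $J(\mathbf{\Phi}) = \mathbb{E}\{\|\tilde{\mathbf{c}} - \mathbf{\Phi}\mathbf{z}\|^2\}$ and either expand it into a quadratic form in $\mathbf{\Phi}$ and set the matrix gradient $\nabla_{\mathbf{\Phi}} J = \mathbf{0}$, or invoke the orthogonality principle, which states that the optimal $\mathbf{\Phi}^*$ makes the estimation error orthogonal to the data, i.e., $\mathbb{E}\{(\tilde{\mathbf{c}} - \mathbf{\Phi}^*\mathbf{z})\mathbf{z}^T\} = \mathbf{0}$. Either route yields the normal equations $\mathbb{E}\{\tilde{\mathbf{c}}\mathbf{z}^T\} = \mathbf{\Phi}^*\,\mathbb{E}\{\mathbf{z}\mathbf{z}^T\}$, from which $\mathbf{\Phi}^*$ follows once $\mathbb{E}\{\mathbf{z}\mathbf{z}^T\}$ is shown to be invertible.

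The key computational step is to evaluate the two statistics using the dithered-quantization model. The only properties needed are that the quantization error is uncorrelated with the signal, $\mathbb{E}\{\mathbf{e}_Q\mathbf{x}^T\} = \mathbf{0}$, together with $\mathbb{E}\{\mathbf{x}\mathbf{x}^T\} = \mathbf{C}_\mathbf{x}$ and $\mathbb{E}\{\mathbf{e}_Q\mathbf{e}_Q^T\} = \mathbf{G}$ from \eqref{cons2}. Substituting $\mathbf{z} = \mathbf{\Psi}\mathbf{x} + \mathbf{e}_Q$ and cancelling the cross terms gives
\begin{align*}
\mathbb{E}\{\tilde{\mathbf{c}}\mathbf{z}^T\} &= \mathbf{\Gamma}^*\mathbf{C}_\mathbf{x}\mathbf{\Psi}^T, \\
\mathbb{E}\{\mathbf{z}\mathbf{z}^T\} &= \mathbf{\Psi}\mathbf{C}_\mathbf{x}\mathbf{\Psi}^T + \mathbf{G},
\end{align*}
and plugging these into the normal equations reproduces \eqref{lem13p1}. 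Invertibility of $\mathbf{\Psi}\mathbf{C}_\mathbf{x}\mathbf{\Psi}^T + \mathbf{G}$ is immediate, since $\mathbf{C}_\mathbf{x} \succ \mathbf{0}$ (it contains the $\sigma_0^2\mathbf{I}$ term) makes $\mathbf{\Psi}\mathbf{C}_\mathbf{x}\mathbf{\Psi}^T$ positive semidefinite, while $\mathbf{G}$ is diagonal with strictly positive entries, so the sum is positive definite.

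The calculation is routine; the one point deserving care — and the step I would treat as the crux — is justifying the decorrelation $\mathbb{E}\{\mathbf{e}_Q\mathbf{x}^T\} = \mathbf{0}$. Under the non-subtractive dithered model, $\mathbf{e}_Q$ depends on $\mathbf{x}$ only through the quantizer input $\mathbf{\Psi}\mathbf{x}$ and through the dither, which is drawn independently of $\mathbf{x}$; combining conditional independence given $\mathbf{\Psi}\mathbf{x}$ with the independence of $\mathbf{e}_Q$ from its input then yields $\mathbf{e}_Q$ uncorrelated with $\mathbf{x}$ and hence the required vanishing cross-covariance. This is exactly the modelling assumption underlying \eqref{cons1}--\eqref{cons2}, so invoking it closes the argument.
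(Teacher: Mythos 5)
Your proposal is correct and follows exactly the argument the paper relies on: the paper's proof is a one-line citation to \cite[App.~B]{HardwareLimited}, which carries out precisely this LMMSE/orthogonality-principle computation for the observation ${\bf z}={\bf \Psi}{\bf x}+{\bf e}_Q$ using the decorrelation of the dithered quantization error. Your filling-in of the cross-covariance, the covariance of ${\bf z}$, and the invertibility of ${\bf \Psi}{\bf C}_{\bf x}{\bf \Psi}^T+{\bf G}$ is accurate and complete.
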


\begin{IEEEproof}
The lemma is obtained following \cite[App. B]{HardwareLimited}.
\end{IEEEproof}

\smallskip
Next, we design the sampling matrix ${\bf \Psi}$. Here, we recall the similarity between our sampling matrix and analog combiner design in task-based quantization setups. In particular, the MSE minimizing analog combiner for task-based quantization is given in \cite[Thm. 1]{HardwareLimited} for an identical bit allocation $\tilde{M} = \lfloor M^{1/P}\rfloor $.
For such setups, the MSE is minimized for a setting of the form ${\bf \Psi} = {\bf U} {\bf \Xi} {\bf V}^T$, where ${\bf V}\in \mathbb{R}^{N\times N}$ is the right singular vectors matrix of ${\bf \Gamma}^*{\bf C}_{\bf x}^{1/2} $, ${\bf \Xi}\in \mathbb{R}^{P\times N}$ is diagonal with non-negative diagonal entries, and ${\bf U}\in \mathbb{R}^{P\times P}$ is a unitary matrix designed to balance the inputs to the identical quantizers.
Since in our setup, the quantizers are not restricted to be identical, we cannot adopt the results derived in \cite{HardwareLimited}. Nonetheless, for a given bit allocation $\{M_i\}$ sorted in a descending order, we can characterize the \ac{mse} minimizing sampling matrix, as stated in the following proposition:
\begin{proposition}
\label{lem:MSE}
For a bit allocation $\{M_i\}$ with $M_i\ge M_{i+1}$,  the MSE minimizing unconstrained sampling matrix satisfies
\vspace{-0.1cm}
\begin{equation}
{\bf \Psi}^*= {{\bf{U}}_\Psi }{\bf \Xi}_\Psi   { \tilde{\bf \Lambda}}^{-1/2}  {{\bf{ U}}} ^T,
\vspace{-0.1cm}
\end{equation}
where ${{\bf{U}}_\Psi }$ is a unitary matrix satisfying
\vspace{-0.1cm}
\begin{equation}
({{\bf{U}}_\Psi } {\bf \Xi}_\Psi{\bf \Xi}_\Psi^T{\bf{U}}_{\Psi}^T)_{i,i} = \frac{3M_i^2}{2\eta^2},
\vspace{-0.1cm}
\end{equation}
and ${\bf \Xi}_\Psi \in \mathbb{R}^{P\times N} $ is a diagonal matrix with non-negative entries. 
Denoting ${\alpha _i} = ({\bf \Xi}_\Psi)_{i,i}^2, i\in \mySet{P}$, the elements of ${\bf \Xi}_\Psi$ are obtained as the solution to the following problem:
\vspace{-0.1cm}
\begin{equation}\label{MSESam2}
\begin{aligned}
\{\alpha_i\}_{i=1}^{P} =&\mathop {\arg \min }\limits_{\{\alpha'_i\}}~  \sum_{i=1}^P \frac{{{\lambda _{{\bf{\Gamma }},i}^2}}}{{{\alpha'_i} + 1}},\\
s.t.~ &  {\frac{{3}}{2\eta^2}}[M_1^2,\ldots M_P^2]^T\prec  [a'_1, \ldots a'_P]^T,
\end{aligned}
\vspace{-0.1cm}
\end{equation}
where $\prec$ denotes majorization, i.e., $\myVec{a} \prec \myVec{b}$ implies that $\myVec{a}$ is majorized by $\myVec{b}$ \cite{palomar2007mimo}.  In \eqref{MSESam2}, ${\lambda _{{\bf{\Gamma }},i}}$ is the $i$th largest singular value of ${\bf \Gamma}^*{\bf C}_{\bf x}^{1/2} $.  The resulting excess \ac{mse} is given by:
\vspace{-0.1cm}
\begin{equation}\label{MSELambda}
\begin{aligned}
\mathbb{E} \{ \| \hat{\bf c}- \tilde{\bf c}  \|^2 \} = \sum_{i=1}^P \frac{{{\alpha _i}{\lambda _{{\bf{\Gamma }},i}^2}}}{{{\alpha _i} + 1}}.
\end{aligned}
\end{equation}
\end{proposition}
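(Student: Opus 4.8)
The plan is to reduce \eqref{problem2}, for a fixed bit allocation $\{M_i\}$, to an optimization over ${\bf \Psi}$ alone by substituting the optimal digital filter from Lemma~\ref{lem:Digital}. By the orthogonality principle, plugging \eqref{lem13p1} into the objective gives
\begin{equation}
\mathbb{E}\{\|\hat{\bf c}-\tilde{\bf c}\|^2\} = \text{Tr}\left({\bf \Gamma}^*{\bf C}_{\bf x}{\bf \Gamma}^{*T}\right) - \text{Tr}\left({\bf \Gamma}^*{\bf C}_{\bf x}{\bf \Psi}^T\left({\bf \Psi}{\bf C}_{\bf x}{\bf \Psi}^T+{\bf G}\right)^{-1}{\bf \Psi}{\bf C}_{\bf x}{\bf \Gamma}^{*T}\right),
\end{equation}
so minimizing the MSE amounts to maximizing the second (captured-energy) trace over ${\bf \Psi}$. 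I would whiten the problem by setting ${\bf A} = {\bf \Psi}{\bf C}_{\bf x}^{1/2}$ and ${\bf B} = {\bf \Gamma}^*{\bf C}_{\bf x}^{1/2}$, so this term becomes $\text{Tr}({\bf B}{\bf A}^T({\bf A}{\bf A}^T+{\bf G})^{-1}{\bf A}{\bf B}^T)$. The central difficulty is that ${\bf G}$ in \eqref{cons2} is itself a function of ${\bf \Psi}$, since $\gamma_i^2 = \eta^2({\bf \Psi}{\bf C}_{\bf x}{\bf \Psi}^T)_{i,i} = \eta^2({\bf A}{\bf A}^T)_{i,i}$; this coupling is what prevents a direct invocation of the standard task-based combiner result of \cite{HardwareLimited}.

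Next I would expose the singular structure of ${\bf B}$. Using ${\bf \Gamma}^* = \hat{\bf \Lambda}(\tilde{\bf \Lambda}^{-1})_K{\bf U}^T$ from \eqref{ex2p1} together with ${\bf C}_{\bf x}^{1/2} = {\bf U}\tilde{\bf \Lambda}^{1/2}{\bf U}^T$ from \eqref{ex2}, a direct computation yields ${\bf B} = {\bf D}_\Gamma{\bf U}^T$ with ${\bf D}_\Gamma$ a rectangular diagonal matrix whose $i$th entry is $\sigma_i^2/\sqrt{\sigma_i^2+\sigma_0^2}$. Hence ${\bf U}$ is precisely the right-singular-vector matrix of ${\bf \Gamma}^*{\bf C}_{\bf x}^{1/2}$, and its nonzero singular values are the $\lambda_{{\bf \Gamma},i}$. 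This motivates restricting ${\bf \Psi}$ to the ansatz ${\bf \Psi} = {\bf U}_\Psi{\bf \Xi}_\Psi\tilde{\bf \Lambda}^{-1/2}{\bf U}^T$, i.e., aligning the right singular vectors of ${\bf \Psi}{\bf C}_{\bf x}^{1/2}$ with those of ${\bf B}$, whose optimality I would establish via a Hadamard/von Neumann trace inequality exactly as in the MIMO and task-based combiner arguments of \cite{palomar2007mimo,HardwareLimited}.

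With this ansatz one checks that ${\bf \Psi}{\bf C}_{\bf x}{\bf \Psi}^T = {\bf U}_\Psi{\bf \Xi}_\Psi{\bf \Xi}_\Psi^T{\bf U}_\Psi^T$, a matrix whose eigenvalues are the $\alpha_i = ({\bf \Xi}_\Psi)_{i,i}^2$ but whose diagonal can be shaped freely by the unitary ${\bf U}_\Psi$. Choosing ${\bf U}_\Psi$ so that this diagonal equals $\tfrac{3M_i^2}{2\eta^2}$ forces $\gamma_i^2 = \tfrac{3M_i^2}{2}$ and therefore ${\bf G} = {\bf I}$ through \eqref{cons2}; this is the step that decouples the quantization noise from the combiner. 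Substituting ${\bf G}={\bf I}$ and the diagonalized ${\bf A},{\bf B}$ collapses the relevant trace into the separable form of \eqref{MSELambda}, and since maximizing $\sum_i \alpha_i\lambda_{{\bf \Gamma},i}^2/(\alpha_i+1)$ is equivalent to minimizing $\sum_i \lambda_{{\bf \Gamma},i}^2/(\alpha_i+1)$, optimizing over $\{\alpha_i\}$ subject to feasibility gives \eqref{MSESam2}. Crucially, the existence of a unitary ${\bf U}_\Psi$ with prescribed eigenvalues $\{\alpha_i\}$ and prescribed diagonal $\tfrac{3}{2\eta^2}[M_1^2,\ldots,M_P^2]$ is governed by the Schur--Horn theorem, which holds if and only if $\tfrac{3}{2\eta^2}[M_1^2,\ldots,M_P^2]^T \prec [\alpha_1,\ldots,\alpha_P]^T$ — precisely the majorization constraint of \eqref{MSESam2}. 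Sorting $M_i$ in descending order to match the descending $\lambda_{{\bf \Gamma},i}$ guarantees that the largest loads are paired with the strongest modes.

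The main obstacle, I expect, is the rigorous justification of the two optimality claims hidden in the ansatz: that aligning the right singular vectors of ${\bf \Psi}$ with ${\bf U}$ remains optimal even though ${\bf G}$ depends on ${\bf \Psi}$, and that normalizing the quantizer loads to obtain ${\bf G}={\bf I}$ (rather than some other diagonal) is optimal. Both hinge on converting the diagonal-shaping freedom of ${\bf U}_\Psi$ into a majorization statement via Schur--Horn, together with the monotonicity of $t\mapsto t/(t+1)$, so the bulk of the care goes into setting up the majorization argument and verifying the feasibility of the prescribed diagonal.
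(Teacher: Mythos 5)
Your proposal is correct and follows essentially the same route as the paper's proof: substitute the LMMSE digital filter from Lemma~\ref{lem:Digital}, whiten by ${\bf C}_{\bf x}^{1/2}$, align the right singular vectors of the combiner with those of ${\bf \Gamma}^*{\bf C}_{\bf x}^{1/2}$ (i.e., with ${\bf U}$), and convert the dynamic-range constraint on the quantizer inputs into the majorization condition on $\{\alpha_i\}$ via the Schur--Horn characterization, with the freedom in ${\bf U}_\Psi$ used to normalize ${\bf G}$ to the identity. The only cosmetic difference is that the paper absorbs ${\bf G}^{-1/2}$ into the definition of the whitened combiner $\hat{\bf \Psi}={\bf G}^{-1/2}{\bf \Psi}{\bf C}_{\bf x}^{1/2}$ and observes at the end that ${\bf G}$ may be set to ${\bf I}$, whereas you impose ${\bf G}={\bf I}$ directly through the choice of ${\bf U}_\Psi$ — the two are equivalent.
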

\begin{IEEEproof}
The proof is given in Appendix \ref{app:proof1}.
\end{IEEEproof}

 For the special case of identical bit allocation, i.e., $M_i=M_j$ for each $i,j \in \mySet{P}$, it can be shown that Proposition~\ref{lem:MSE} coincides with \cite[Thm. 1]{HardwareLimited}.
We note that \eqref{MSESam2} is a convex problem, which can be solved using existing convex optimization toolboxes such as  CVX. The majorization constraint can not be addressed with CVX directly, So we split it into equivalent multiple linear constraints, which are all convex and can be solved using CVX.
Consequently, while Proposition \ref{lem:MSE} does not give the sampling matrix ${\bf \Psi}^*$ in closed form, it can be numerically computed with an affordable computational effort. Nonetheless, identifying the bit allocation $\{M_i\}$ which minimizes the \ac{mse} based on Proposition~\ref{lem:MSE} is a challenging task due to the discrete nature of the bit assignment, motivating the greedy optimization algorithm detailed in the following section. However, if one is allowed to assign non-integer bit values, the resulting compression system and its corresponding \ac{mse} can be obtained explicitly via the following theorem:
%
%
\begin{theorem}
\label{thm:MSE}
When $\{M_i\}$ are not limited to be integer, the optimal unconstrained sampling operator is ${\bf \Psi}^*=  {\bf U}_K^T$. The \ac{mse} minimizing bit allocations satisfies
\begin{align}
 &M_i^2 \!= \! \begin{cases}
  \eta^2\frac{-2\beta^*+\lambda_{{\bf{\Gamma }},i}^2+ \sqrt{\lambda_{{\bf{\Gamma }},i}^4 - 4 \beta^* \lambda_{{\bf{\Gamma }},i}^2}}{3\beta^*} &   {  \beta^* \le    \frac{\lambda_{{\bf{\Gamma }},i}^2}{4} ,}\\
1 & {\text{otherwise}},
\end{cases}
\label{algorithms1}
\end{align}
where  the hyperparameter  $\beta^*$ is set such that $\sum_{i=1}^P \log_2 M_i = \log_2M$.
The resulting excess \ac{mse} is given by:
\begin{equation}\label{AchMSE}
\begin{aligned}
\mathbb{E} \{ \| \hat{\bf c}- \tilde{\bf c}  \|^2 \} = \sum_{i=1}^{K}\lambda_{\tilde \Gamma,i}^2- \sum_{i=1}^{\min(K,P)} \frac{{{3M_i^2}}\lambda_{\tilde \Gamma,i}^2}{{3M_i^2}+{2\eta^2}}.
\end{aligned}
\end{equation}
\end{theorem}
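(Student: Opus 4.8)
The plan is to build directly on Proposition~\ref{lem:MSE}, which already reduces the sampler design, for a \emph{fixed} bit allocation, to the choice of the diagonal gains $\{\alpha_i\}$ subject to $\frac{3}{2\eta^2}[M_1^2,\ldots,M_P^2]^T \prec [\alpha_1,\ldots,\alpha_P]^T$. The key observation is that the excess MSE from Proposition~\ref{lem:MSE} is a separable function of the $\{\alpha_i\}$ alone that can only improve as each $\alpha_i$ grows, while the $\{M_i\}$ enter exclusively through this majorization constraint and the bit budget $\sum_i \log_2 M_i \le \log_2 M$. First I would therefore decouple the problem: promote $\{\alpha_i\}$ to the true design variables and ask, for a prescribed target $\{\alpha_i\}$, what is the cheapest (in bits) feasible choice of $\{M_i\}$.

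The main step is to show that this minimal budget is attained at \emph{tight} majorization, $\frac{3}{2\eta^2}M_i^2 = \alpha_i$, which corresponds to taking ${\bf U}_\Psi = {\bf I}$ in Proposition~\ref{lem:MSE} so that no cross-quantizer mixing is needed; this is precisely where the integer relaxation is used, since real $M_i$ can match each $\alpha_i$ exactly. To establish it I would invoke Schur-concavity of $\myVec{x}\mapsto \sum_i \log x_i$. Every feasible $\{M_i'\}$ obeys $\frac{3}{2\eta^2}[M_i'^2]\prec[\alpha_i]$, hence shares the common sum $\sum_i M_i'^2 = \frac{2\eta^2}{3}\sum_i \alpha_i$ (majorization equates $\ell_1$ norms), and Schur-concavity gives $\sum_i \log\!\big(\tfrac{3}{2\eta^2}M_i'^2\big) \ge \sum_i \log \alpha_i$ with equality exactly at tight majorization. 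Thus tight majorization minimizes $\sum_i \log_2 M_i$ for any target, so it is optimal to set $\alpha_i = \frac{3}{2\eta^2}M_i^2$. Substituting this identity into the excess-MSE expression of Proposition~\ref{lem:MSE}, and accounting for the $K-\min(K,P)$ task directions that remain unsampled (each contributing its full $\lambda_{{\bf \Gamma},i}^2$), yields \eqref{AchMSE}; here the $\lambda_{\tilde\Gamma,i}$ appearing in \eqref{AchMSE} are the same singular values of ${\bf \Gamma}^*{\bf C}_{\bf x}^{1/2}$ as the $\lambda_{{\bf \Gamma},i}$ in Proposition~\ref{lem:MSE}. The resulting sampler collapses to the GFT projection onto the signal subspace, i.e.\ ${\bf \Psi}^*={\bf U}_K^T$ up to a per-row scaling that is absorbed into the supports $\gamma_i$ and inverted by ${\bf \Phi}^*$ of Lemma~\ref{lem:Digital}.

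It then remains to minimize the now-explicit objective \eqref{AchMSE} over continuous $\{M_i\}$ under $\sum_i \log_2 M_i \le \log_2 M$. I would form the Lagrangian with multiplier $\beta^*$ for the budget, write $f(M_i)=\frac{3M_i^2\lambda_{{\bf \Gamma},i}^2}{3M_i^2+2\eta^2}$ with $f'(M_i)=\frac{12\eta^2 M_i \lambda_{{\bf \Gamma},i}^2}{(3M_i^2+2\eta^2)^2}$, and impose stationarity $\partial_{M_i}\!\big(f(M_i)-\beta^*\log_2 M_i\big)=0$. Setting $s_i:=M_i^2$ turns this into the quadratic $9\beta^* s_i^2 + 6\eta^2(2\beta^*-\lambda_{{\bf \Gamma},i}^2)s_i + 4\beta^*\eta^4=0$, whose discriminant equals $36\eta^4(\lambda_{{\bf \Gamma},i}^4-4\beta^*\lambda_{{\bf \Gamma},i}^2)$; its admissible root reproduces the first branch of \eqref{algorithms1} and is real precisely when $\beta^*\le \lambda_{{\bf \Gamma},i}^2/4$. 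The KKT boundary $M_i\ge1$ then forces the second branch $M_i^2=1$ (no bits) when $\beta^*>\lambda_{{\bf \Gamma},i}^2/4$, and since the objective is monotone in each $M_i$ the budget binds, so $\beta^*$ is pinned by $\sum_i \log_2 M_i=\log_2 M$.

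The step I expect to be the main obstacle is the majorization/Schur-concavity argument, which legitimizes replacing the coupled $({\bf \Psi},\{M_i\})$ design by decoupled per-direction water-filling: one must verify that tight majorization is both feasible and budget-minimal, and correctly exploit the equal-$\ell_1$-norm property of majorization. By comparison the subsequent Lagrangian computation is routine, with the only subtlety being the boundary case $M_i=1$ handled through complementary slackness.
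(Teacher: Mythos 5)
Your proposal is correct and follows essentially the same route as the paper: it reduces to tight majorization $\tfrac{3}{2\eta^2}M_i^2=\alpha_i$ via Schur-concavity of $\sum_i\log_2 M_i$ (the paper's Lemma in Appendix~B), concludes ${\bf \Psi}^*={\bf U}_K^T$ with the row scaling absorbed as in the paper, and obtains \eqref{algorithms1} from the same KKT stationarity condition, i.e.\ the same quadratic in $M_i^2$ with discriminant vanishing at $\beta^*=\lambda_{{\bf \Gamma},i}^2/4$. Your phrasing of the majorization step as ``minimal bit cost for a prescribed $\{\alpha_i\}$'' is a cleaner statement of the paper's argument, but it is the same key idea.
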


\begin{IEEEproof}
The proof is given in Appendix~\ref{app:proof2}.
\end{IEEEproof}

\smallskip
While Theorem \ref{thm:MSE} considers a hypothetical system which can quantize using non-integer number of bits, it reveals an important challenge arising from the incorporation of quantization  compared to conventional graph sampling. When one can quantize with arbitrary non-integer levels, the resulting compression problem reduces to a conventional graph sampling (without quantization) setup, as shown in Appendix~\ref{app:proof2}. For such cases, the sampling matrix ${\bf \Psi}^*$ specializes into conventional frequency domain sampling of bandlimited signals, which settles with the corresponding results in \cite{tanaka2020sampling}. However, since quantization is inherently restricted to discrete levels, the \ac{mse} minimizing graph sampling matrix does not necessarily reduce to frequency domain sampling. In such cases, one has to utilize Proposition~\ref{lem:MSE} combined with dedicated schemes for just rounded or  optimizing the bit allocation, as proposed next.


\vspace{-0.2cm}
\subsection{Greedy Optimization Algorithm Design}
\label{subsec:Algorithm}
\vspace{-0.1cm}	
The MSE characterized Theorem \ref{thm:MSE} is generally not achievable  for graph signal compression schemes since it ignores the fact that the bit assignment must take integer values.
Nonetheless, the closed-form \ac{mse} expression \eqref{AchMSE} can be used to facilitate the setting of the (discrete) bit allocation $\{M_i\}$ using greedy optimization. 
In the following we first introduce the greedy algorithm for setting the bit budget, after which we identify a sufficient and necessary conditions for which the bit assignment is designed by using identical quantizers, and identify the number of quantizers $P$ which minimizes the \ac{mse} without setting any quantizer to be inactive.

\subsubsection{Greedy Bit Assignment}
The proposed greedy algorithm starts by setting the minimal bit allocation for all the quantizers, i.e., $M_i^{(0)}=1$ for each $i \in \mySet{P}$. Then, it increments the number of levels for the quantizer which contributes most to the objective \eqref{AchMSE}. While this procedure does not impose the  constraint $M_i \ge M_{i+1}$ for each $i \in \{1,2,...,P-1\}$, it is implicitly maintained due to the descending order of  $\{\lambda_{ {\bf \Gamma},i}\}$.
Specifically, the gradient of the objective \eqref{AchMSE} w.r.t. $M_i$ is $g(M_i) := -\frac{\partial}{\partial M_i} \sum_{i=1}^{\min(K,P)} \frac{3M_i^2\lambda_{\tilde \Gamma,i}^2}{3M_i^2+{2\eta^2}}$, which equals
\vspace{-0.1cm}
\begin{align*}
 g_i(M_i)
 &= \! \begin{cases}
{ -\frac{12M_i\eta^2\lambda_{{\bf \Gamma},i}^2}{(3M_i^2 \!+ \!2\eta^2)^2}  }
&{  i\le K,  }\\
0 & {\text{otherwise}}.
\end{cases}
\vspace{-0.1cm}
\end{align*}
At the $k$th iteration, the gradient vector ${\bf g}^{(k)}$ is thus given by
\begin{equation}\label{algorithms2}
{\bf g}^{(k)} = [  g_1(M_1^{(k)}),  g_2(M_2^{(k)}),..., g_P(M_P^{(k)}) ]^T.
\end{equation}

The elements in \eqref{algorithms2} respectively represent the distortion decreasing efficiency of the extra level we assign to each quantizer at the $k$th iteration.
The smallest entry of \eqref{algorithms2}   determines which quantizer is assigned an additional  level, repeating until the  budget $M$ is exhausted. The resulting greedy  allocation algorithm is summarized as Algorithm~\ref{alg:greedy}. 
Once the bit assignment is obtained, the sampling operator and its corresponding \ac{mse} are obtained using Proposition~\ref{lem:MSE}, while the digital reconstruction filter is computed via Lemma~\ref{lem:Digital}.

\begin{algorithm}
\caption{Greedy bit allocation algorithm}
\label{alg:greedy}
\hspace*{0.02in} {\bf Input:}
$\{\lambda_{\tilde \Gamma,i}\}$\\
\hspace*{0.02in} {\bf Output:}
Bit allocation $\{M_i\}$.\\
\hspace*{0.02in} {\bf Initialize:}
$k=1$ and set $M_i^{(0)}=1$ for $\forall i \in \mySet{P}$.
\begin{algorithmic}[1]
\While {$\sum_{i=1}^P \log_2M_i^{(k)} \le \log_2M$}
                    \State Compute ${\bf g}^{(k)}$ via \eqref{algorithms2};
                    \State Update $M_i^{(k+1)}=M_i^{(k)}$ + $\boldsymbol{1}_{i=\arg \min_j g_j(M_j^{(k)})}$;
                    \State  $k = k+1$;
                \EndWhile
\State \Return $\{M_i^{(k-1)}\}$
\end{algorithmic}
\end{algorithm}

\subsubsection{Analysis of Algorithm~\ref{alg:greedy}}
In the  method summarized as Algorithm~\ref{alg:greedy}, we use a greedy approach based on an \ac{mse} measure achievable using real-valued assignments to optimize the discrete bit allocation. Therefore, to assess the validity of this approach, we first numerically compare the ability of Algorithm~\ref{alg:greedy} to approach the excess \ac{mse} in \eqref{AchMSE} using discrete bit assignments. Then, we characterize a sufficient condition for which Algorithm~\ref{alg:greedy} yields an identical bit allocation. Finally, since the number of quantized samples $P$ is inherently a design parameter of the system, we derive the setting of $P$ which optimizes the \ac{mse} achievable using Algorithm~\ref{alg:greedy}.

{\bf Empirical Evaluation}:  Fig.~\ref{figgreedy} compares the \ac{mse} achieved with the discrete bit assignment computed using Algorithm~\ref{alg:greedy} to the \ac{mse} evaluated  via \eqref{AchMSE} for $30$ i.i.d. examples with $P=K=10$. As the purpose of this study is to evaluate the ability of Algorithm~\ref{alg:greedy} to approach \eqref{AchMSE}, we consider a toy example where the descending sequence $\{\lambda_{\tilde \Gamma,i}\}$ is randomly generated with normal Gaussian distribution;  
more realistic experiments corresponding to different forms of graph signals are reported in our simulation study in Section~\ref{sec:Sims}.

\begin{figure}
\centering
\vspace{-0.4cm}
\subfigure[$\log_2M = 10$]{
\includegraphics[width=1.6in]{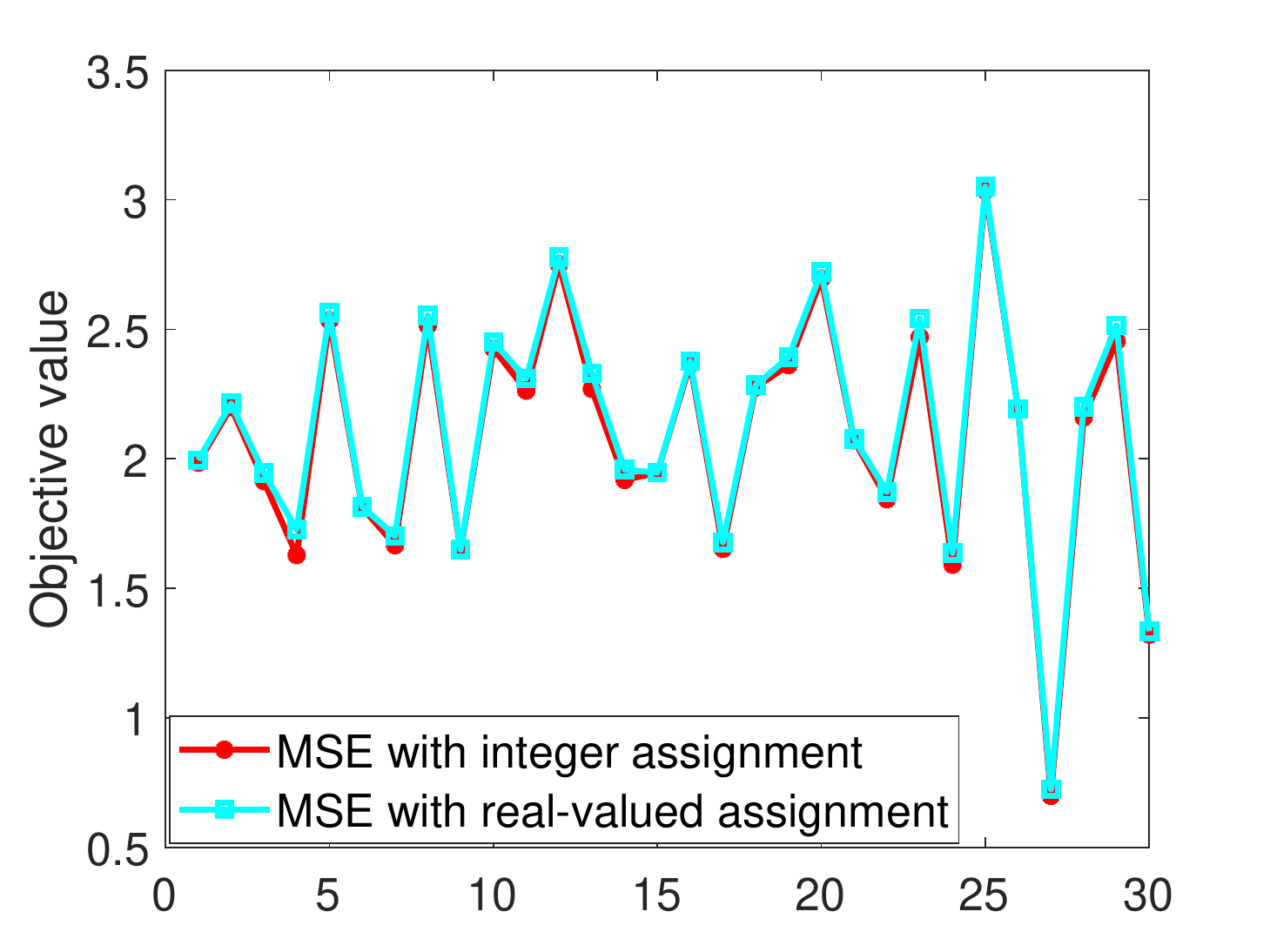}
\label{figgreedyA}
}
\subfigure[$\log_2M = 20$]{
\includegraphics[width=1.6in]{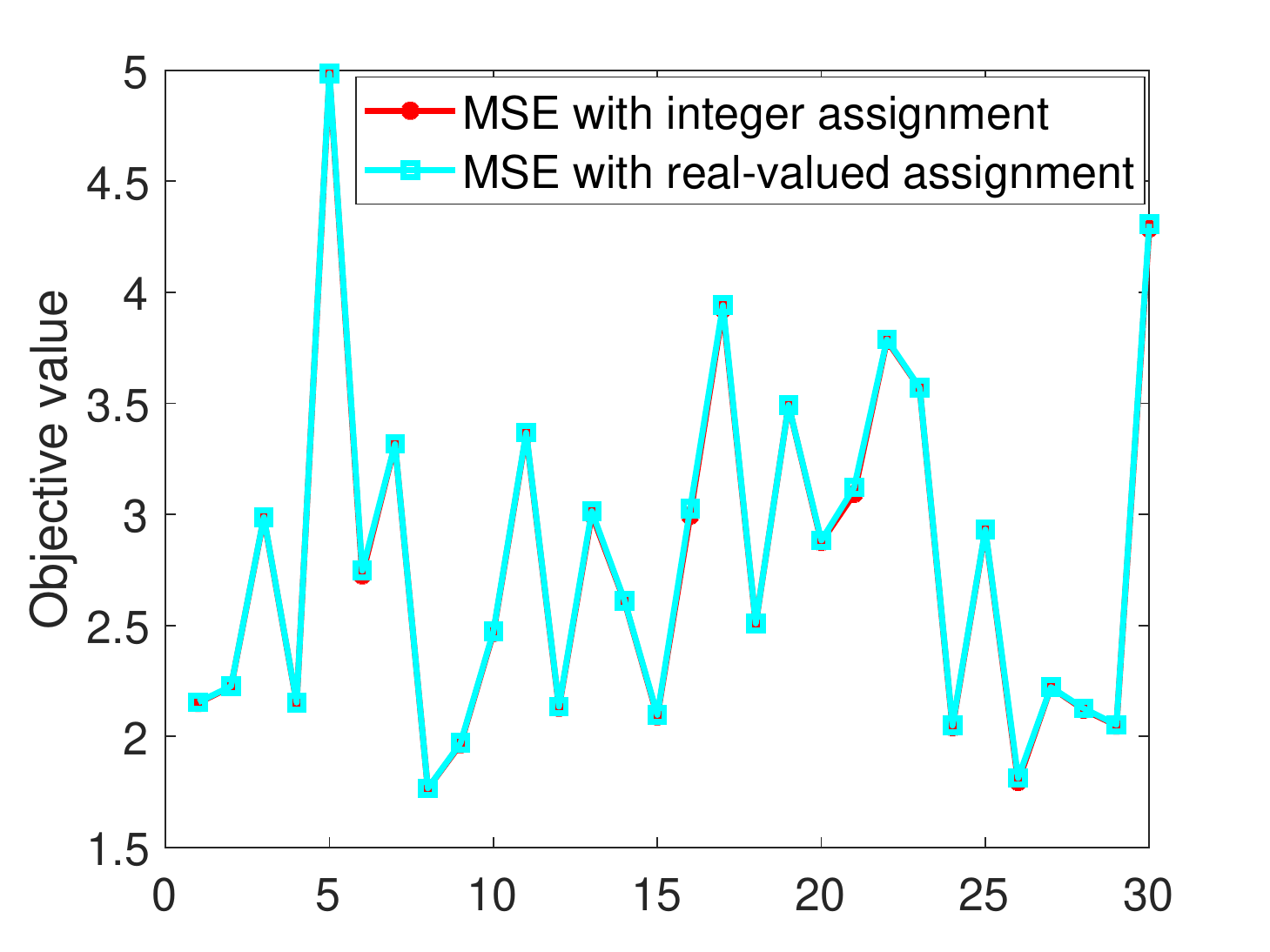}
\label{figgreedyB}
}
\caption{$30$ examples with randomly generated ${\lambda_{ \Gamma,i}}$. 
}
\label{figgreedy}
\end{figure}

Observing Fig.~\ref{figgreedy}, we note that when the overall bit budget is tight, i.e., $\log_2 M = 10$ bits as in Fig.~\ref{figgreedyA}, 
there is a small gap between the \ac{mse} achieved using the discrete assignment and that which can reached given the ability to assign arbitrary   quantization levels. However, as the bit budget increases to $\log_2 M = 20$, which is still a relatively tight budget (allowing merely 2 bits per quantizer for a standard identical bit assignment), we observe in Fig.~\ref{figgreedyB} that the gap becomes negligible. This study indicates that although Algorithm~\ref{alg:greedy} is designed based on an \ac{mse} measure typically not achievable with integer bit assignments, it allows to approach it to within a small gap, which effectively vanishes as the bit budget grows.




{\bf Identical Bit Assignment}: As discussed in Subsection~\ref{subsec:MSE minimizng}, while our derivation relies on representing the joint sampling and quantization of bandlimited graph signals as a task-based quantization setup, the resulting formulation differs from that studied in the context of \ac{adc} design in \cite{HardwareLimited}. One of the key differences between the graph signal compression task considered here and the design of hybrid analog/digital acquisition studied in \cite{HardwareLimited} stems from the additional degree of freedom in the ability to allocate different bit assignments between the quantizers. As the restriction to utilize identical quantizers may facilitate the design of the compression system, we next identify a sufficient and necessary condition for which Algorithm~\ref{alg:greedy} yields an identical bit allocation:

\begin{corollary}
\label{co:eq}
Algorithm~\ref{alg:greedy} yields an identical bit assignment $M_i \equiv M_a := \lfloor M^{1/P} \rfloor $ for each $i=1,\ldots, P$ where $P\leq K$ when the following conditions are satisfied:
\begin{subequations}
\label{eqn:coeq}
\begin{align}
\label{eqn:coeq1}
(M_a)^P \le M < (M_a)^P + (M_a)^{P-1}, \\
\label{eqn:coeq2}
g_1(M_a) < g_P(M_a - 1).
\end{align}
\end{subequations}
\end{corollary}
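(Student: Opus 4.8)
The plan is to recognize that Algorithm~\ref{alg:greedy} is nothing but a merge process on the candidate increments. Identify each possible step with a pair $(i,m)$ meaning ``raise quantizer $i$ from level $m$ to level $m+1$'', and assign it the gain $|g_i(m)| = \tfrac{12\eta^2\lambda_{{\bf \Gamma},i}^2\,m}{(3m^2+2\eta^2)^2}$. Since incrementing a quantizer strictly lowers its own gain and leaves every other quantizer's gain unchanged, the priorities $\max_i|g_i(M_i^{(k)})|$ picked by the rule $\arg\min_j g_j(M_j^{(k)})$ are non-increasing in $k$; hence the algorithm executes the increments in non-increasing order of $|g_i(m)|$, traversing each quantizer's levels in the order $m=1,2,3,\dots$. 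First I would record the two monotonicities that drive the argument: for fixed $m$, $|g_i(m)|$ is non-increasing in $i$ because $\{\lambda_{{\bf \Gamma},i}\}$ is sorted in descending order, and for fixed $i$ it is strictly decreasing in $m$ on the operating range $m\ge1$, since $\tfrac{d}{dm}\tfrac{m}{(3m^2+2\eta^2)^2}=\tfrac{2\eta^2-9m^2}{(3m^2+2\eta^2)^3}<0$ there. The hypothesis $P\le K$ enters here: it guarantees every quantizer has a strictly negative gradient, so none is frozen at level $1$.

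For sufficiency I would then describe exactly which increments are performed before the budget $\prod_i M_i\le M$ is exhausted. Condition~\eqref{eqn:coeq1} does the budget bookkeeping: its left inequality makes the uniform allocation $M_i\equiv M_a$ feasible ($(M_a)^P\le M$), while its right inequality $M<(M_a)^{P-1}(M_a+1)$ forces the algorithm to halt the instant it reaches the uniform state, because raising any single quantizer to $M_a+1$ from there overshoots the budget. It remains to show the algorithm actually reaches the uniform state, i.e., that every increment in the ``fill'' set $\{(i,m):1\le i\le P,\ 1\le m\le M_a-1\}$ is executed strictly before any increment with $m\ge M_a$. By the merge property and the two monotonicities, the least attractive fill increment is $(P,M_a-1)$ and the most attractive over-allocation increment is $(1,M_a)$, so the whole ordering collapses to a single comparison between $g_1(M_a)$ and $g_P(M_a-1)$ --- precisely the quantity governed by~\eqref{eqn:coeq2}, oriented so that the fill increment $(P,M_a-1)$ wins. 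Together, \eqref{eqn:coeq1} and \eqref{eqn:coeq2} then pin the performed set to the full uniform layer and nothing more, giving $M_i\equiv M_a$.

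For necessity I would run the contrapositive: if the right inequality in~\eqref{eqn:coeq1} fails, then after the uniform state the highest-gain remaining increment $(1,M_a)$ is still within budget and is executed, producing $M_1\ge M_a+1$; and if the comparison in~\eqref{eqn:coeq2} is reversed, $(1,M_a)$ precedes $(P,M_a-1)$ in the merge, so quantizer $1$ is pushed beyond $M_a$ while quantizer $P$ still lags behind --- either way the output is non-uniform. The step I expect to be the main obstacle is making the merge reduction airtight at the boundary: handling ties in the gains $|g_i(m)|$ (resolved by the index-based $\arg\min$), and confirming strict monotonicity of $m\mapsto|g_i(m)|$ throughout the range, since for atypically large $\eta$ this map can increase across $m=1$ and spoil the identification of $(P,M_a-1)$ and $(1,M_a)$ as the binding increments. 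Restricting to $\eta<3/\sqrt{2}$, which includes the recommended $\eta=2$, removes this caveat and keeps the single comparison in~\eqref{eqn:coeq2} decisive.
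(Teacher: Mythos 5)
Your proof is correct and follows essentially the same route as the paper's: both arguments rest on the two monotonicities of the gain $|g_i(m)|$ (non-increasing in $i$ via the descending $\lambda_{{\bf \Gamma},i}$, decreasing in $m$ via the derivative), reduce everything to the single comparison between the increments $(P,M_a-1)$ and $(1,M_a)$, and use \eqref{eqn:coeq1} for the budget bookkeeping. Your explicit caveat that the monotonicity in $m$ requires $\eta<3/\sqrt{2}$ is a point the paper's proof silently assumes, and your magnitude-based formulation avoids the sign ambiguities in the paper's own handling of $g_i$.
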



\begin{IEEEproof}
The proof is given in Appendix~\ref{app:co1}.
\end{IEEEproof}

\smallskip
Corollary~\ref{co:eq} identifies conditions for which, when satisfied, one can simply utilize conventional identical bit allocation rather than go through the greedy optimization in Algorithm~\ref{alg:greedy}. Nonetheless, the conditions \eqref{eqn:coeq} are not that commonly satisfied by graph signals.
The condition \eqref{eqn:coeq2} hints that the $\lambda_{\Gamma, P}$  should be close to  $\lambda_{\Gamma, 1}$. However, the smoothness of the graph signal determines that the weight of the low-frequency component is usually much greater than the weight of the high-frequency component, i.e., $\sigma_1 \gg \sigma_P$, which potentially contradicts \eqref{eqn:coeq2}.
On the other hand, when \eqref{eqn:coeq1} is not satisfied, the greedy-based algorithm we proposed can also improve the performance by using the redundant bits.



{\bf Number of Quantized Samples}: Algorithm~\ref{alg:greedy} assigns the overall bit budget among  $P$ quantized samples, where $P$ is fixed. Nonetheless, it may yield an assignment $M_i = 1$, implying that the $i$th quantizer has a single quantization level and is thus inactive. For instance, since the gradients in \eqref{algorithms2} equal zero for $i> K$ and are strictly negative for $i \leq K$, it follows that Algorithm~\ref{alg:greedy} assigns zero bits (i.e., $M_i=1$) for quantizers of index $i > K$. While increasing $P$ to be larger than the spectral support $K$ thus does not affect the \ac{mse} as these quantizers will no be active, using less than $K$ quantized samples may result in different achievable \ac{mse} values. Consequently, in the following corollary we identify the number of quantized samples $P$ which minimizes the \ac{mse} without setting any quantizer to be inactive:
%
\begin{corollary}
\label{co:num}
The \ac{mse} minimizing number of actively quantized samples $P^*$ satisfies  $l_{P^*} \le \log_2M < l_{P^*+1}  $, where
\begin{equation}
l_i = \begin{cases}
{1+\sum_{j=1}^i \log_2 \lceil \tilde l_{j,i} \rceil},& i \le K,\\
{+\infty,}& i > K.
\end{cases}
\label{eqn:num}
\end{equation}
Here, $\tilde l_{j,i} \in \mathbb{R}$ is given by the solution to the following equality:
\begin{equation}
g_j(\tilde l_{j,i}) = g_i(1),
\end{equation}
where $g_i(\bullet)$ is $i$-th element in the gradient of \eqref{AchMSE} w.r.t. $M_i$.
\end{corollary}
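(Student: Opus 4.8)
The plan is to analyze the greedy allocation of Algorithm~\ref{alg:greedy} directly and show that the quantizers are switched on one at a time, in index order, with the activation of quantizer $i$ triggered at a precisely computable cumulative bit count equal to $l_i$. The optimal number of active samples is then simply the largest index whose activation threshold has been crossed by the budget $\log_2 M$.

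First I would establish the monotonicity facts that drive the greedy ordering. Writing $b_i(M) := -g_i(M) = \frac{12 M \eta^2 \lambda_{\tilde \Gamma,i}^2}{(3M^2 + 2\eta^2)^2}$ for the benefit of incrementing quantizer $i$ from level $M$, I would verify that (i) on the relevant range $M \ge 1$, and for the overload-driven choice $\eta = 2$ (more generally $\eta \le 3/\sqrt{2}$, since the peak of $b_i$ sits at $M = \eta\sqrt{2}/3 < 1$), the benefit $b_i(M)$ is strictly positive and strictly decreasing in $M$; and (ii) because $\{\lambda_{\tilde \Gamma,i}\}$ is sorted in descending order, the first-bit benefits $b_i(1)$ are strictly decreasing in $i$, while $b_i(M)\equiv 0$ for $i>K$. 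Property (ii) already shows that no quantizer beyond index $K$ is ever activated, giving $P^*\le K$ and justifying $l_i = +\infty$ for $i>K$.

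Combining (i) and (ii), I would argue that before Algorithm~\ref{alg:greedy} assigns the first bit to quantizer $i$ it performs exactly the set of increments whose benefit exceeds $b_i(1)$; since this set is independent of the order of execution, it is well defined. For each $j<i$, the increments of quantizer $j$ with benefit exceeding $b_i(1)$ are those from a level $m<\tilde l_{j,i}$, where $\tilde l_{j,i}$ solves $g_j(\tilde l_{j,i}) = g_i(1)$; by the monotonicity in (i) these increments raise quantizer $j$ exactly to the integer level $\lceil \tilde l_{j,i}\rceil$. Hence, at the instant quantizer $i$ receives its first bit, the cumulative bit count is $\sum_{j=1}^{i-1}\log_2\lceil \tilde l_{j,i}\rceil + \log_2 2 = l_i$, where the $j=i$ summand vanishes because $\tilde l_{i,i}=1$. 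Since activations occur in index order at the strictly increasing thresholds $\{l_i\}$, once the budget reaches $B=\log_2 M$ the number of active quantizers is the largest $i$ with $l_i\le B$, and because Algorithm~\ref{alg:greedy} realizes the \ac{mse}-minimizing allocation for the budget, this count is precisely the optimal $P^*$ satisfying $l_{P^*}\le \log_2 M < l_{P^*+1}$.

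The main obstacle is the careful bookkeeping in the third step: I must confirm that the discrete greedy increments bring each quantizer $j$ to exactly $\lceil \tilde l_{j,i}\rceil$ rather than overshooting, that the interleaving of increments across distinct quantizers does not alter the levels reached at the activation instant (which follows from the order-independence of the benefit-ordered increment set), and that ties—when $\tilde l_{j,i}$ is an integer or when two candidate increments share the same benefit—are handled. Such ties occur only on a measure-zero subset of the $\{\lambda_{\tilde \Gamma,i}\}$ and may be broken by any fixed convention without changing $P^*$. I would also explicitly flag that the monotonicity in (i) restricts the admissible $\eta$, which is consistent with the choice $\eta=2$ used throughout the paper.
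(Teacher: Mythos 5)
Your proof is correct and follows essentially the same route as the paper's: both arguments trace Algorithm~\ref{alg:greedy}, rest on the observation that each quantizer $j<i$ must be raised to exactly $\lceil \tilde l_{j,i}\rceil$ levels (where $g_j(\tilde l_{j,i})=g_i(1)$) before quantizer $i$ receives its first level, and use the vanishing gradient for $i>K$ to cap $P^*$ at $K$; the paper merely packages this as two directional implications established by contradiction rather than as a direct trace of the activation order. Your explicit check that the per-level benefit is decreasing on $M\ge 1$ only when the peak at $M=\eta\sqrt{2}/3$ lies below $1$ (i.e., $\eta\le 3/\sqrt{2}$), and your treatment of ties, make explicit prerequisites that the paper's proof leaves implicit.
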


\begin{IEEEproof}
The proof is given in Appendix~\ref{app:co2}.
\end{IEEEproof}

\smallskip
Corollary~\ref{co:num} provides an increasing sequence, which divides the feasible interval of $M$ decision levels into $K$ sub-intervals. Before designing the graph signal compression system, we can determine the expected number of quantizers according to the size of $M$. When the total number of bits is large enough satisfying $\log_2M \ge l_K$, $K$ samples are actively quantized;
When the total number of bits is limited, as discussed above, some inactive quantizers can be removed. In such a case, the reduction in the number of quantized samples also reduces the computational complexity of Algorithm~\ref{alg:greedy} since it needs to compute $P$-length gradient vector in each iteration. However, for a specific graph signal compression task, if the number of samples that can be taken $P$ is less than $P^*$ of Corollary~\ref{co:num}, then all samples should be quantized with at least two bits. 



\vspace{-0.2cm}
\section{Joint Sampling and Quantization Using Frequency-Domain Graph Filters}
\label{sec:constrained}
\vspace{-0.1cm}



Section~\ref{sec:unconstrained} designs the compression rule without  imposing any constraints on the sampling matrix, i.e., setting ${\bf \Psi} = {\bf I}_{\mathcal{S}} {\bf F}$, thus allowing the graph filter ${\bf F}$ to be any $N\times N$ matrix.
In this section, we specialize our analysis to 
frequency-domain graph filters. We first present in Subsection~\ref{subsec:constrainedP1} an alternative problem formulation, obtained by restricting the graph filter in \eqref{problem2} to implement frequency-domain graph filtering. 
	Based on the modified problem formulation, we propose an algorithm to tune the bit allocation and sampling set design in Subsection~\ref{subsec:BitSam}, after which we provide an alternating optimization algorithm to obtain the overall compression scheme  in Subsection~\ref{subsec:Overall}, and discuss the resulting design in Subsection~\ref{subsec:Discussion}.

\vspace{-0.2cm}
\subsection{Problem Formulation}	
\label{subsec:constrainedP1}
\vspace{-0.1cm}
To formulate the compression problem using frequency-domain graph filters, we return to  \eqref{problem2}, which serves as a basis for the design of the unconstrained system in Section~\ref{sec:unconstrained}.
The resulting formulation is stated in the following lemma:
\begin{lemma}
\label{lem:Prob2x}
When the sampling matrix ${\bf \Psi}$ is restricted to represent frequency-domain graph filtering, i.e., ${\bf \Psi} = {\bf I}_\mathcal{S} {\bf U} F({\bf \Lambda}) {\bf U}^T $, then, by defining ${\bf H}\triangleq  {\bf{U}}{{F}}({\bf \Lambda}) \tilde{\bf \Lambda}^2  {{F}}({\bf \Lambda}) {\bf{U}}^T$ and ${\bf{X}} \triangleq {\bf{U}}{{F}}({\bf \Lambda}){\tilde{\bf \Lambda}}{{F}}({\bf \Lambda}){{\bf{U}}^T}$, \eqref{problem2} is specialized into
\vspace{-0.1cm}
\begin{equation}
\label{problem2x}
\tag{P3}
\begin{aligned}
\mathop {\max }\limits_{{{\bf{I}}_{{\mathcal S}}},{F}({\bf \Lambda}), \{M_i\}} &{\rm{Tr}}\left( {{\bf{I}}_{{\mathcal S}} {\bf H} {\bf{I}}_{{\mathcal S}}^T } \left({\bf{I}}_{{\mathcal S}} {\bf X} {\bf{I}}_{{\mathcal S}}^T + {\bf G}\right)^{-1} \right), \\
\text{s.t.}~~~ &\sum_{i=1}^P \log_2 M_i \le \log_2M, \, M_i \in \mathbb{Z}^+, i \in \mySet{P}.
\end{aligned}
\vspace{-0.1cm}
\end{equation}
\end{lemma}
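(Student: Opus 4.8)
The plan is to derive \eqref{problem2x} from \eqref{problem2} in two stages: first eliminate the digital recovery filter ${\bf \Phi}$ by plugging in its optimum, reducing the problem to one over $({\bf \Psi}, \{M_i\})$; then impose the frequency-domain structure ${\bf \Psi} = {\bf I}_{{\mathcal S}} {\bf U} F({\bf \Lambda}) {\bf U}^T$ and show that every ${\bf \Psi}$-dependent factor collapses onto the row-selection matrix ${\bf I}_{{\mathcal S}}$, the diagonal filter $F({\bf \Lambda})$, and the quantization-error covariance ${\bf G}$.

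First I would invoke Lemma~\ref{lem:Digital}, which gives the MSE-optimal filter ${\bf \Phi}^* = {\bf \Gamma}^* {\bf C}_{\bf x} {\bf \Psi}^T ({\bf \Psi} {\bf C}_{\bf x} {\bf \Psi}^T + {\bf G})^{-1}$ for fixed ${\bf \Psi}$ and $\{M_i\}$. Substituting ${\bf \Phi}^*$ into the objective of \eqref{problem2} and invoking the orthogonality principle for linear MMSE estimation, the error splits into a constant term $\mathrm{Tr}({\bf \Gamma}^* {\bf C}_{\bf x} {\bf \Gamma}^{*T})$ minus the design-dependent term $\mathrm{Tr}\big( {\bf \Gamma}^* {\bf C}_{\bf x} {\bf \Psi}^T ({\bf \Psi} {\bf C}_{\bf x} {\bf \Psi}^T + {\bf G})^{-1} {\bf \Psi} {\bf C}_{\bf x} {\bf \Gamma}^{*T} \big)$. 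As the constant is independent of the variables, minimizing the MSE is equivalent to maximizing the second trace, which becomes the objective of \eqref{problem2x}; the bit-budget constraint carries over verbatim.

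Next I would substitute ${\bf \Psi} = {\bf I}_{{\mathcal S}} {\bf U} F({\bf \Lambda}) {\bf U}^T$ and exploit that ${\bf U}$ is orthogonal and $F({\bf \Lambda})$ is diagonal (hence symmetric). The denominator is immediate: using ${\bf C}_{\bf x} = {\bf U} \tilde{\bf \Lambda} {\bf U}^T$ from \eqref{ex2}, the inner ${\bf U}^T {\bf U}$ factors cancel and ${\bf \Psi} {\bf C}_{\bf x} {\bf \Psi}^T = {\bf I}_{{\mathcal S}} {\bf U} F({\bf \Lambda}) \tilde{\bf \Lambda} F({\bf \Lambda}) {\bf U}^T {\bf I}_{{\mathcal S}}^T = {\bf I}_{{\mathcal S}} {\bf X} {\bf I}_{{\mathcal S}}^T$, recovering the claimed ${\bf X}$; the support levels $\gamma_i^2 = \eta^2 ({\bf I}_{{\mathcal S}} {\bf X} {\bf I}_{{\mathcal S}}^T)_{i,i}$ feeding ${\bf G}$ inherit this form. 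For the numerator I would use the cyclic invariance of the trace to rewrite it as $\mathrm{Tr}\big( ({\bf \Psi} {\bf C}_{\bf x} {\bf \Psi}^T + {\bf G})^{-1} {\bf \Psi} ({\bf C}_{\bf x} {\bf \Gamma}^{*T} {\bf \Gamma}^* {\bf C}_{\bf x}) {\bf \Psi}^T \big)$, and then reduce the sandwiched task operator ${\bf C}_{\bf x} {\bf \Gamma}^{*T} {\bf \Gamma}^* {\bf C}_{\bf x}$ using the explicit form of ${\bf \Gamma}^*$ in \eqref{ex2p1}, so that ${\bf \Psi}(\cdots){\bf \Psi}^T$ likewise collapses to ${\bf I}_{{\mathcal S}} {\bf H} {\bf I}_{{\mathcal S}}^T$ with ${\bf H}$ sharing the outer pattern ${\bf U} F({\bf \Lambda}) (\cdot) F({\bf \Lambda}) {\bf U}^T$. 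Combining the two reductions yields \eqref{problem2x} exactly.

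The main obstacle I anticipate is the numerator reduction. The denominator cancellation is transparent because only ${\bf C}_{\bf x}$ sits between the two copies of ${\bf \Psi}$, but the numerator carries the task-dependent operator ${\bf C}_{\bf x} {\bf \Gamma}^{*T} {\bf \Gamma}^* {\bf C}_{\bf x}$, and one must verify precisely how it interacts with the diagonal filter $F({\bf \Lambda})$ and the selection ${\bf I}_{{\mathcal S}}$ so that the product coalesces into a single matrix that is diagonal in the eigenbasis of ${\bf L}$, sandwiched between the $F({\bf \Lambda})$ factors. Pinning down this inner matrix, and confirming that it matches the $\tilde{\bf \Lambda}^2$ appearing in the stated ${\bf H}$, is the delicate step on which the whole reformulation hinges; keeping both ${\bf H}$ and ${\bf X}$ in the common form ${\bf U} F({\bf \Lambda}) (\cdot) F({\bf \Lambda}) {\bf U}^T$ is exactly what lets the row-selection ${\bf I}_{{\mathcal S}}$ factor out identically in numerator and denominator.
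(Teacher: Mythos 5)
Your proposal follows the same route as the paper's (very terse) proof of Lemma~\ref{lem:Prob2x}: substitute the optimal ${\bf \Phi}^*$ of Lemma~\ref{lem:Digital} to reduce \eqref{problem2} to maximizing ${\rm Tr}\big({\bf \Gamma}^*{\bf C}_{\bf x}{\bf \Psi}^T({\bf \Psi}{\bf C}_{\bf x}{\bf \Psi}^T+{\bf G})^{-1}{\bf \Psi}{\bf C}_{\bf x}{\bf \Gamma}^{*T}\big)$, then set ${\bf \Psi}={\bf I}_{\mathcal S}{\bf U}F({\bf \Lambda}){\bf U}^T$ and cancel the ${\bf U}^T{\bf U}$ factors, and your write-up is in fact more detailed than the paper's, which does not carry out the numerator reduction you flag as delicate. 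For completeness: using ${\bf \Gamma}^*{\bf C}_{\bf x}=\hat{\bf \Lambda}{\bf U}_K^T$, the sandwiched operator is ${\bf C}_{\bf x}{\bf \Gamma}^{*T}{\bf \Gamma}^*{\bf C}_{\bf x}={\bf U}\,\mathrm{diag}(\hat{\bf \Lambda}^2,{\bf 0})\,{\bf U}^T$, which matches the stated ${\bf H}$ under the paper's convention for $\tilde{\bf \Lambda}^2$ already used in \eqref{ex3p3} --- a notational point, not a gap in your argument.
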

\begin{IEEEproof}
The proof is given in Appendix~\ref{app:proof4}.
\end{IEEEproof}



\smallskip
Problem~\eqref{problem2x} specializes \eqref{problem2} to sampling using frequency-domain graph filtering. It translates the compression system design into the joint optimization of the frequency-domain graph filter ${\bf F} = {\bf U}F({\bf \Lambda}){\bf U}^T$, the sampling set selection ${\bf{I}}_{{\mathcal S}}$, and the bit allocation $\{M_i\}$. In particular, $F({\bf \Lambda})$ affects the matrices  ${\bf H}$ and ${\bf X}$, while the bit setting is implicitly encapsulated in the distortion matrix ${\bf G}$ \eqref{cons2}.
Problem \eqref{problem2x} is non-convex due to the coupling its variables. To tackle this challenging optimization, in the following subsections, we first show how one can tune ${\bf{I}}_{{\mathcal S}}$ and $\{M_i\}$ for a given graph filter $\bf F$, based on which we propose an alternating optimization algorithm to jointly set the compression system parameters. 

\vspace{-0.2cm}
\subsection{Sampling Set and Bit Allocation Design}
\label{subsec:BitSam}
\vspace{-0.1cm}
Here, we solve  \eqref{problem2x} under a given filter $\bf F$  to obtain the 
sampling set $\mathcal{S}$ and bit allocation $\{M_i\}$. While the number of samples taken is $|{\mathcal S}| = P$, the number of actively quantized samples can be smaller, as revealed by Corollary~\ref{co:num}.
Consequently, while the number of samples taken is $P$, which is fixed here, the optimization of the sampling set should account for the fact that possibly less than $P$ samples are actively quantized. Note that this requirement was not encountered when considering unconstrained graph filters, where ${\bf \Psi}$ can be any $P\times N$ matrix, while here ${\bf \Psi}$ needs to be explicitly divided into sampling set selection ${\bf I}_{\mySet{S}}$ and the graph filter ${\bf F}$.

To optimize the sampling set in light of the above consideration, we allow $\mySet{S}$ to contain only the actively quantized, i.e., $|\mySet{S}|\leq P$. To formulate the resulting problem, we define
\vspace{-0.1cm}
\begin{equation} \label{virtualM}
{\tilde M}_i \triangleq \begin{cases}
{ M_j  }&{  i = (\mathcal{S})_j ,  }\\
1 & {\text{otherwise}},
\end{cases}
\vspace{-0.1cm}
\end{equation}
where $(\mathcal{S})_j$ is the $j$-th sampling vertex in $\mathcal{S}$.
Note that in \eqref{virtualM} there is a one-to-one correspondence between $\{ M_j \}$, $\mathcal{S}_j$ and $\{{\tilde M}_i\}$, i.e., $\{ M_j \}$ and $\mathcal{S}$ are uniquely determined by $\{{\tilde M}_i\}$. Therefore, by letting ${\bf G}_{\{ \tilde M_j \}}$  be the matrix  $\bf G$ in \eqref{cons2}  with the bit allocation n $\{ \tilde M_j \}$, the joint optimization of the sampling set and the bit allocation for a given ${\bf F}$ can be formulated as
\vspace{-0.1cm}
\begin{align}
\mathop {\max }\limits_{\{\tilde M_i\}}\, &{\rm{Tr}}\left( {{\bf{I}}_{{\mathcal S}} {\bf H} {\bf{I}}_{{\mathcal S}}^T } \left({\bf{I}}_{{\mathcal S}} {\bf X} {\bf{I}}_{{\mathcal S}}^T + {\bf G}_{\{\tilde M_i\}}\right)^{-1} \right),  \notag \\
\text{s.t.}~~~ &\sum_{i=1}^P \log_2 \tilde M_i \le \log_2M,\, \tilde{M}_i \in \mathbb{Z}^+, i\in\mySet{P},  \label{problem5}\\
& \mathcal{S} = \{j|\tilde M_j >1\}, |\mathcal{S}| \le P. \notag
\vspace{-0.1cm}
\end{align}

To solve \eqref{problem5}, we propose a greedy method starting with  ${\tilde M}_i^{(0)} = 1$ for each $i \in \mySet{N} \triangleq \{1,2,\ldots,N\}$ and select a quantizer to assign an additional level, as in Algorithm~\ref{alg:greedy}.
In the $k$-th iteration, we define the sampling set and bit allocation as ${\mathcal S}^{(k)}$ and $\{\tilde M_i^{(k)}\}$  respectively. The objective  \eqref{problem5} is now 
\begin{equation*}
\!f(\{\tilde M_i^{(k)}\})\! \triangleq \! {\rm{Tr}}\left( {{\bf{I}}_{{\mathcal S}^{(k)}} {\bf H} {\bf{I}}_{{\mathcal S}^{(k)}}^T }\! \left({\bf{I}}_{{\mathcal S}^{(k)}} {\bf X} {\bf{I}}_{{\mathcal S}^{(k)}}^T\! +\! {\bf G}_{\{\tilde M_i^{(k)}\}}\right)^{-1} \right).
\end{equation*}
%
In each iteration, we select which $\tilde{M}_i$ to increment by approximating the derivative of $f(\cdot)$ w.r.t. each $\tilde{M}_i$. The derivative w.r.t. the integer $\tilde{M}_i$ is approximated by
\begin{equation*}
\frac{\partial f(\{\tilde M_i^{(k)}\})}{\partial \tilde{M}_j} \approx
q_j^{(k)} \triangleq 
{ f(\{\tilde M_i^{(k)} + \boldsymbol{1}_{i=j}\}) - f(\{\tilde M_i^{(k)}\})}.
\end{equation*}
 At the $k$th iteration, the approximated gradient vector ${{\bf q} ^{(k)}}$ is
\begin{equation}\label{gradientq}
\nabla_{\{\tilde M_i^{(k)}\}}f(\{\tilde M_i^{(k)}\}) \approx {\bf q}^{(k)} \triangleq [  q_1^{(k)},  q_2^{(k)},..., q_N^{(k)} ]^T.
\end{equation}






Our proposed greedy method iteratively updates $\{\tilde{M}_i\}$ while accounting for the constraints on the overall number of bits and the maximal number of actively qauntized samples imposed in \eqref{problem5}.
To that aim we define a set of possible sample indices $\mySet{I}$, which is initialized to $\mySet{N}$, i.e., all possible samples. In each iteartion,  we choose one quantizer index in $\mySet{I}$ to assign an additional level, by selecting the one which we expect to contribute most substantially to the objective in \eqref{problem5},  determined by the largest entry of \eqref{gradientq}. Since the number of actively quantized samples cannot go beyond $P$, once $P$ different samples are actively quantized, we fix the set $\mySet{I}$ to only consider the indexes of these samples subsequently.
The overall process is summarized in Algorithm~\ref{alg:greedy2}.

\begin{algorithm}
\caption{Bit allocation for a given graph filter}
\label{alg:greedy2}
{\bf Input:}
Graph filter matrix $\bf F$, maximal number of samples $P$;\\
{\bf Output:}
Bit allocation $M_i$ and Sampling set $\mathcal{S}$;\\
{\bf Initialize:}
$k=0$, bit allocation $\tilde M_i^{(0)}=1$ for $\forall i \in \mySet{N}$, possible indices $\mySet{I} = \mySet{N}$, and sampling set $\mathcal{S}^{(0)} = \emptyset$.
\begin{algorithmic}[1]
\While {$\sum_i \log_2\tilde M_i^{(k)} \le \log_2M$ }
    \State Compute ${q }_i^{(k)}$ for each $i\in\mySet{I}$;
    \State Update $\tilde M_i^{(k+1)}=\tilde M_i^{(k)}+\boldsymbol{1}_{i=\arg \min_{i\in \mySet{I}} q_i^{(k)}}$;
    \If{$|\mySet{S}^{(k)}| = P$}
    \State Set possible indices $\mySet{I} = \mySet{S}^{(k+1)} = \mySet{S}^{(k)}$;
    \Else
    \State Update $\mathcal{S}^{(k+1)} = \{ i| \tilde M_i^{(k+1)}>1 \}$;
    \EndIf
    \State $k=k+ 1$;
    \EndWhile
\State \Return $\{\tilde M_i\} = \{\tilde M_i^{(k-1)}\} $, $\mathcal{S} = \mathcal{S}^{(k-1)}$.
\end{algorithmic}
\end{algorithm}

\vspace{-0.2cm}
\subsection{Alternate Optimization Algorithm Design}
\label{subsec:Overall}
\vspace{-0.1cm}
We  proceed to solve  \eqref{problem2x} for a given bit allocation $\{M_i\}$ and sampling set $\mathcal{S}$. In this case, \eqref{problem2x} is simplified to 
\begin{equation}\label{problem3x}
\mathop {\max }\limits_{F({\bf \Lambda})}\, {\rm{Tr}}\left( {{\bf{I}}_{{\mathcal S}} {\bf H} {\bf{I}}_{{\mathcal S}}^T } \left({\bf{I}}_{{\mathcal S}} {\bf X} {\bf{I}}_{{\mathcal S}}^T + {\bf G} \right)^{-1} \right).
\end{equation}
The dependency of \eqref{problem3x} on $F({\bf \Lambda})$ is encapsulated in  ${\bf H}$ and ${\bf X}$.
Recalling the definition of the frequency domain graph filter in (Definition \ref{def:filter}) and the imposed polynomial structure, i.e., ${{F}}({\bf \Lambda}) = \sum_{i = 0}^{K_0} \beta_i {\bf \Lambda}^i$, we aim to optimize \eqref{problem3x} with respect to the  coefficients $\{ \beta_i\}$. Since ${{F}}({\bf \Lambda})$ is diagonal, we do this by first optimizing its diagonal entries, denoted $\{\tilde{\lambda}_i\}_{i=1}^N$, after which we approximate them using a set
of $\{\beta_i\}_{i=1}^{K_0}$.

To solve \eqref{problem3x} with respect to 
$\{ \tilde\lambda_i\}$ 
, we consider the alternate optimization method. In particular, we optimize each ${\tilde \lambda _i}$ for fixed $\{\tilde{\lambda}_j\}_{j\neq i}$, repeating and updating this process for each $i$ until convergence is achieved.
Now, for a fixed index $i\in \mySet{N}$, it follows from the definitions of ${\bf X}$ (see Lemma~\ref{lem:Prob2x}) and ${\bf G}$ in \eqref{cons2} that  ${\bf{I}}_{{\mathcal S}} {\bf X} {\bf{I}}_{{\mathcal S}}^T + {\bf G} = { {\tilde\lambda _i^2{\bf{B}} + {\bf{C}}} }$, where the  $|\mySet{S}|\times|\mySet{S}|$  matrices ${\bf B}$ and ${\bf C}$ are independent 
of $\tilde{\lambda}_i$, and are defined as
\begin{align*}
\big({\bf{B}}\big)_{p,q} &\triangleq m_{p,q}\left(\sigma_i^2+\sigma^2\right)({\bf U})_{(\mySet{S})_p,i}({\bf U})_{(\mySet{S})_q,i}, \\
\big({\bf{C}}\big)_{p,q} &\triangleq m_{p,q}\sum\limits_{j = 1,j \ne i}^N \tilde\lambda _j^2{{\left( {\sigma _j^2 + {\sigma ^2}} \right)}}({\bf U})_{(\mySet{S})_p,j}({\bf U})_{(\mySet{S})_q,j},
\end{align*}
with $m_{p,q}\triangleq 1 +\frac{{2\eta^2 }}{{3M_p^2}}{\boldsymbol{1}}_{p=q}$. We can now write (and solve) the optimization of \eqref{problem3x} w.r.t. a single $\tilde{\lambda}_i$ as
\begin{equation}\label{problem5x}
\mathop {\max }\limits_{\tilde \lambda_i} \, {\rm{Tr}}\left( {{\bf{I}}_{{\mathcal S}} {\bf H} {\bf{I}}_{{\mathcal S}}^T } \left( \tilde\lambda _i {\bf{B}} + {\bf{C}}   \right)^{-1} \right).
\end{equation}
\begin{lemma}
\label{lem:Prob5x}
The solution to \eqref{problem5x} is $\tilde \lambda_i^* = \sqrt{\hat \lambda_i^*}$, where
\begin{equation}\label{ComputeLambda}
{\hat \lambda_i^*} = \min_{\hat \lambda_i \geq 0} \sum_{j = 1}^P \frac{ a_j }{\hat \lambda_i + {({\bf{\Lambda }}_x)}_{j,j}}.
\end{equation}
In \eqref{ComputeLambda}, we define $a_i \triangleq ({\bf A})_{i,i}^2  {({\bf{\Lambda }}_x)}_{j,j} - \sum_{n=1,n\neq i}^N \hat \lambda_n ({\bf A})_{i,j}^2$, where ${\bf A} \triangleq {\bf U}_x^T{\bf{B}}^{ - 1/2}{\bf I}_{\mathcal{S}}{\bf U}{\tilde{\bf \Lambda}}$,
with ${\bf U}_x$ and ${\bf{\Lambda }}_x$ being the eigenvectors and eigenvalues of  ${{\bf{B}}^{ - 1/2}} {{\bf{C}}}{{{\bf{B}}}^{ - 1/2}}$, respectively.
\end{lemma}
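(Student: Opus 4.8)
The plan is to diagonalize the matrix pencil in \eqref{problem5x} via a congruence transformation and thereby collapse the matrix-valued trace into a scalar rational function of the optimization variable. As a first step I would reparametrize by setting $\hat\lambda_i := \tilde\lambda_i^2 \ge 0$; this is the natural variable since the inverted matrix satisfies ${\bf I}_{\mySet{S}}{\bf X}{\bf I}_{\mySet{S}}^T + {\bf G} = \hat\lambda_i {\bf B} + {\bf C}$, so that \eqref{problem5x} becomes $\max_{\hat\lambda_i \ge 0}\mathrm{Tr}({\bf I}_{\mySet{S}}{\bf H}{\bf I}_{\mySet{S}}^T(\hat\lambda_i {\bf B}+{\bf C})^{-1})$, and $\tilde\lambda_i^*=\sqrt{\hat\lambda_i^*}$ is recovered at the end. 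The point to keep in mind throughout is that the objective matrix ${\bf I}_{\mySet{S}}{\bf H}{\bf I}_{\mySet{S}}^T$ is \emph{itself} affine in $\hat\lambda_i$, so the reduction has to track the variable in both the numerator and the inverse simultaneously.

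Next I would exploit the positive definiteness of ${\bf B}$ --- which holds because, by \eqref{cons2} and the definition of ${\bf B}$, it is a rank-one term plus a strictly positive diagonal dither contribution --- to write the congruence $\hat\lambda_i {\bf B}+{\bf C} = {\bf B}^{1/2}(\hat\lambda_i {\bf I} + {\bf B}^{-1/2}{\bf C}{\bf B}^{-1/2}){\bf B}^{1/2}$ and diagonalize ${\bf B}^{-1/2}{\bf C}{\bf B}^{-1/2}={\bf U}_x{\bf \Lambda}_x{\bf U}_x^T$, exactly the eigendecomposition named in the statement. Substituting the resulting inverse and using the cyclic property of the trace turns the objective into $\mathrm{Tr}(\tilde{\bf M}(\hat\lambda_i{\bf I}+{\bf \Lambda}_x)^{-1})=\sum_{j=1}^P (\tilde{\bf M})_{j,j}/(\hat\lambda_i + ({\bf \Lambda}_x)_{j,j})$, where $\tilde{\bf M}:={\bf U}_x^T{\bf B}^{-1/2}{\bf I}_{\mySet{S}}{\bf H}{\bf I}_{\mySet{S}}^T{\bf B}^{-1/2}{\bf U}_x$.

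The crucial algebraic step is to factor $\tilde{\bf M}$ so that the variable $\hat\lambda_i$ appears inside a single diagonal matrix. Since all the diagonal factors commute, ${\bf H}={\bf U}F({\bf \Lambda})\tilde{\bf \Lambda}^2 F({\bf \Lambda}){\bf U}^T = {\bf U}\tilde{\bf \Lambda}F({\bf \Lambda})^2\tilde{\bf \Lambda}{\bf U}^T$, so that ${\bf I}_{\mySet{S}}{\bf H}{\bf I}_{\mySet{S}}^T=({\bf I}_{\mySet{S}}{\bf U}\tilde{\bf \Lambda})F({\bf \Lambda})^2({\bf I}_{\mySet{S}}{\bf U}\tilde{\bf \Lambda})^T$ and hence $\tilde{\bf M}={\bf A}F({\bf \Lambda})^2{\bf A}^T$ with ${\bf A}={\bf U}_x^T{\bf B}^{-1/2}{\bf I}_{\mySet{S}}{\bf U}\tilde{\bf \Lambda}$, precisely the matrix in the statement, and $F({\bf \Lambda})^2=\mathrm{diag}(\hat\lambda_n)$. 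Reading off the diagonal gives $(\tilde{\bf M})_{j,j}=\hat\lambda_i({\bf A})_{j,i}^2+\sum_{n\ne i}\hat\lambda_n({\bf A})_{j,n}^2$, which I would then separate from the denominator by polynomial division: each summand equals $({\bf A})_{j,i}^2 - a_j/(\hat\lambda_i+({\bf \Lambda}_x)_{j,j})$ with $a_j$ as defined in the statement. Summing over $j$, the objective is the constant $\sum_j ({\bf A})_{j,i}^2$ minus $\sum_{j=1}^P a_j/(\hat\lambda_i+({\bf \Lambda}_x)_{j,j})$, so maximizing it over $\hat\lambda_i\ge0$ is equivalent to the scalar minimization \eqref{ComputeLambda}, after which $\tilde\lambda_i^*=\sqrt{\hat\lambda_i^*}$.

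The main obstacle I anticipate is bookkeeping the dual dependence on $\hat\lambda_i$: the temptation is to treat ${\bf I}_{\mySet{S}}{\bf H}{\bf I}_{\mySet{S}}^T$ as a fixed numerator, which would yield the wrong objective. The decisive insight is the commuting-diagonal factorization that places $F({\bf \Lambda})^2=\mathrm{diag}(\hat\lambda_n)$ in the center and lets ${\bf A}$ absorb ${\bf U}_x^T{\bf B}^{-1/2}{\bf I}_{\mySet{S}}{\bf U}\tilde{\bf \Lambda}$; once this factorization and the polynomial-division step are in place, the remaining scalar problem is a routine one-dimensional optimization (a stationary-point condition on a sum of shifted reciprocals) that need not be solved in closed form for the lemma. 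A secondary point to verify is that ${\bf B}\succ0$, i.e.\ that every sampled vertex has a nonzero $i$-th eigenvector entry, which guarantees that ${\bf B}^{1/2}$ and the whole congruence are well defined.
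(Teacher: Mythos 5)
Your proposal is correct and follows essentially the same route as the paper's proof: the congruence $\hat\lambda_i{\bf B}+{\bf C}={\bf B}^{1/2}(\hat\lambda_i{\bf I}+{\bf B}^{-1/2}{\bf C}{\bf B}^{-1/2}){\bf B}^{1/2}$, the eigendecomposition defining ${\bf U}_x,{\bf \Lambda}_x$, the cyclic-trace reduction producing ${\bf A}$, the substitution $\hat\lambda_i=\tilde\lambda_i^2$, and the polynomial-division step that turns the maximization into the minimization \eqref{ComputeLambda}. You additionally make explicit two points the paper leaves implicit — the commuting-diagonal factorization of ${\bf H}$ that places $F({\bf \Lambda})^2=\mathrm{diag}(\hat\lambda_n)$ in the center, and the requirement ${\bf B}\succ 0$ — which is a welcome tightening rather than a deviation.
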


\begin{proof}
The proof is given in Appendix~\ref{app:proof5}.
\end{proof}


Problem \eqref{ComputeLambda} is a concave-convex fractional programming problem, which can be efficiently
solved using the quadratic transform\cite{quadratic2018}. After $\tilde\lambda_i^*$ is obtained, we update the value of ${ F}({\bf \Lambda})$ and optimize the remaining $\tilde\lambda_i$.  This procedure is repeated until a pre-specified convergence condition $\epsilon$ is met, or a maximal number of iterations $T$ is reached. The resulting alternating algorithm for constrained graph filter design is summarized in Algorithm ~\ref{alg:alternative}, where $\epsilon$ is the threshold and $T$ is the maximum number of 
iterations.

\begin{algorithm}
\caption{Constrained graph filter for compression design}
\label{alg:alternative}
{\bf Input:}
Sampling set $\mathcal{S}$ and bit assignment $\{ M_i \}$;\\
{\bf Output:}
Graph filter matrix $\bf F$;\\
{\bf Initialize:}
$\tilde\lambda_i^{(0)} = 1, i \in  \mySet{N}$.
\begin{algorithmic}[1]
\For {$k=1,2,\ldots, T$}
\For {$i \in \mySet{S}$}
\State Update $\tilde\lambda_i^{(k)} = \tilde\lambda_i^*$ computed via Lemma~\ref{lem:Prob5x};
\EndFor
\If {$\big\|[\tilde\lambda_1^{(k)},\ldots,\tilde\lambda_N^{(k)}]-[\tilde\lambda_1^{(k-1)},\ldots,\tilde\lambda_N^{(k-1)}]\big\|_2^2 \le \epsilon$}
\State Break;
\EndIf
\EndFor
\State Compute $F (\Lambda) = \text{diag}\{\tilde\lambda_1^{(k-1)},\tilde\lambda_2^{(k-1)},...,\tilde\lambda_N^{(k-1)}\}$;
\State \Return ${\bf F} = {\bf U} F ({\bf \Lambda}){\bf U}^T $.
\end{algorithmic}
\end{algorithm}

Finally, the overall alternating optimization based algorithm for setting the frequency domain graph filter, sampling set, and bit allocation based on \eqref{problem2x} is summarized as Algorithm~\ref{alg:overall}. The algorithm alternates between setting the frequency-domain graph filter for fixed sampling set and bit allocation  via  Algorithm~\ref{alg:alternative}, and optimizing the sampling set and bit allocation for fixed graph filter via Algorithm~\ref{alg:greedy2}.

\begin{algorithm}
\caption{Compression system design with frequency domain graph filters}
\label{alg:overall}
{\bf Input:}
Graph Fourier basis matrix ${\bf U}^T$, maximal samples $P$;\\
 {\bf Output:}
Graph filter $\bf F$, bit allocation $\{M_i\}$, sampling set $\mathcal{S}$;\\
 {\bf Initialize:}
${\bf F}^{(0)} = {\bf I}$, $k = 0$.
\begin{algorithmic}[1]

\Repeat
\State Compute $\mathcal{S}^{(k)}$ and $\{M_i^{(k)}\}$ for ${\bf F}^{(k)}$ via Algorithm~\ref{alg:greedy2};
\State Update ${\bf F}^{(k+1)}$ for $\mathcal{S}^{(k)}$ and $\{M_i\}^{(k)}$ via Algorithm~\ref{alg:alternative};
\State $k = k+1$;
\Until
$\|{\bf F}^{(k)} - {\bf F}^{(k-1)}\| \le \epsilon$
\State \Return ${\bf F} =  {\bf F}^{(k)}$, $\{M_i\} = \{M_i^{(k-1)}\}$, $\mathcal{S} = \mathcal{S}^{(k-1)}$  .


\end{algorithmic}
\end{algorithm}

\vspace{-0.4cm}
\subsection{Discussion}
\label{subsec:Discussion}
\vspace{-0.1cm}

Our proposed compression scheme is based on jointly designing the sampling and quantization mappings to achieve a finite bit representation of the graph signal in a manner which allows its accurate reconstruction. This is achieved utilizing a sampling matrix which, instead of selecting a subset of the nodes as in \cite{SamplingTheory}, samples linear combinations of the graph signal elements using frequency domain graph filtering. The increased flexibility in the design of the sampling mapping is exploited to generate a sampling set such that the graph signal spectrum can be accurately recovered from the quantized samples. In particular, we do this by accounting for the statistical moments of the noisy graph signal via the relaxed formulation \eqref{problem2x}  when jointly tuning the sampling matrix and the quantization rule.  We focus here on implementing such compression mechanisms using either unconstrained graph filters as well as structured frequency domain graph filters. However, one can consider additional forms of graph filtering which may be preferable in some applications, e.g., vertex domain graph filters \cite{tanaka2020sampling}. Nonetheless, we leave the specialization of the proposed joint sampling and quantization mechanism to alternative forms of graph filters for future work.




Finally, we note that our design considers a limit on the overall number of bits used by the  quantizers to generate the finite bit representation. A less distorted finite bit representation can be obtained by replacing the uniform scalar quantizers with vector quantization \cite[Ch. 23]{polyanskiy2014lecture}, though such mappings tend to be highly complex, as opposed to simplified uniform quantization. An alternative approach to improve upon the existing mechanism is to  further compress $Q({\bf \Psi}{\bf x})$ via lossless source coding, e.g., entropy coding \cite[Ch. 13]{cover2012elements}, to $Q({\bf \Psi}{\bf x})$, as done in \cite{shlezinger2020uveqfed}. Doing so results in an equivalent representation whose number of bits is dictated by the entropy of $Q({\bf \Psi}{\bf x})$ (which is not larger than $\log_2 M$). This implies that the resulting finite bit representation can be further compressed, and motivates extending our analysis to constrain the entropy of $Q({\bf \Psi}{\bf x})$ rather than its bit budget, i.e., via entropy-constrained quantization \cite{gray1998quantization}, which we leave for future research.

%

	\vspace{-0.2cm}
	\section{Numerical Evaluations}
	\label{sec:Sims}
	\vspace{-0.1cm}
	
	In this section, we evaluate the performance of the proposed joint sampling and quantization methods for graph signal compression.\footnote{The source code used in this section is available online in the following link: \url{https://github.com/Pei65536/Graph_Signal_Compression.git}
	}  In Subsection~\ref{subsec:artificial}, we simulate graph signals representing measurements taken using parameters provided in GSP toolbox\cite{GSPBOX}. We then use the proposed scheme to compress graph signals representing real-world temperature data in Subsection~\ref{subsec:real}.
Finally, in Subsection~\ref{subsec:image}, we apply the proposed sampling algorithm to image compression.

\begin{figure}
\centering
\includegraphics[width=\figWidth,height=\figheight]{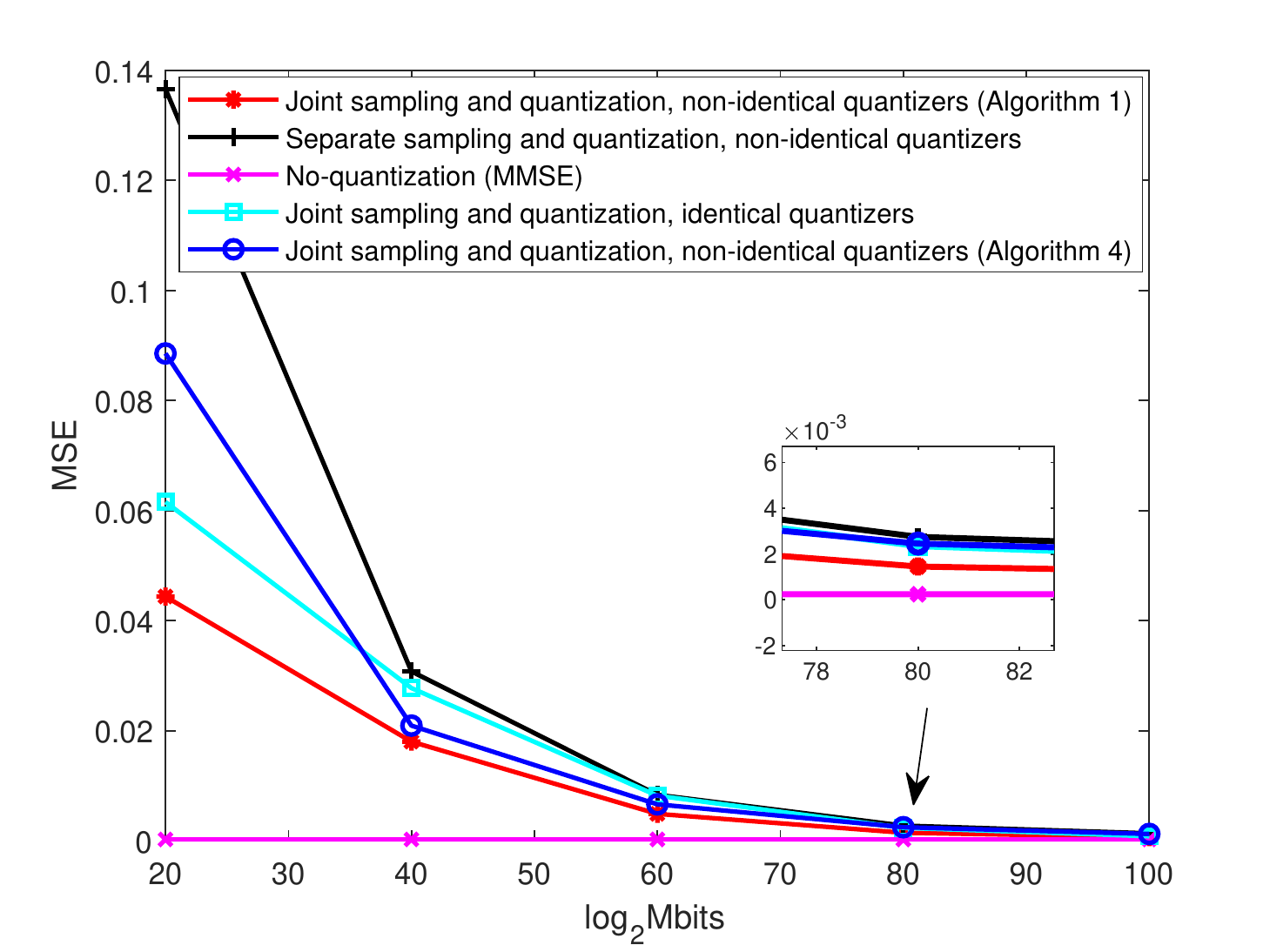}
\vspace{-0.4cm}
\caption{MSE versus the number of bits for different schemes. 
}
\label{fig:MSEvsBits}
\end{figure}

Throughput this section, we evaluate the MSE achieved when compressing the graph signal using  to the following schemes:  $(1)$ Joint sampling and quantization with unconstrained filters via Algorithm~\ref{alg:greedy}; $(2)$ Separate sampling and quantization with non-identical quantizers as in \cite{QuantizationSampling};  $(3)$ the \ac{mmse} estimate achievable of ${\bf U}_k {\bf c}$ which is achievable with infinite resolution quantization, i.e., without quantization constraints; $(4)$ Joint sampling and quantization with identical quantizers as in \cite{HardwareLimited}; and $(5)$ Joint sampling and quantization with frequency domain graph filters via Algorithm~\ref{alg:overall}.

\vspace{-0.2cm}
\subsection{Synthetic Data}	
\label{subsec:artificial}
\vspace{-0.1cm}
We begin with synthetic simulated scenario. Here, the graph signals represent measurements acquired by a sensor network comprised of $N=100$ nodes, where each node is connected with its neighbors located within its transmission range, and the bandwidth is  $K=20$. We then take $1000$ different noisy bandlimited graph signals with zero mean and covariance  ${\bf C}_{\bf x}$, in which the non-zero GFT coefficients are randomized from $\mathcal{N} (0, 1/\lambda_i)$, where $\lambda_i$ is the $i$th eigenvalue of $\bf L$.

We first numerically evaluate the MSE in compressing the graph signals versus the bit budget $\log_2M$, which takes values in  $[20, 120]$, while setting the Gaussian noise power to  $\sigma_0^2=-30~\text{dB}$. The results, depicted in Fig. \ref{fig:MSEvsBits}. Observing Fig.~\ref{fig:MSEvsBits},  we note that   Algorithm~\ref{alg:greedy} achieves better performance than using separately designed sampling and quantization as well as applying the joint design with identical quantizers of \cite{HardwareLimited}. The separate sampling and quantization mechanism of \cite{QuantizationSampling}, in which the sampling operation is carried out by mere node selection without accounting for the underlying statistics of the signal, achieves the highest \ac{mse} values here, while the proposed schemes is within a small \ac{mse} gap of less than $0.02$ from the \ac{mmse} for bit budgets as small as $\log_2 M = 40$. We also noted that  Algorithm~\ref{alg:overall} allows achieving \ac{mse} which is comparable to that achieved using unconstrained graph filters for bit budgets satisfying $\log_2 M \ge 40$.

Next, we numerically evaluate the effect of the additive noise power on the compression accuracy. To that aim, we compare the achievable MSEs of the schemes simulated in Fig.~\ref{fig:MSEvsBits}, while fixing the bit budget to be $\log_2M=60$, and letting the noise variance  $\sigma_0^2$ grow from $-30$ dB to $-20$ dB. The results, depicted in Fig.~\ref{fig:MSEvsSNR}, demonstrate that our proposed schemes maintains their improved MSE over the competing schemes for different levels of the noise variance. Our scheme is shown to be notably more robust to the presence of noise compared to the all the reference techniques, except for the MMSE estimate, which is not constrained to any bit budget.

\begin{figure}
\centering
\includegraphics[width=\figWidth,height=\figheight]{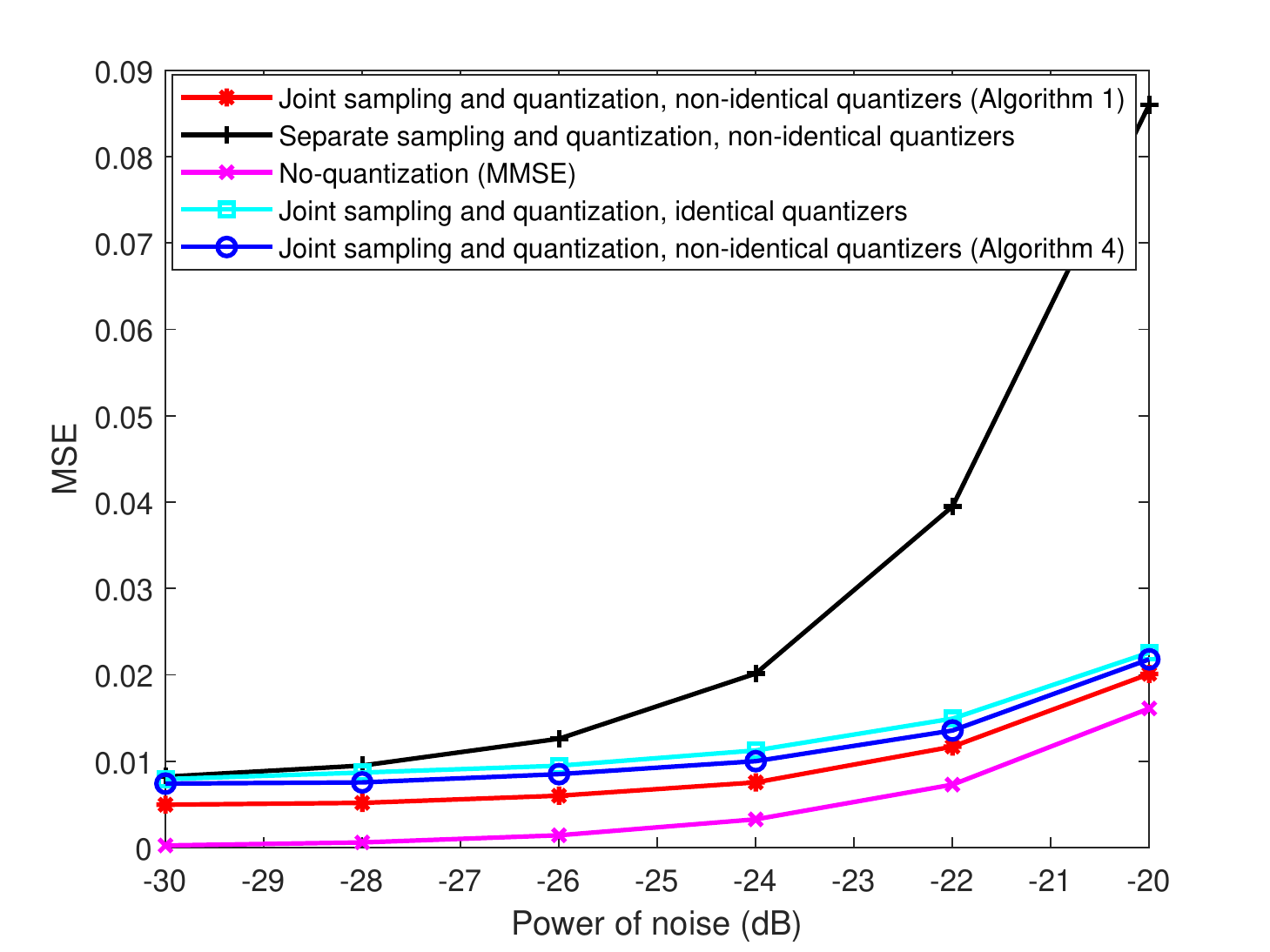}
\vspace{-0.4cm}
\caption{MSE versus the variance of noise.}
\label{fig:MSEvsSNR}
\end{figure}


%

\begin{figure}[htbp]
\centering
\subfigure[]{
\includegraphics[width=1.6in]{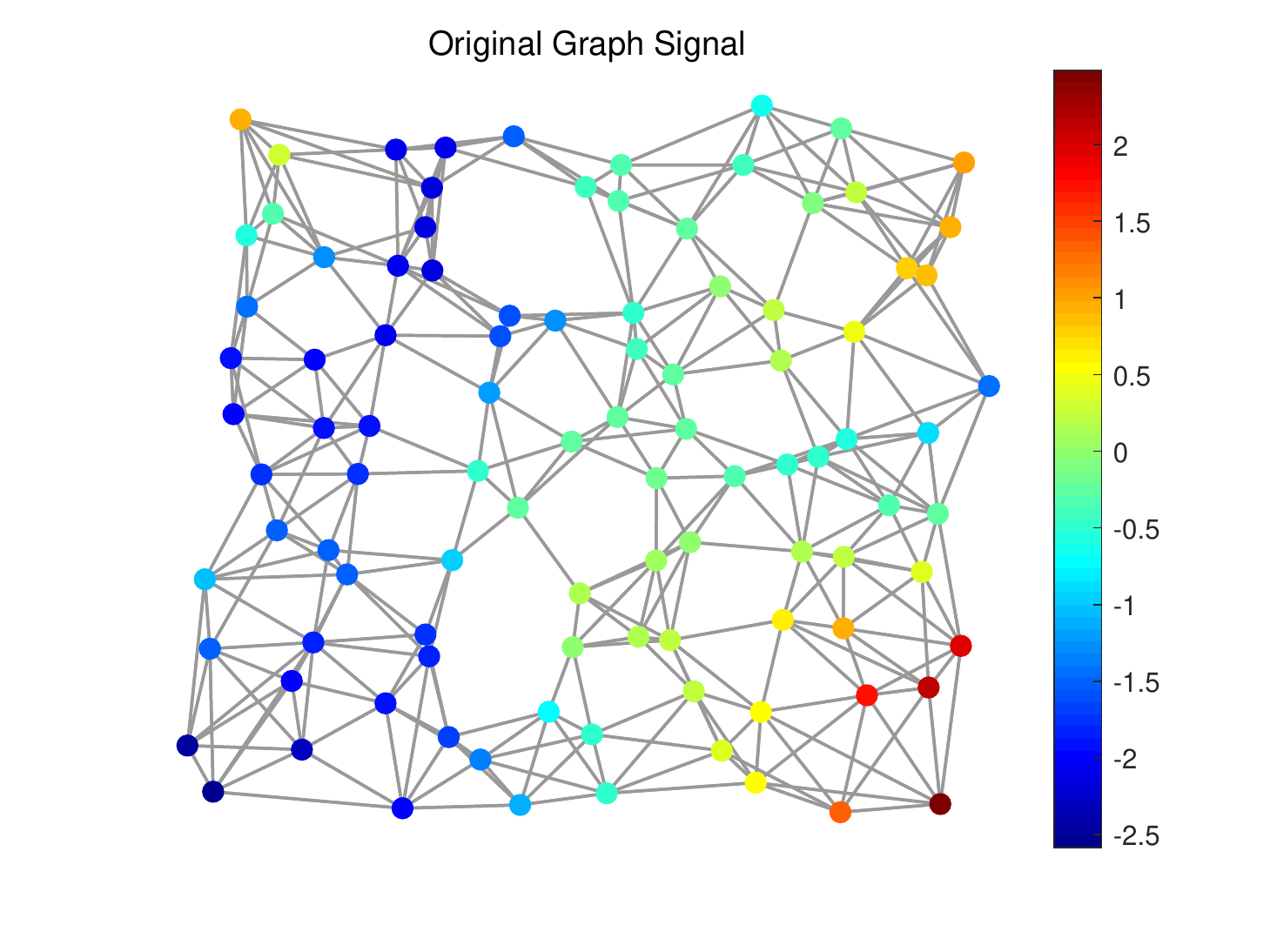}
}
\subfigure[]{
\includegraphics[width=1.6in]{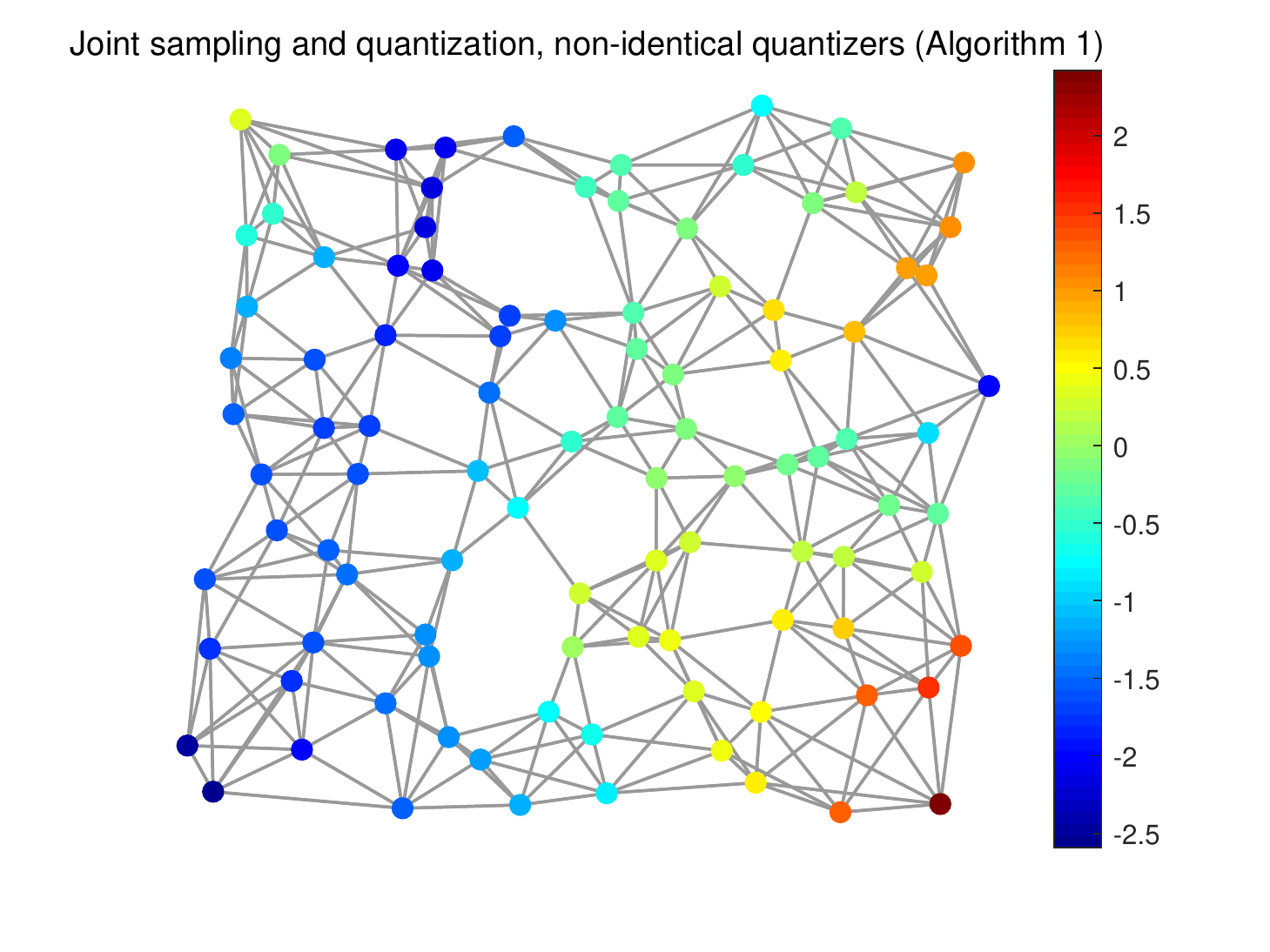}
}
\quad 
\subfigure[]{
\includegraphics[width=1.6in]{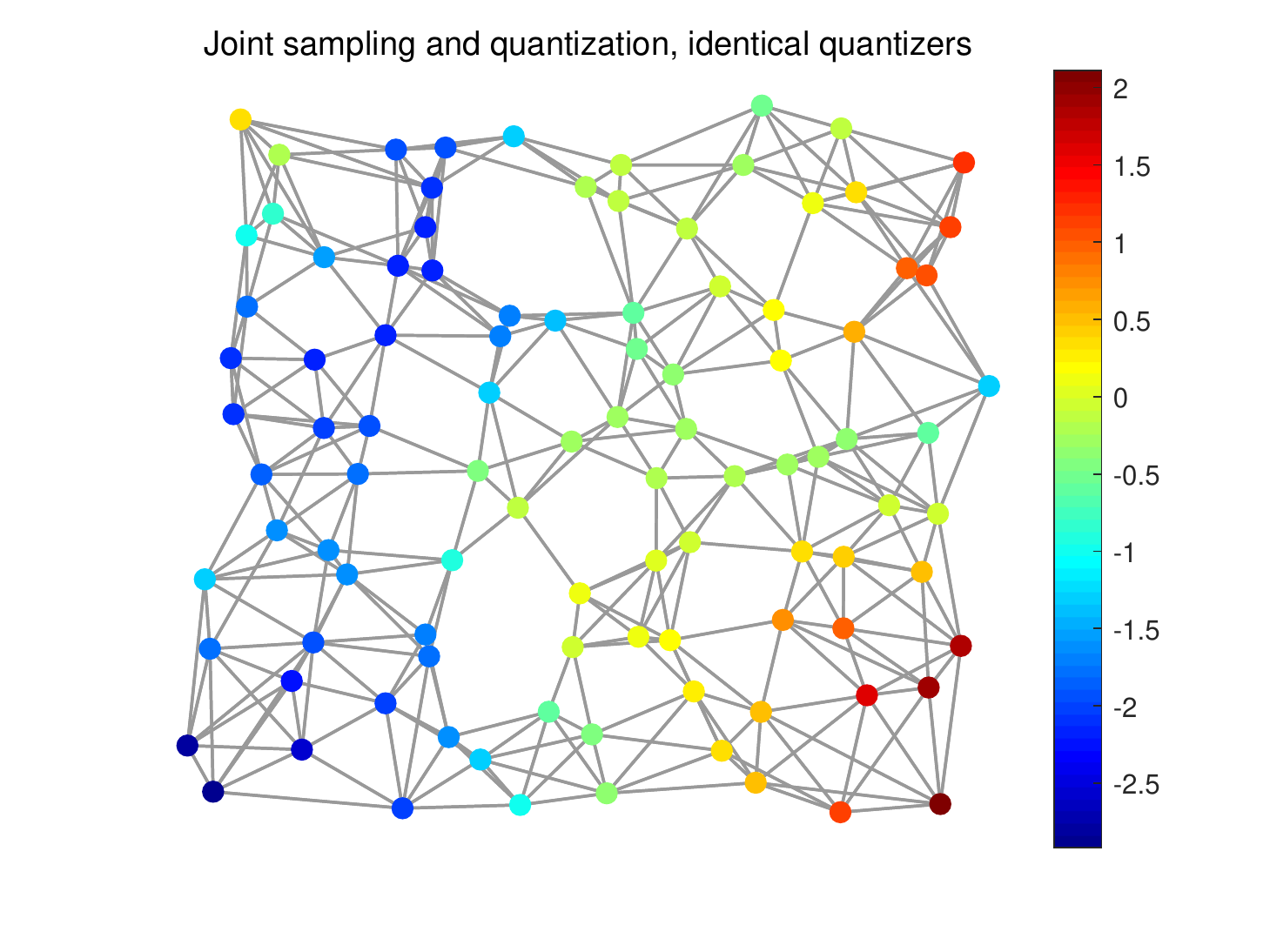}
}
\subfigure[]{
\includegraphics[width=1.6in]{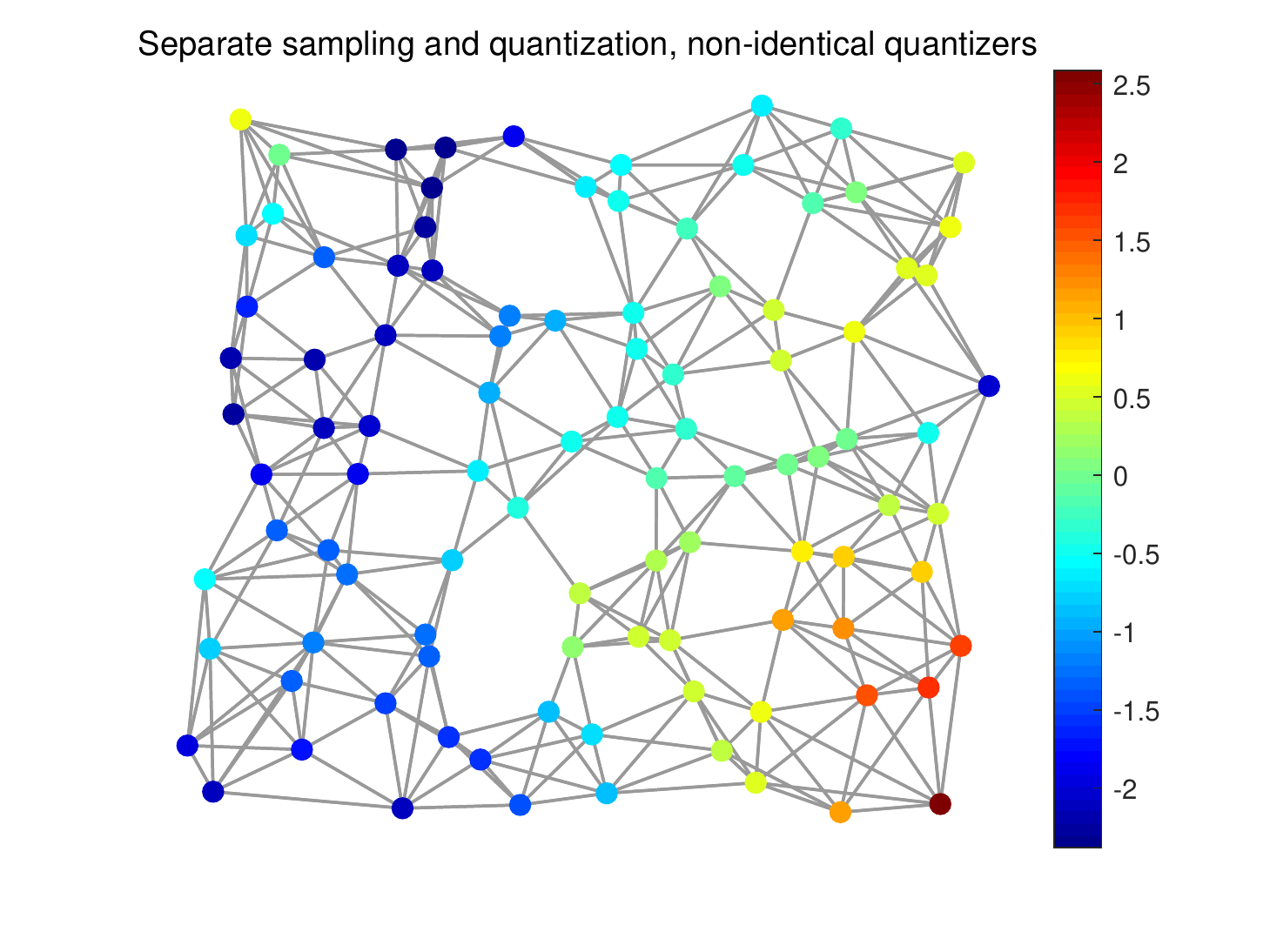}
}
\caption{Synthetic graph signal recovery experiments: $(a)$ Graph signal; $(b)$ Reconstructed signal via Algorithm~\ref{alg:greedy}; $(c)$ Reconstructed signal with separate sampling and quantization with non-identical quantizers; $(d)$ Reconstructed signal with joint sampling and quantization with identical quantizers.}
\label{fig:example}
\end{figure}

To visualize how the observed \ac{mse} gains are translated into a more faithful recovery of compressed graph signals, we depict a realization of a graph signal along with its reconstruction using scheme $(1)$ and the benchmarks $(2)$ and $(4)$ in Fig.~\ref{fig:example}. Here, the noise level is set to $\sigma_0^2 = -30$dB, while the overall bit budget is $\log_2 M = 60$.
Fig.~\ref{fig:example} demonstrates that the proposed scheme based on joint sampling and quantization allows to compress the graph signal into a compact bit representation while resulting in recovered graph filters within a close match of the original graph signal. When using identical quantizers as in \cite{HardwareLimited}, the quality of the recovery signal in Fig.~\ref{fig:example}(d) is degraded and not as smooth as it is in Fig,~\ref{fig:example}(b). This follows since the low frequency components usually occupy a relatively large range, and using identical quantizers limits the recovery of the low frequency components. For the separate sampling and quantization scheme of \cite{QuantizationSampling}, we observe in Fig.~\ref{fig:example}(c) poor recovery on some vertexes, which is caused by the amplification of quantization noise and additive Gaussian noise in restoring unsampled vertexes.

\vspace{-0.2cm}
\subsection{Non-Synthetic Graph Signals}
\label{subsec:real}
\vspace{-0.1cm}
In order to further illustrate the practicability of our  graph signal compression algorithms, we apply the proposed Algorithm~\ref{alg:greedy} and Algorithm~\ref{alg:overall}  to compress non-synthetic graph signals representing temperature data\footnote{The dateset could be downloaded in the following link: \url{http://data.cma.cn}}
The data is obtained from the China Meteorological Data Center, representing measurements taken in January of each year from 2018 to 2021 via  China's meteorological sensors whose distribution is is shown in Fig.~\ref{figstation} (a). The location of each sensor is determined by its latitude and longitude.
It is noted that temperature data distribution depends  not only on the geographical location, but is also affected by the surrounding environment and human actions. Therefore, we construct the graph structure for these measurements based on the smoothness of the data, rather than directly constructing the graph structure based on the distance threshold, e.g., as in $K$-nearest neighbours construction. An example of a resulting graph signal is illustrated in Fig.~\ref{figstation} (b). For the adopted data, the temperature fluctuation tends to a stable state. The statistical information of the temperature data shows that when the bandwidth is greater than 10, the frequency domain component is close to 0. Therefore, we assume that the bandwidth of the data is $K =  10$.

\begin{figure}
\centering
\subfigure[]{
\includegraphics[width=1.6in]{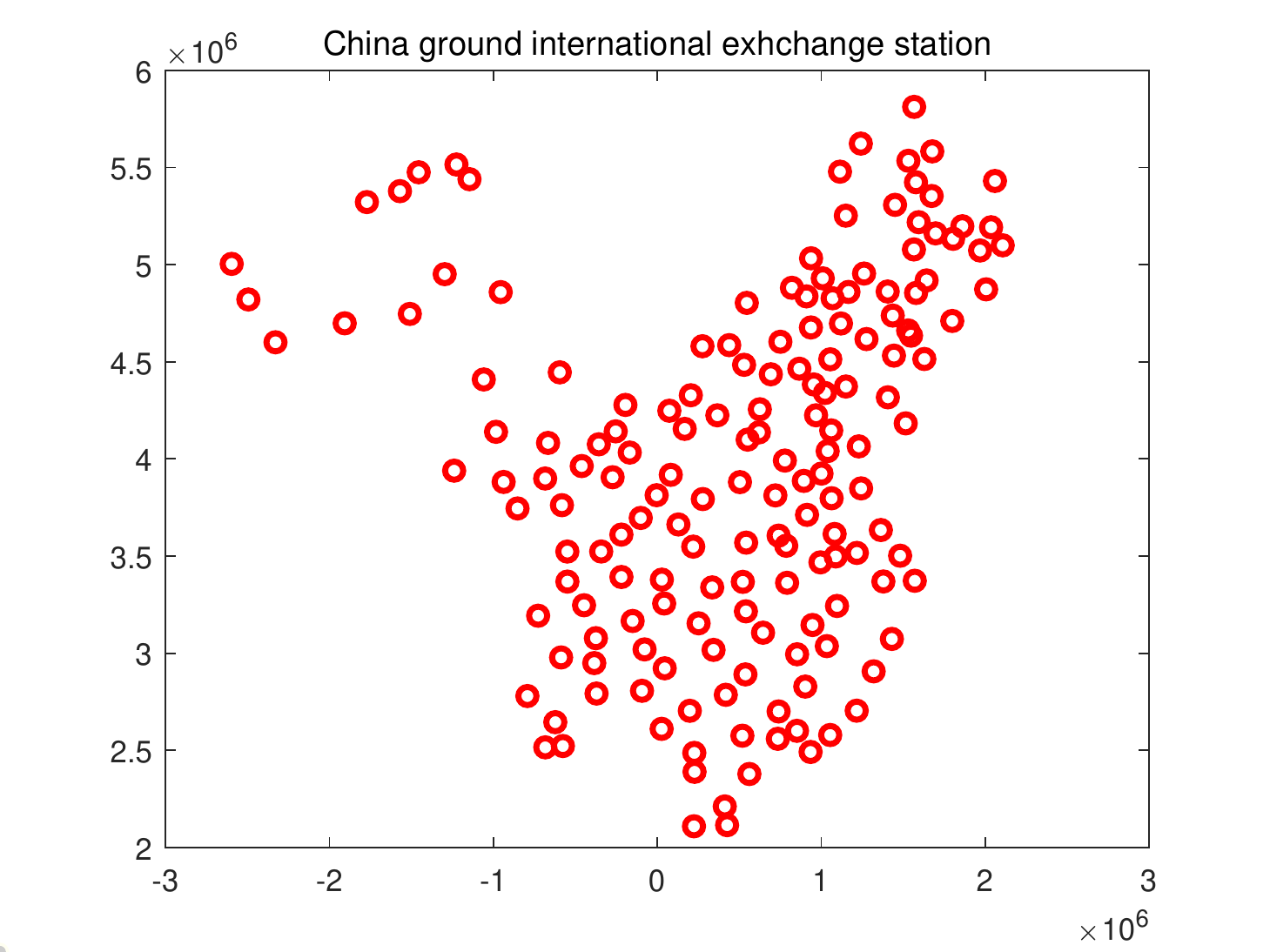}
}
\subfigure[]{
\includegraphics[width=1.6in]{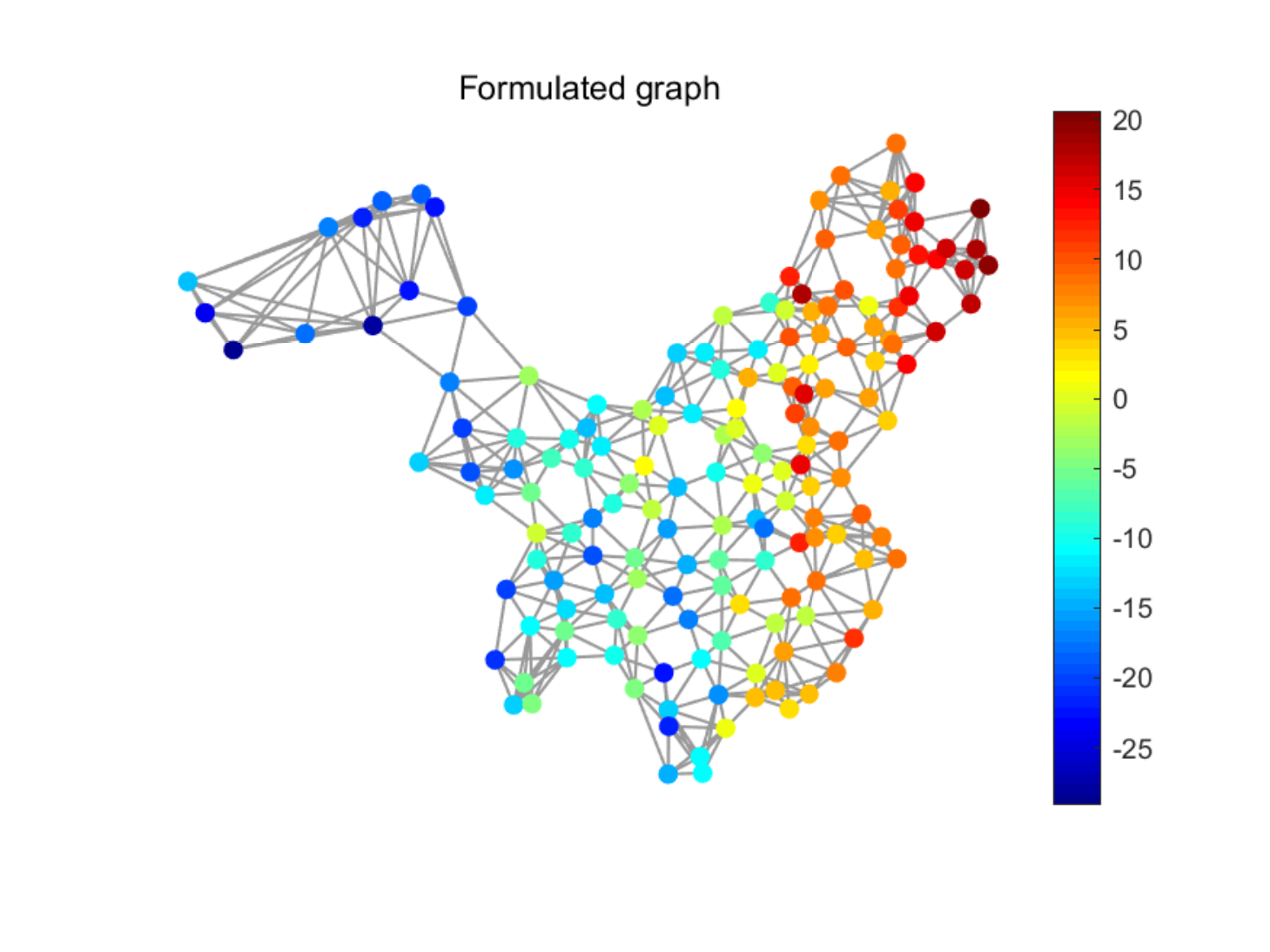}
}
\caption{(a) Geographical location of China ground sensors. (b) A temperature example in formulated graph structure.}
\label{figstation}
\end{figure}

The achievable reconstruction \ac{mse} for these non-synthetic graph signals versus the overall number of bits is depicted in Fig.~\ref{RealDate}. It is noted that in this scenario there is no noise signal that is manually introduced, and the equivalent noise is that which stems from the error in approximating the graph signal as have bandwidth of $K=10$. 
We observe in Fig.~\ref{RealDate} that the \ac{mse} trends versus the total number of bits is preserved as in the synthetic case reported in the previous subsection. Specifically, it is found that compression with unconstrained graph filters via Algorithm~\ref{alg:greedy} outperforms all considered bit-constrained benchmarks, and that similar performance is achieved with frequency domain graph filters for at least $20$ bits using Algorithm~\ref{alg:overall}.

\vspace{-0.2cm}
\subsection{Application in Image Compression}
\label{subsec:image}
\vspace{-0.1cm}
GSP is not only applied to the data with irregular structures, but can also be used to represent l signals such as images, video signals, etc. Here, we apply our proposed Algorithm~\ref{alg:greedy} to image compression. Due to the self-similarity in images, the same or similar structures are likely to recur throughout.
For simplicity, we apply regular line and grid graph topologies, but with unequal edge weights that can adapt to the specific characteristic of an image or a set of images \cite{refimage1}. 

In particular, we use the classic `Lena` image comprised $512 \times 512$ pixels, which are represented as a concatenation from $16384$ graph signals in $4\times 4$ patches. For each graph signal, we approximate its bandwidth to be $K =4$, i.e., we discard high-frequency components, which is relatively quite small. The statistical moments used in our derivation are obtained by averaging over the patches of the image. 
We compare the performance of three image compression schemes: 1) The Discrete Cosine Transform (DCT); 2) Separate sampling with quantization with non-identical quantizers \cite{QuantizationSampling}; 3) The proposed Algorithm~\ref{alg:greedy}. For consistency, we set the block size of DCT to be 
 $4 \times 4$. 
 We set $\log_2M = 64$ 
for each GFT block and  DCT block.
The results, depicted in Fig.~\ref{fig:image}, indicate that the proposed mechanism for joint sampling and quantization which considers graph signals can also be beneficial for image compression, resulting in a faithful recovery of natural images compressed into a low bit representation.


\begin{figure}
\centering
\includegraphics[width=\figWidth,height=\figheight]{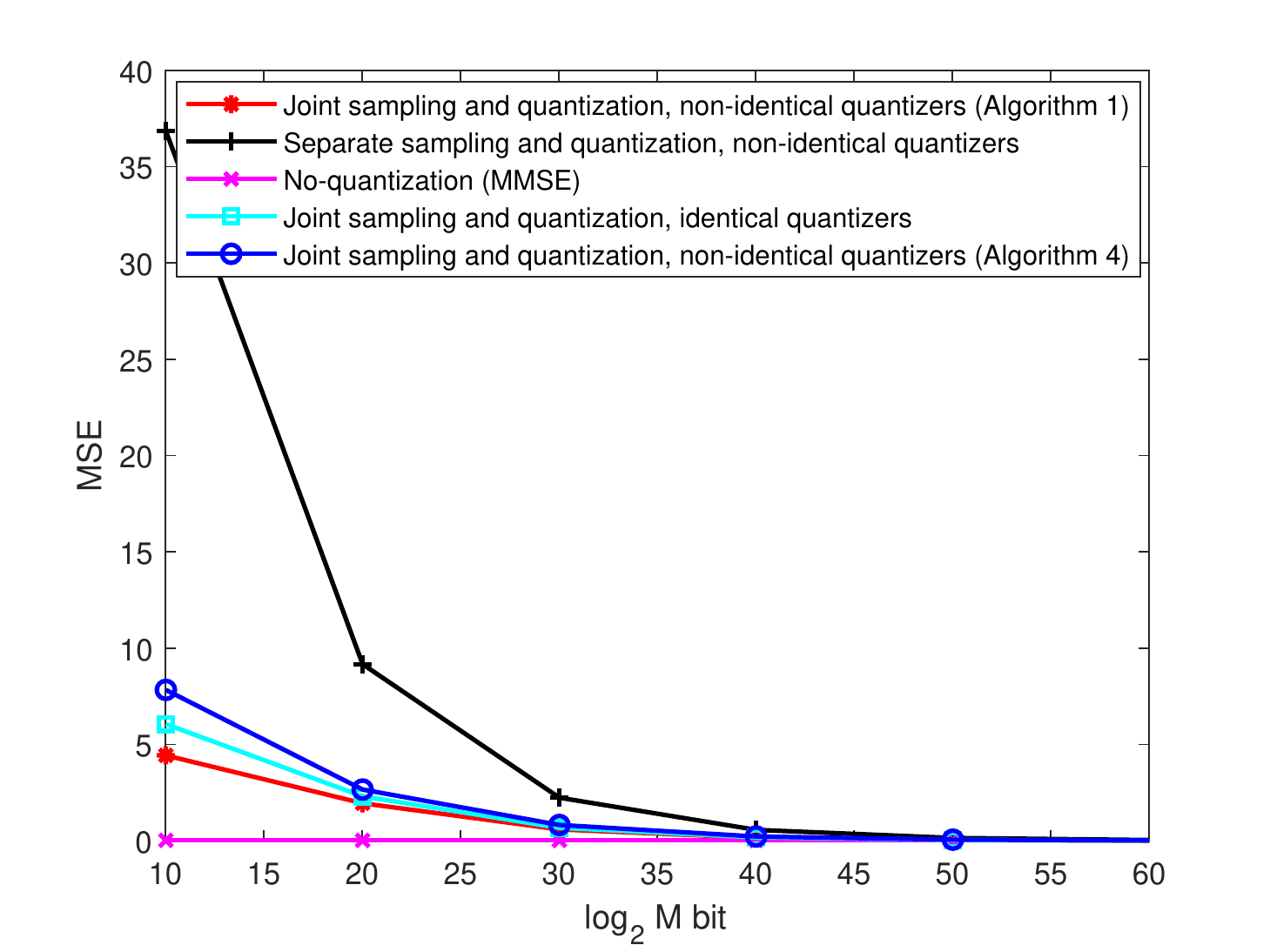}
\vspace{-0.3cm}
\caption{MSE versus the number of bits. 
}
\label{RealDate}
\end{figure}

	\vspace{-0.2cm}
	\section{Conclusions}
	\label{sec:Conclusions}
	\vspace{-0.1cm}
In this work, we studied the compression of bandlimited graph signals into a finite-length sequence of bits by joint sampling and quantization.  
Our derivation is based on the identified similarity between graph signal compression and task-based quantization, which is a framework for configuring \acp{adc} with analog pre-processing.  We formulated the  design problem, which is in general shown to be non-convex. 
We then presented a relaxed formulation considering linear recovery with non-overloaded quantizers. The relaxed problem was used  to design sampling mechanisms for a given allocation of the available bit budget among the quantizers, which in turn is used to derive a greedy bit allocation algorithm. Next, 
we proposed a sampling operator as a row-selection matrix with a graph filter, restricted to combining elements corresponding to the neighbouring nodes of the graph.
We  applied the proposed algorithm to both synthetic and non-synthetic data.  Numerical results show that the proposed scheme compresses high dimensional graph signals into a limited amount of bits while allowing their recovery with an error within a small gap from the MMSE,  achievable with infinite resolution quantziation.

\vspace{-0.2cm}
  \begin{appendix}

  \numberwithin{proposition}{subsection}
\numberwithin{lemma}{subsection}
\numberwithin{corollary}{subsection}
\numberwithin{remark}{subsection}
\numberwithin{equation}{subsection}

\vspace{-0.2cm}
  \subsection{Proof of Proposition \ref{lem:MSE}}
 \label{app:proof1}
\vspace{-0.1cm}
  We first note that the MSE achievable using the linear recovery matrix ${\bf \Phi}^*$ in Lemma \ref{lem:Digital} is given by
  \vspace{-0.1cm}
\begin{align}
\notag
&{\text{MSE}}({\bf \Psi})={\text {Tr}}\left({\bf \Gamma}^*{\bf C}_{\bf x}{\bf \Gamma}^{*T} \right) \\
&\quad -{\text {Tr}}\left({\bf \Gamma}^* {\bf C}_{\bf x} {\bf \Psi}^T \left( {\bf \Psi} {\bf C}_{\bf x} {\bf \Psi}^T + {\bf G}  \right)^{-1} {\bf \Psi} {\bf C}_{\bf x} {\bf \Gamma} ^{*T}  \right).\label{lem12}
  \vspace{-0.1cm}
\end{align}
  By discarding the constant term ${\text {Tr}}\left({\bf \Gamma}^*{\bf C}_{\bf x}{\bf \Gamma}^{*T} \right)$ in \eqref{lem12}, the  matrix ${\bf \Psi}^*$ that minimizes the MSE can be  obtained via
  \vspace{-0.1cm}
\begin{equation}\label{lemmap31}
\begin{aligned}
{{\bf{\hat \Psi }}^*} = \arg {\max _{ \hat \bf{\Psi }}}{\rm{Tr}}\left( {\hat {\bf{\Gamma }}{{\hat {\bf{\Psi }}}^T}{{\left( {\hat {\bf{\Psi }}{{\hat {\bf{\Psi }}}^T} + {\bf{I}}} \right)}^{ - 1}}\hat {\bf{\Psi }}{{\hat {\bf{\Gamma }}}^T}} \right),
\end{aligned}
  \vspace{-0.1cm}
\end{equation}
where $\hat {\bf{\Psi }}\triangleq = {{\bf{G}}^{ - 1/2}}{\bf{\Psi C}}_{\bf{x}}^{1/2},\hat {\bf{\Gamma }} \triangleq = {{\bf{\Gamma }}^*}{\bf{C}}_{\bf{x}}^{1/2}$. Now, by writing the singular value decomposition   $\hat {\bf{\Psi }} = {{\bf{U}}_\Psi }{\bf \Xi}_\Psi {\bf{ V}}_\Psi ^T$, and recalling the definition of $\bf G $, we obtain that
  \vspace{-0.1cm}
$\frac{{\bf G}_{i,i}3M_i^2}{2\eta^2} = \left({\bf \Psi} {\bf C}_{\bf x} {\bf \Psi}^T\right)_{i,i} =  \big({\bf G}^{1/2}\hat {\bf \Psi} \hat {\bf \Psi}^T {\bf G}^{1/2}\big)_{i,i}$,
which results in
\begin{equation}\label{Gcon2}
\begin{aligned}
({{\bf{U}}_\Psi } {\bf \Xi}_\Psi^2{{\bf{U}}_\Psi }^T)_{i,i} = \frac{3M_i^2}{2\eta^2}.
\end{aligned}
  \vspace{-0.1cm}
\end{equation}

\begin{figure}
\centering
\subfigure[]{
\includegraphics[width=1.6in]{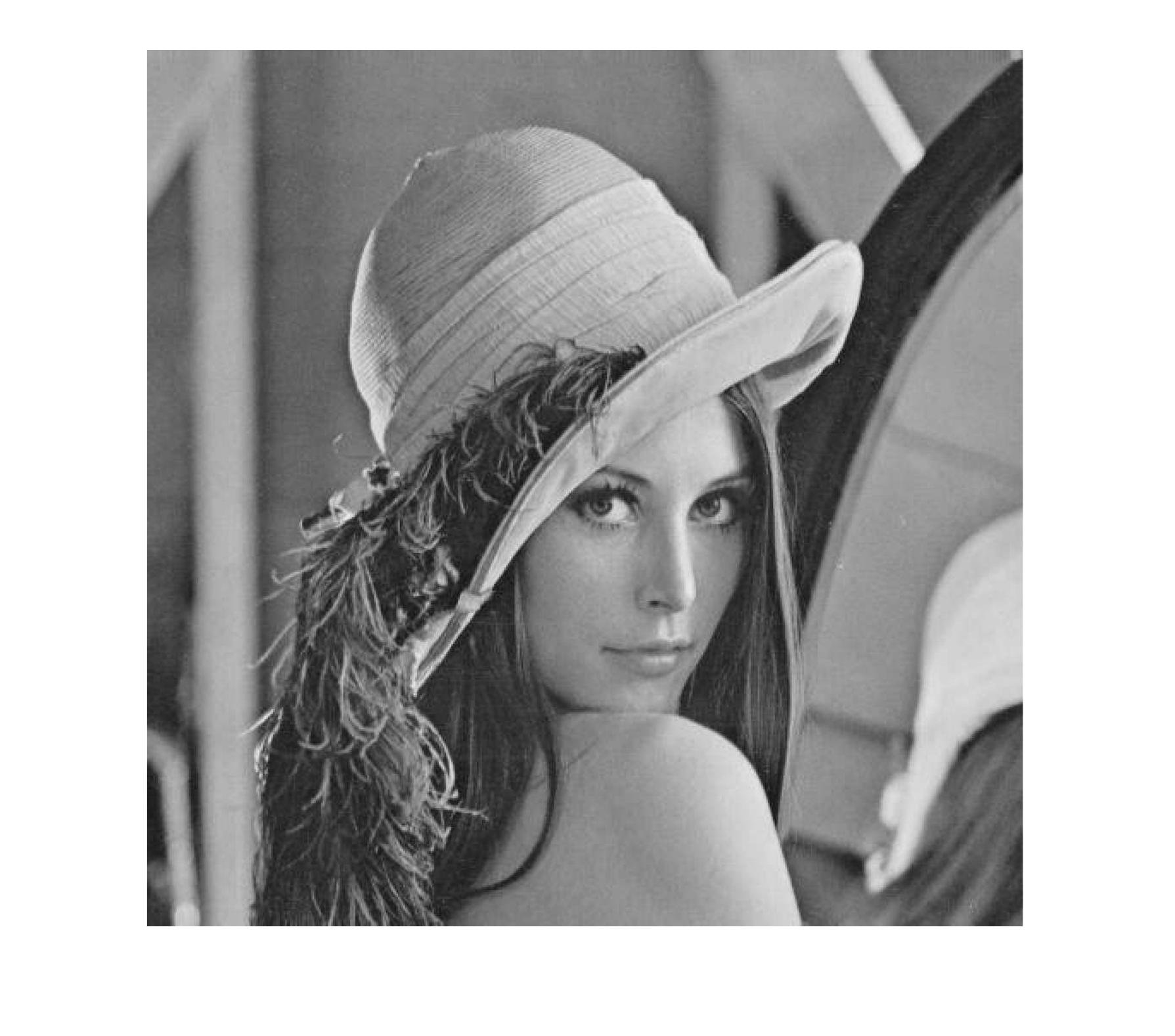}
}
\subfigure[]{
\includegraphics[width=1.6in]{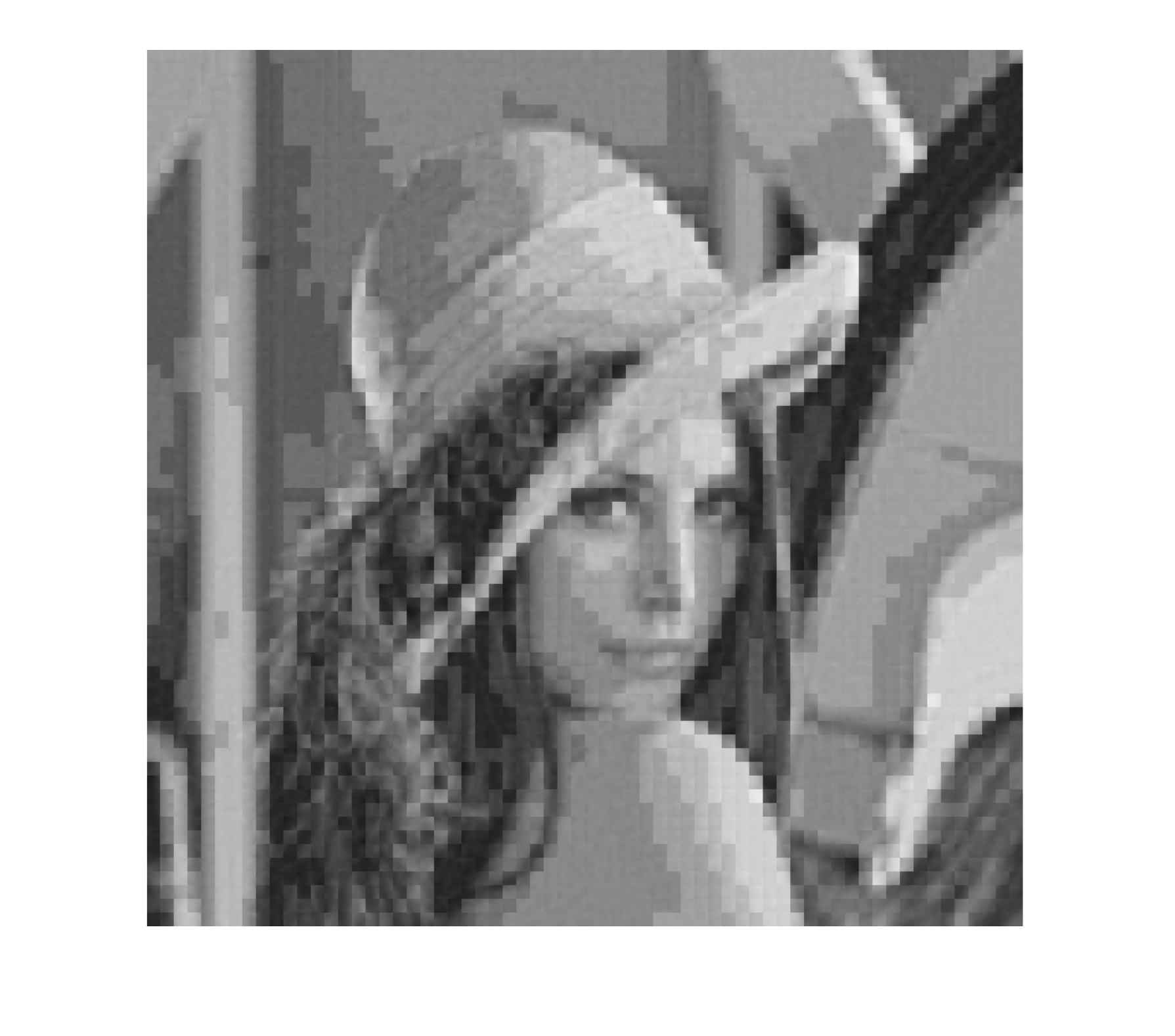}
}
\quad %
\subfigure[]{
\includegraphics[width=1.6in]{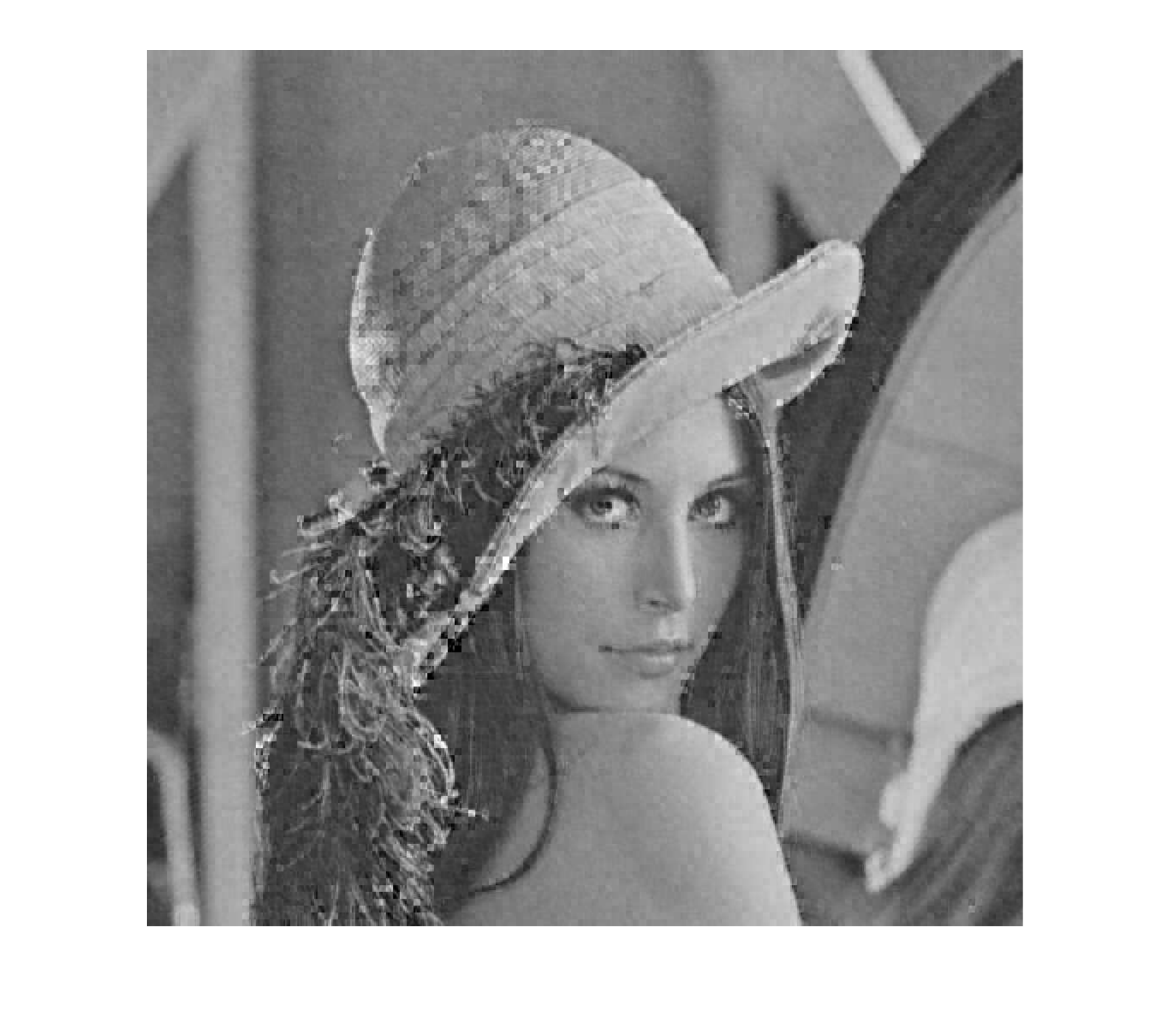}
}
\subfigure[]{
\includegraphics[width=1.6in]{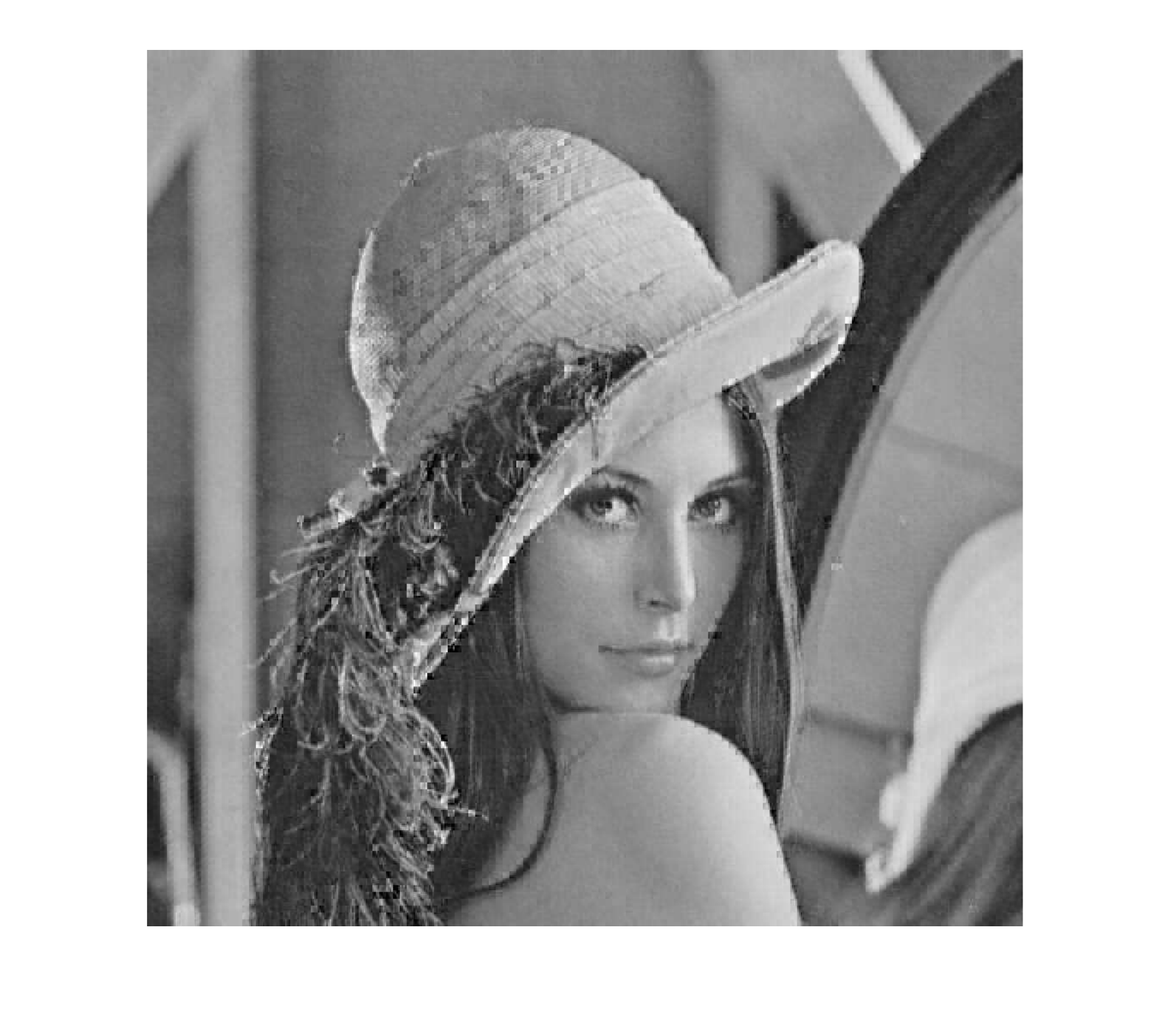}
}
\caption{Image compression. $(a)$ The original image; $(b)$ The compressed image via DCT 
$(c)$ The compressed image via joint sampling and quantization with identical quantizers; $(d)$  The compressed image via Algorithm~\ref{alg:greedy2}.}
\label{fig:image}
\end{figure}

According to majorization theory \cite{majorization}, \eqref{Gcon2} is satisfied if and only if $  \{{3M_i^2}/{2\eta^2}\} \prec   \{\alpha_i\}  $, where $\alpha_i = ({\bf \Xi}_\Psi)_{i,i}^2$. Then problem \eqref{lemmap31} is transformed into
\begin{equation}\label{OptProblem1}
\begin{aligned}
{\max _{{{\bf{V}}_{\Psi}},{\bf{\Xi }}_{\Psi}}} \ &{\rm{Tr}}\left( {{\bf{V}}_\Psi ^T{{\hat {\bf{\Gamma }}}^T}\hat {\bf{\Gamma }}{{\bf{V}}_\Psi }{{\bf{\Xi }}_\Psi^T}{{\left( {{\bf{\Xi }}_\Psi{{\bf{\Xi }}_\Psi^T} + {\bf{I}}} \right)}^{ - 1}}{\bf{\Xi }}_\Psi} \right),\\
{\rm s.t.} \ &{\rm{         }}\left\{ {\frac{{3M_i^2}}{2\eta^2}} \right\} \prec \left\{ \alpha_i \right\}, \alpha_i = ({\bf \Xi}_\Psi)_{i,i}^2.
\end{aligned}
\end{equation}

We note that ${{\bf{\Xi }}_\Psi^T}{{\left( {{\bf{\Xi }}_\Psi{{\bf{\Xi }}_\Psi^T} + {\bf{I}}} \right)}^{ - 1}}{\bf{\Xi }}_\Psi$ is a diagonal matrix with diagonal elements $\frac{\alpha_i}{\alpha_i + 1 }, i \in \mySet{P}$, which are a descending sequence since $\{\alpha_i\}$ is arranged in descending order. Then, the optimal ${\bf{V}}_\Psi$ should be the right singular vectors matrix of  ${\hat {\bf{\Gamma }}}$. By substituting \eqref{ex2} and \eqref{ex2p1} into the definition of ${\hat {\bf{\Gamma }}}$, we obtain ${\hat {\bf{\Gamma }}} = \hat{\bf \Lambda}  (\tilde {\bf \Lambda}^{-1})_K  {\bf U}^T$. Note that $\hat{\bf \Lambda}  (\tilde {\bf \Lambda}^{-1/2})_K$ is a diagonal matrix with descending diagonal elements,
i.e., ${\bf{V}}_\Psi ^T = {{\bf{U}}^T}$ .
Consequently, $ \tilde {\bf{\Gamma }} = {\bf{V}}_\Psi ^T{{\hat {\bf{\Gamma }}}^T}\hat {\bf{\Gamma }}{{\bf{V}}_\Psi }$ is a diagonal matrix, whose diagonal elements arrange in a descending order.
The remaining optimization of \eqref{OptProblem1} is thus
\vspace{-0.1cm}
\begin{align}\label{OptProblem2}
\mathop {\max }\limits_{{\alpha _i}} &\sum_{i=1}^P\frac{{{\lambda _{{\bf{\Gamma }},i}^2}{\alpha _i}}}{{{\alpha _i} + 1}},\\
{\rm s.t}.&{\rm{    }}\sum\limits_{i = 1}^P {{\alpha _i}}  = \sum\limits_{i = 1}^P {\frac{{3M_i^2}}{2\eta^2}}, \,
\sum_{i=1}^p  { {{\alpha _i}} }  \ge  \sum_{i=1}^p{ {\frac{{3M_i^2}}{2\eta^2}} },  \forall p \in \mySet{P}, \notag
\vspace{-0.1cm}
\end{align}
where $\lambda _{{\bf{\Gamma }},i}$ is the $i$-th singular value of ${\hat {\bf{\Gamma }}}$. The constraints in \eqref{OptProblem2} are the expanded form of $\left\{ {\frac{{3M_i^2}}{2\eta^2}} \right\} \prec^w \left\{ \alpha_i \right\}$. We note that \eqref{OptProblem2} is convex.
Once $\{\alpha_i\}$ are obtained, the resulting sampling matrix is ${\bf{\Psi }}^* = {{\bf{G}}^{ 1/2}} {{\bf{U}}_\Psi }{\bf \Xi}_\Psi {\bf{ V}}_\Psi ^T  {\bf{ C}}_{\bf{x}}^{-1/2}$. While the entries of ${\bf G}$ depend on ${\bf{\Psi }}$, we can still obtain the sampling matrix by letting $g_i = {\bf G}_{i,i}^{1/2}$, and substituting the expression of ${\bf{\Psi }}^*$ into the definition of $\bf G$. This results in
$\frac{2 \eta^2 \left({{\bf{ \Psi }}} {\bf{ C}}_{\bf{x}} {{\bf{ \Psi }}^T}\right)_{i,i}}{3 M_i^2} = g_i^2$, and thus $\big({{\bf{G}}^{ 1/2}} {{\bf{U}}_\Psi }{\bf \Xi}_\Psi^2 {{\bf{U}}_\Psi } ^T  {{\bf{G}}^{ 1/2}}\big)_{i,i}  = \frac{g_i^2 {3 M_i^2}}{2 \eta^2}$, which holds for any $g_i$ by \eqref{Gcon2}. We can thus compute ${\bf \Psi}$ with ${\bf G} = {\bf I} $, obtaining
${\bf \Psi}^*= {{\bf{U}}_\Psi }{\bf \Xi}_\Psi   { \tilde{\bf \Lambda}}^{-1/2}   {\bf{ U}} ^T$, concluding the proof.
\qed

\vspace{-0.2cm}
\subsection{Proof of Theorem \ref{thm:MSE}}
 \label{app:proof2}
\vspace{-0.1cm}

Lemma~\ref{thm:MSE} characterizes the \ac{mse} minimizing sampling operator ${\bf \Psi}$ for fixed bit allocation $\{M_i\}$. To optimize $\{M_i\}$, we focus on the case where $\{M_i\}$ are not limited to be integer, resulting in the following formulation
\vspace{-0.1cm}
\begin{align}\label{OptimalProblem}
\mathop {\max }\limits_{\{ M_i \}, \{\alpha _i\}} &\sum_{i=1}^P\frac{{{\lambda _{{\bf{\Gamma }},i}^2}{\alpha _i}}}{{{\alpha _i} + 1}},\\
{\rm s.t.} &  {\left\{ {\frac{{3M_i^2}}{2\eta^2}} \right\}}  \prec  {\left\{ {{\alpha _i}} \right\}}, \,\sum_{i=1}^P \log_2 M_i \le \log_2 M, \,  M_i \ge 1. \notag
\vspace{-0.1cm}
\end{align}
Problem \eqref{OptimalProblem} is non-convex and difficult to solve, Hence, we first propose the following lemma to simplify \eqref{OptimalProblem}.
\begin{lemma}
\label{lem:opt1}
The solution of \eqref{OptimalProblem} satisfies
 ${\frac{{3M_i^2}}{2\eta^2}}  \!= \! {{\alpha _i}}$, $\forall i \in \mySet{P}$.
\end{lemma}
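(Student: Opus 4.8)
The plan is to remove the bit allocation $\{M_i\}$ from \eqref{OptimalProblem} by exploiting the fact that its objective depends only on $\{\alpha_i\}$ and is strictly increasing in each $\alpha_i$, since $\alpha/(\alpha+1)$ is increasing. First I would introduce the shorthand $t_i \triangleq \tfrac{3M_i^2}{2\eta^2}$, so that the majorization constraint becomes $\{t_i\}\prec\{\alpha_i\}$, the lower bound $M_i\ge 1$ becomes $t_i \ge \tfrac{3}{2\eta^2}$, and the bit budget $\sum_{i} \log_2 M_i \le \log_2 M$ becomes the product constraint $\prod_{i=1}^{P} t_i \le \bar M$ with $\bar M \triangleq \left(\tfrac{3}{2\eta^2}\right)^{P} M^2$. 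Because the objective does not involve $\{t_i\}$, a candidate $\{\alpha_i\}$ is feasible if and only if there exists an admissible $\{t_i\}$ with $\{t_i\}\prec\{\alpha_i\}$ and $\prod_i t_i \le \bar M$.

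The key step is to express this feasibility condition through $\{\alpha_i\}$ alone. Since the product $\prod_i t_i$ is Schur-concave (equivalently $\sum_i \log t_i$, as $\log$ is concave), the majorization $\{t_i\}\prec\{\alpha_i\}$ implies $\prod_i t_i \ge \prod_i \alpha_i$, with equality if and only if $\{t_i\}=\{\alpha_i\}$ (both sequences being sorted identically). Hence the smallest product attainable by an admissible $\{t_i\}$ majorized by $\{\alpha_i\}$ equals $\prod_i \alpha_i$ and is attained at $t_i=\alpha_i$. Therefore $\{\alpha_i\}$ is feasible precisely when $\prod_i \alpha_i \le \bar M$, and \eqref{OptimalProblem} collapses to maximizing the increasing objective over $\{\alpha_i\}$ subject to $\prod_{i=1}^P \alpha_i \le \bar M$ and $\alpha_i \ge \tfrac{3}{2\eta^2}$.

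To conclude I would show that the product constraint is active at any optimum $\{\alpha_i^*\}$: were $\prod_i \alpha_i^* < \bar M$, one could increase the largest entry $\alpha_1$ slightly, which preserves the descending order and all lower bounds yet strictly increases the objective, contradicting optimality. Thus $\prod_i \alpha_i^* = \bar M$. The associated optimal allocation $\{t_i^*\}$ then satisfies both $\prod_i t_i^* \le \bar M = \prod_i \alpha_i^*$ and, by the Schur-concavity inequality above, $\prod_i t_i^* \ge \prod_i \alpha_i^*$; hence $\prod_i t_i^* = \prod_i \alpha_i^*$, and the equality case forces $t_i^* = \alpha_i^*$ for every $i\in\mySet{P}$, i.e. $\tfrac{3(M_i^*)^2}{2\eta^2}=\alpha_i^*$. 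The main obstacle is establishing the majorization inequality $\prod_i t_i \ge \prod_i \alpha_i$ for $\{t_i\}\prec\{\alpha_i\}$ together with its strict equality condition, which is exactly the Schur-concavity of the product; everything else reduces to observing that the objective is monotone in $\{\alpha_i\}$ and that the product budget must be saturated at the optimum.
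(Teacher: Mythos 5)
Your argument is correct and rests on the same two pillars as the paper's proof: the Schur-concavity of $\sum_i\log(\cdot)$, which under the majorization constraint gives $\sum_i\log\frac{3M_i^2}{2\eta^2}\ge\sum_i\log\alpha_i$, and the monotonicity of the objective of \eqref{OptimalProblem} in each $\alpha_i$. The only cosmetic difference is that the paper turns these ingredients into an explicit improving perturbation (rescaling $\alpha_i\mapsto\alpha_i^{\kappa_0}$ so that the bit budget is met with $t_i=\alpha_i$), whereas you argue via saturation of the budget at the optimum together with the equality case of the Schur-concavity inequality; both routes are essentially the same proof.
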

\begin{proof}
We first recall the following result from \cite{majorization}. 
Let $\phi: \mathcal{D}_n  \to \mathbb{R}$ be a real-valued function continuous on $\mathcal{D}_n \triangleq \{ {\bf x} \in \mathbb{R}^n : x_1 \ge ... \ge x_n  \}$ and continuously differentiable on the interior of $\mathcal{D}_n $. Then $\phi$ is Schur-convex (Schur-concave) on $\mathcal{D}_n$ if and only if $\frac{\partial\phi(\bf x)}{\partial x_i}$ is decreasing (increasing) in $i = 1,...,n$.

Assume that  $\big\{ \acute{M}_i \big\}$ is the optimal bit allocation and $\big\{ \acute{\alpha} _i \big\}$  is the corresponding diagonal elements satisfying $\{ 3\acute{M}_i^2/2\eta^2 \} \prec \{  \acute{\alpha} _i\} $. Denote $M^{(1)} = \sum_{i=1}^P \log_2 3\acute{M}_i^2/2\eta^2$ and $M^{(2)} = \sum_{i=1}^P \acute{\alpha} _i$. Since $\sum \log_2 M_i$ is Schur-concave with respect to $M_i$,  then $M^{(1)} \ge M^{(2)}$. Thus, one can propose another $\big\{ \grave{M}_i \big\}$ and $\big\{ \grave{\alpha }_i \big\} $ satisfying $3\grave{M}_i^2/2\eta^2 = \grave{\alpha }_i = \acute{\alpha} _i ^{\kappa_0}$, where $\kappa_0 = M^{(1)}/M^{(2)} \ge 1$.
Clearly, $\big\{ \grave{M}_i \big\},\big\{ \grave{\alpha }_i \big\} $ are feasible for \eqref{OptimalProblem}, with better \ac{mse}  than  $\big\{ \acute{\alpha }_i \big\} $, proving the lemma.
\end{proof}

By Proposition~\ref{lem:MSE}, the sampling matrix should be
\vspace{-0.1cm}
\begin{equation}\label{expressPsi3}
\begin{aligned}
{\bf \Psi}^*= {{\bf{U}}_\Psi }{\bf \Xi}_\Psi   { \tilde{\bf \Lambda}}^{-1/2}   {\bf{ U}} ^T
\end{aligned}
\vspace{-0.1cm}
\end{equation}
Substituting Lemma~\ref{lem:opt1} into \eqref{expressPsi3}, the matrix ${\bf{U}}_\Psi$ becomes the identity. Hence, the sampling matrix would be a diagonal matrix multiplied by the unitary  ${\bf{ U}} ^T$. The proof of Proposition~\ref{lem:MSE}  reveals that the diagonal matrix ${\bf G}^{-1/2}$ which is leftmost in the expression for ${\bf \Psi}^*$ could be arbitrarily set. As a result, the \ac{mse} is minimized by the sampling matrix ${\bf \Psi}^* = {\bf{ U}}_K ^T$.

To determine $\{M_i\}$, we write problem \eqref{OptimalProblem} as
$\mathop {\max }\limits_{\{ M_i \}} \sum_{i=1}^P \frac{{{\lambda _{{\bf{\Gamma }},i}^2}{3M_i^2}}}{{{3M_i^2} + 2\eta^2}}$, s.t.  $\sum_{i=1}^P \log_2 M_i \le \log_2 M, \,  M_i \ge 1,   \,  M_{i} \ge M_{i+1}$,
%
which is still  non-convex. To tackle this, we  obtain a local  solution and  prove its optimality.
According to Lagrange Duality theorem,
we introduce the  multipliers $\beta^* \in \mathbb{R}$  and   $\nu^* \in \mathbb{R}^P$. While this procedure does not impose monotonicity on $\{M_i\}$, it is implicitly maintained since ${\lambda _{{\bf{\Gamma }},i}}$ are arranged in a descending order. For simplicity, we denote $m_i = M_i^2$, and write the KKT constraints as $\sum_{i=1}^P \log_2 m_i^* \le  2\log_2 M$, with $m_i^* \ge 1$, $\nu_i^*(m_i^*-1) = 0$ and $\frac{{6{\lambda _{{\bf{\Gamma }},i}^2}{\eta^2}}}{\left({{3m_i} + 2\eta^2}\right)^2}-\nu_i^*-\frac{\beta^*}{m_i} = 0$, for each $ i\in\mySet{P}$. Eliminating the slack variable $\nu_i^*$, we obtain that $
\big(\frac{\beta^*}{m_i}  -\frac{{6{\lambda _{{\bf{\Gamma }},i}^2}{\eta^2}}}{\left({{3m_i} + 2\eta^2}\right)^2} \big)(m_i^*-1) = 0$ should hold.

We focus on the equation
$\frac{\beta^*}{m_i}  - \frac{{6{\lambda _{{\bf{\Gamma }},i}^2}{\eta^2}}}{\left({{3m_i} + 2\eta^2}\right)^2} = 0$,
which is solvable if and only if ${\beta^*} \le {\lambda _{{\bf{\Gamma }},i}^2}/4$.
The local optimal $m_i$ is obtained here by \eqref{algorithms1}.
  \ifFullVersion
It is noted that \eqref{optM} is a series of quadratic functions. Here we choose the larger one of the two solutions of each quadratic function as the local optimal solution. Fig~\ref{fig:Diagram} shows the explain of the reason why we choose the larger one rather than the less one. Three curves of different colors represent the possible values of $\beta$ corresponding to different $\lambda_{\Lambda,i}$,  $\beta_1$ and $\beta_2$ represent different sizes of $\beta$.
As we have revealed before, $M_i \ge M_{i+1}$ would be implicitly maintained for the optimal $M_i^*$, which corresponds to $m_i^*$ here. Hence, the element of $\{ m_i^* \}$ should choose the larger one.

\begin{figure}
\centering
\includegraphics[width=3.1in, height=2.5in]{fig/fig3a.eps}
\vspace{-0.45cm}
\caption{Diagram of the optimal solution.}
\label{fig:Diagram}
\end{figure}
  \fi
The optimal bit allocation is expressed by the  dual coefficient $\beta^*$, which
  satisfies the sum bits constraint $f(\beta) = \sum_{i=1}^P \log_2 m_i^* \le  2\log_2 M$. Similar as the proof of lemma \ref{lem:opt1}, we obtain that $f(\beta) = 2\log_2 M$ should be satisfied. Further, $f(\beta)$ is a decreasing  function with respect to $\beta$, i.e., the optimal $\beta^*$ is the only one.
\qed

\vspace{-0.2cm}
\subsection{Proof of Corollary \ref{co:eq}}
 \label{app:co1}
\vspace{-0.1cm}

We first focus on the necessary condition. Due to the definition of $M_a$, $(M_a)^P := \left(\lfloor M^{1/P} \rfloor \right)^P \le M$, and $M \ge M_a^P$ should be obviously satisfied. We then assume that $M \ge M_a^P + M_a^{P-1}$, there exists another scheme of bit assignment as $[M_a + 1, M_a, \ldots, M_a]$ obtained by Algorithm~\ref{alg:greedy}, which against the previous assumption. The range of $M$ is thus determined, which corresponds to the necessity of \eqref{eqn:coeq1}.

We note that $\lambda_{\tilde \Gamma,i}$ is a descending sequence, thus  $g_i(M_i) \ge g_j(M_j)$ for $M_i = M_j$, and $g_i(M_i) > g_i(M_i+1)$ for $M_i \ge 1$.
When $M_i = M_j = M_a$ for each $i\in\mathcal{P}$, we define $k_0 = (M_a)^P-1$, as we discussed above, the state $M_i^{(k_0)}$ should be $[M_a, M_a, \ldots, M_a]$ and  $(k_0+1)$-th iteration should select $P$-th quantizer rather than others, which means that $g_P(M_a - 1) > g_i(M_a)$ for each $i\in \mathcal{P}$, and then simplified as $g_P(M_a - 1) > g_1(M_a)$. The necessity of \eqref{eqn:coeq2} is thus proven.

For the sufficient case, we assume both \eqref{eqn:coeq1} and \eqref{eqn:coeq2} are satisfied.
Greedy-based method is applied in Algorithm~\ref{alg:greedy}, we try to describe the bit assignment process.  In $k$-th iteration, the particular index is selected by $e= \arg \min_i g_i(M_i^{(k)})$.
For example, we assume that in $k_1$-th iteration, the state of $M_1^{(k_1-1)}$ change from $M_a-1$ to $M_a$, i.e., $M_1^{(k_1)} = M_a$. Here the other bit assignment should satisfy that $M_i{(k_1)} < M_a$ for each  $i=2,\ldots, P$ since $g_i(M_i) \ge g_j(M_j)$ for $M_i = M_j$.
When \eqref{eqn:coeq2} is satisfied, it holds that $g_1(M_a)  <  g_P(M_a -1) <  g_{P-1}(M_a -1) \cdots <  g_{2}(M_a - 1)$. This, before the state of $M_P$ change from $M_a-1$ to $M_a$, the state of $M_1$ would not be changed. Further,
we assume that in $k_P$-th iteration, the state of $M_P^{(k_P-1)}$ change from $M_a-1$ to $M_a$, i.e., $M_P^{(k_P)} = M_a$. Consequently, $g_{P-1}(M_a)<  g_{P-2}(M_a)  \cdots  <   g_1(M_a)  <  g_P(M_a -1)$
As a result, after the $k_P$-th iteration, $M_i  = M_a$ for each $i=1,\ldots, P$. Since \eqref{eqn:coeq1} limit the upper bound of $M$, the greedy-based algorithm would be ended here due to the termination condition, concluding the proof.
\qed

\vspace{-0.2cm}
\subsection{Proof of Corollary \ref{co:num}}
 \label{app:co2}
\vspace{-0.1cm}
We first focus on the case when sum of bits is limited, which may lead to $P < K$.
We would complete this proof in two parts, in which we separately propose that
\vspace{-0.1cm}
\begin{align}
P<i \ \ \text{if} \ \ \log_2M < l_{i} + 1, \tag{PartI}\\
P \ge i \ \  \text{if} \ \  \log_2M \ge l_{i} + 1.
\vspace{-0.1cm} \tag{PartII}
\end{align}

Define $P^*$ as the 
number of quantizers obtained by Algorithm~\ref{alg:greedy}, and its bit allocation as $\{M_1^*, M_2^*,....,M_{P^*}^*\}$. For any  $P \le P^*$, we assume that $P$-th quantizer is added in the $k_0$-th iteration, i.e., $P= \arg \max_{k_0} {\bf g}^{({k_0})}$, thus
\vspace{-0.1cm}
\begin{equation}
\begin{aligned}
g_P(1) \ge g_j(M_j^{({k_0})}) \ge  g_j(M_j^*), \forall j < P.
\vspace{-0.1cm}
\end{aligned}
\end{equation}

For the proof of the (PartI), we assume that $P\ge i$ and $\log_2M < l_{i} + 1$. Then, we define the bit allocation as $\{{\dot{M}}_1, {\dot{M}}_2,...,{\dot{M}}_P\}$, which should satisfy $g_i({\dot{M}}_i) \le g_P(1)$. It follows that ${\dot{M}}_i \ge \lceil \tilde l_j \rceil$. Then we obtain $\sum_{i = 1}^{P} {\dot{M}}_i \ge \sum_{i = 1}^{P -1} {\dot{M}}_i +1 \ge l_{i} + 1$, which contradicts the assumption.

To prove (PartII),  consider the $k_1$th iteration, where $g_i(M_i^{(k)}) \ge g_P(1)$ and $g_i(M_i^{(k)}+1) \le g_P(1)$ for $i<P$. Note that $M_i^{(k)} \le \tilde l_j \le M_i^{(k)}+1$  and $\lceil \tilde l_j \rceil = M_i^{(k)} +1$.
Thus, for the $(k_1 + P- n)$th iteration, where $1 \le n\le P-1$, $P= \arg \max_k {\bf g}^{(k)}$ is obtained, proving (PartII).
Further, when $\log_2M \ge l_K$, the $K$th quantizer is added, and there is no $(K+1)$th quantizer since the gradient vector satisfies $g_i(M_i) = 0$ for $i>K$. This results in \eqref{eqn:num}.
\qed

\vspace{-0.2cm}
\subsection{Proof of Lemma \ref{lem:Prob2x}}
 \label{app:proof4}
\vspace{-0.1cm}
To prove the equivalence between \eqref{problem2} and \eqref{problem2x}, we first note that that Lemma~\ref{lem:Digital}, which characterizes the \ac{mse} minimizing digital recovery filter, holds for any sampling matrix ${\bf \Psi}$, including those representing frequency-domain graph filtering.
Consequently, by substituting the resulting recovery matrix  ${\bf \Phi}= {\bf \Gamma}^*{\bf C}_{\bf x} {\bf \Psi}^T\left(  {\bf \Psi} {\bf C}_{\bf x} {\bf \Psi} ^T + {\bf G} \right)^{-1}$,  we arrive at
$\mathop {\max }\limits_{{\bf \Psi}, \{M_i\}} {\text {Tr}}\left({\bf \Gamma}^* {\bf C}_{\bf x} {\bf \Psi}^T \left( {\bf \Psi} {\bf C}_{\bf x} {\bf \Psi}^T + {\bf G}  \right)^{-1} {\bf \Psi} {\bf C}_{\bf x} {\bf \Gamma} ^{*T}  \right)$,  s.t. $\sum_{i=1}^P \log_2 M_i \le \log_2M, M_i \in \mathbb{Z}^+$.
By setting  ${\bf \Psi} = {\bf I}_\mathcal{S} {\bf U} F({\bf \Lambda}) {\bf U}^T $ to represent frequency-domain filtering, we transform the objective into  \eqref{problem2x},  proving the lemma.
\qed

\vspace{-0.2cm}
\subsection{Proof of Lemma \ref{lem:Prob5x}}
 \label{app:proof5}
\vspace{-0.1cm}

We first transform $({ {\tilde\lambda _i^2{{\bf{B}}} + {\bf{C}}} })^{-1}$ into
\vspace{-0.1cm}
\begin{align}
 ({ {\tilde\lambda _i^2{{\bf{B}}} + {\bf{C}}} })^{-1} & = {{{\bf{B}}}^{ - 1/2}} {\left( {\tilde\lambda _i^2{\bf{I}} + {{{\bf{B}}}^{ - 1/2}}{{\bf{C}}}{{{\bf{B}}}^{ - 1/2}}} \right)^{ - 1}{{{\bf{B}}}^{ - 1/2}}} \notag \\
& \overset{(a)}{=}  {{{\bf{B}}}^{ - 1/2}}{\left( \tilde\lambda _i^2{\bf{I}} + {\bf U}_x {\bf{\Lambda }}_x {\bf U}_x^T \right)^{ - 1}{{{\bf{B}}}^{ - 1/2}}} \notag \\
& = {{{\bf{B}}}^{ - 1/2}}{\bf U}_x{\left( \tilde\lambda _i^2{\bf{I}} +  {\bf{\Lambda }}_x \right)^{ - 1}{\bf U}_x^T {{{\bf{B}}}^{ - 1/2}}},
\vspace{-0.1cm}
\label{Decomp1}
\end{align}
where $(a)$ follows from the eigenvalue  decomposition, where ${\bf U}_x$ is a unitary matrix since both ${\bf{B}}$ and ${\bf{C}}$ are symmetric.
Substituting \eqref{Decomp1} and  the definition of ${\bf A}$ into \eqref{problem5x} yields
\vspace{-0.1cm}
\begin{equation}\label{problem5p1}
\begin{aligned}
\mathop {\max }\limits_{\tilde \lambda_i} {\rm{Tr}}\left( {\bf A}{\tilde {\bf{\Lambda }}}^2 F{({\bf{\Lambda }})}^2  {\bf A}^T \left( \tilde\lambda _i^2{\bf{I}} +  {\bf{\Lambda }}_x \right)^{ - 1}  \right).
\end{aligned}
\vspace{-0.1cm}
\end{equation}

Noting that $\tilde \lambda_i$ is in form of $\tilde \lambda_i^2$ in \eqref{problem5p1}, thus we define $\hat \lambda_i = \tilde \lambda_i^2$, simplifying \eqref{problem5p1} into:
$\mathop {\max }\limits_{\hat \lambda_i \geq 0} ~ \sum_{j=1}^P \frac{\sum_{n=1}^N\hat \lambda_n ({\bf A})_{j,n}^2   }{\hat \lambda_i + {({\bf{\Lambda }}_x)}_{j,j}}$.
Substituting the definition of $\{a_i\}$
yields   \eqref{ComputeLambda}. 
\qed

\ifFullVersion
\subsection{Some Extension}
 \label{app:extension}
In this part I would like to explain why we apply frequency-domain graph filter and claim it can be achieved in a distributed manner. This part is not necessary to be appeared in the final full text.

Since the frequency-domain graph filter can be expressed as
\begin{equation}
{\bf F} = {\bf U}{F(\Lambda)} {\bf U}^T,
\end{equation}
where $F(\Lambda)$ is a function for the diagonal matrix $\Lambda$. Further, when $F(\bullet)$ is a polynomial function, the filtered graph signal can be obtained in a distributed manner. However, it is difficult to obtain a specific polynomial function in the graph signal compression system. We try to obtain each element of $F(\Lambda)$ as a fixed value, and then take the approximation polynomial function to obtain the coefficients.

We assume $F(\Lambda)$ be a approximated polynomial function as follows:
\begin{equation}
{F(\Lambda)} = \beta_0 {\bf I} + \beta_1 {\bf \Lambda} + ... + \beta_L {\bf \Lambda}^N,
\end{equation}
which equally means
\begin{equation}
{\bf F} = \beta_0 {\bf I} + \beta_1 {\bf L} + ... + \beta_L {\bf L}^N.
\end{equation}

We further define each element of $F(\Lambda)$ as ${F(\lambda_i)} = \tilde \lambda_i$, and find a polynomial $p_{N} \in P_{N} $, say
\begin{equation}
p_{N}(x) = \beta_0 + \beta_1 x + ... + \beta_K x^N,
\end{equation}
such that $p_{N}(\lambda_i) = \tilde \lambda_i $ for each $i = 1,2,...,N$. This means that we require
\begin{equation}
\beta_0 + \beta_1 \lambda_i + ... + \beta_L \lambda_i^N = \tilde \lambda_i, 1 \le i \le N,
\end{equation}
giving a system of $N$ linear equations to determine the $N$ unknowns $\beta_0, \beta_1, ... , \beta_{N-1} $. These equations, which may be written in the form
\begin{equation}\label{LinearSystem}
\begin{aligned}
{\left[ {\begin{array}{*{20}{c}}
{1 }&{\lambda_1}&{\lambda_1^2}&{\cdots}&{\lambda_1^{N}}\\
{1 }&{\lambda_2}&{\lambda_2^2}&{\cdots}&{\lambda_1^{N}}\\
{\vdots }&{\vdots}&{\vdots}&{\vdots}&{\vdots}\\
{1 }&{\lambda_K}&{\lambda_K^2}&{\cdots}&{\lambda_K^{N}}
\end{array}} \right]}
{\left[ {\begin{array}{*{20}{c}}
{\beta_0 }\\
{\beta_1 }\\
{\vdots }\\
{\beta_N }
\end{array}} \right]}
=
{\left[ {\begin{array}{*{20}{c}}
{\tilde  \lambda_1}\\
{\tilde  \lambda_2 }\\
{\vdots }\\
{\tilde  \lambda_N }
\end{array}} \right]}
\end{aligned}
\end{equation}
have a unique solution if the matrix
\begin{equation}
\begin{aligned}
{\bf V}_{\lambda} = {\left[ {\begin{array}{*{20}{c}}
{1 }&{\lambda_1}&{\lambda_1^2}&{\cdots}&{\lambda_1^{N}}\\
{1 }&{\lambda_2}&{\lambda_2^2}&{\cdots}&{\lambda_1^{N}}\\
{\vdots }&{\vdots}&{\vdots}&{\vdots}&{\vdots}\\
{1 }&{\lambda_N}&{\lambda_N^2}&{\cdots}&{\lambda_K^{N}}
\end{array}} \right]}
\end{aligned}
\end{equation}
called the Vandermonde matrix, is non-singular. The determinant of ${\bf V}_{\lambda}$ is given by
\begin{equation}\label{det}
\text{det} {\bf V}_{\lambda} = \prod \limits_{i>j} (\lambda_i-\lambda_j),
\end{equation}
where the product is taken over all $i$ and $j$ such that $0 \le j < i \le N$. Since the abscissas $\lambda_1, \dots , \lambda_N$ are distinct, it is clear from \eqref{det} that
$\text{det} {\bf V}_{\lambda}$ is nonzero. Thus the Vandermonde matrix $ {\bf V}_{\lambda}$ is non-singular, and the system of linear equations \eqref{LinearSystem} has a unique solution.

After we obtain the unique solution for \eqref{LinearSystem}, which corresponds to a $N$-length polynomial, we need to cut it off and obtain a $T$-length  polynomial satisfying $T \ll N$. The distortion for omitting high power variable $\lambda_i$ would be quite small since each $\lambda_i$ is in the range of $(-1, 1)$.
  \fi

  \end{appendix}

	\bibliographystyle{IEEEtran}
	\bibliography{IEEEabrv,refs}

\end{document}